\documentclass{article}
\usepackage[utf8]{inputenc}

\usepackage{macros}
\usepackage[linesnumbered,vlined]{algorithm2e}
\RestyleAlgo{ruled}
\SetKwComment{Comment}{/* }{ */}
\mathtoolsset{showonlyrefs}
\DontPrintSemicolon

\DeclareMathOperator*\cI{\mathcal{I}}
\DeclareMathOperator*\cS{\mathcal{S}}

\DeclareMathOperator*\cR{\mathcal{R}}
\DeclareMathOperator*\pred{\mathsf{pred}}
\DeclareMathOperator*\nxt{\mathsf{next}}
\DeclareMathOperator*\imax{end}
\DeclareMathOperator*\imin{start}
\newcommand{\getP}{\mathsf{GetPType}_w}
\newcommand\isQ{\mathsf{IsQType}_w}
\newcommand\getI{\mathsf{GetI}_w}

\newcommand\round{\text{round}}
\newcommand\Trivial{\mathsf{Trivial}}
\newcommand\FullLCS{\mathsf{FullLCSAlgorithm}}
\newcommand\Covering{\mathsf{CoveringAlgorithm}}
\newcommand\dedit{\vec\Delta_{\text{edit}}}
\newcommand\EqLCS{\mathsf{EqLCS}}
\newcommand\DP{\mathsf{DP}}

\title{Approximating binary longest common subsequence in almost-linear time}
\author{Xiaoyu He \thanks{Department of Mathematics, Princeton University, Princeton, NJ 08544, USA. Email: \url{xiaoyuh@princeton.edu}. Research supported by the NSF Mathematical Sciences Postdoctoral Research Fellowships Program under Grant DMS-2103154.} \and Ray Li \thanks{Department of Computer Science, UC Berkeley, Berkeley, CA, 94709, USA. Email: \url{rayyli@berkeley.edu}. Research supported by the NSF Mathematical Sciences Postdoctoral Research Fellowships Program under Grant DMS-2203067.}}
\date{\today}

\begin{document}

\maketitle

\begin{abstract}
  The Longest Common Subsequence (LCS) is a fundamental string similarity measure, and computing the LCS of two strings is a classic algorithms question. A textbook dynamic programming algorithm gives an exact algorithm in quadratic time, and this is essentially best possible under plausible fine-grained complexity assumptions, so a natural problem is to find faster approximation algorithms. When the inputs are two binary strings, there is a simple $\frac{1}{2}$-approximation in linear time: compute the longest common all-0s or all-1s subsequence. It has been open whether a better approximation is possible even in truly subquadratic time. Rubinstein and Song showed that the answer is yes under the assumption that the two input strings have equal lengths. We settle the question, generalizing their result to unequal length strings, proving that, for any $\varepsilon>0$, there exists $\delta>0$ and a $(\frac{1}{2}+\delta)$-approximation algorithm for binary LCS that runs in $n^{1+\varepsilon}$ time. As a consequence of our result and a result of Akmal and Vassilevska-Williams, for any $\varepsilon>0$, there exists a $(\frac{1}{q}+\delta)$-approximation for LCS over $q$-ary strings in $n^{1+\varepsilon}$ time.
  
  Our techniques build on the recent work of Guruswami, He, and Li who proved new bounds for error-correcting codes tolerating deletion errors. They prove a combinatorial ``structure lemma'' for strings which classifies them according to their oscillation patterns. We prove and use an algorithmic generalization of this structure lemma, which may be of independent interest.
\end{abstract}

\section{Introduction}
  In this paper, we give improved approximation algorithms for the Longest Common Subsequence (LCS), a fundamental string similarity measure that is of theoretical and practical interest.
  The LCS of two strings, as the name suggests, is the length of the longest sequence that appears as a (not necessarily contiguous) subsequence in both strings.
  The LCS is one of the most ubiquitous ways to quantify the similarity of two strings, a task that appears in a variety of contexts from spell checkers to DNA processing.

  Computing the LCS is a classic algorithms question. 
  A textbook dynamic programming algorithm gives an exact algorithm in quadratic time $O(n^2)$, while the fastest known algorithm runs in time $O(n^2/\log^2n)$ \cite{MasekP80}.
  Whether we can improve these algorithms has been a longstanding open question (see, for example, Problem 35 of \cite{Knuth72}).
  Under fine-grained complexity assumptions such as the Strong Exponential Time Hypothesis \cite{AbboudWW14, AbboudBW15, BringmannK15} and even more plausible hypotheses \cite{AbboudHWW16}, there is no exact algorithm for LCS in time $O(n^{2-\varepsilon})$ with $\varepsilon>0$.
  Because of these barriers for exact algorithms, it is natural to wonder whether there are faster approximation algorithms.

  When the inputs are two binary strings, the simple algorithm that computes the longest all-0s or all-1s common subsequence gives a $\frac{1}{2}$-approximation in linear time.
  Despite its simplicity, this has been the best known approximation for binary LCS on arbitrary inputs, even in truly subquadratic time ($n^{2-\varepsilon}$ for an absolute $\varepsilon>0$).
  This raises the following natural question.
  \begin{question}
    Do there exist $\delta,\eps >0 $ and a $(\frac{1}{2}+\delta)$-approximation algorithm of the LCS of two binary strings of length at most $n$ in time $O(n^{2-\varepsilon})$?
    \label{q:main}
  \end{question}
  Towards Question \ref{q:main}, Rubinstein and Song \cite{RubinsteinS20} showed that, if we assume the input strings have the same length, for all $\varepsilon>0$, there is a $(\frac{1}{2}+\delta)$-approximation of the LCS in time $O(n^{1+\varepsilon})$ ($\delta$ depends on $\varepsilon$).
  However, for the general setting of unequal length inputs remained open.

  Our main result answers Question \ref{q:main} in full, handling unequal length strings.
\newcommand\thmmain{
  For all $\eps>0$, there exists an absolute constant $\delta=\delta(\eps)>0$ and a deterministic algorithm that, given two binary strings $x$ and $y$ of not-necessarily-equal length, outputs a $(\frac{1}{2}+\delta)$-approximation of the longest common subsequence in time $O(n^{1+\eps})$ where $n=\max(|x|,|y|)$.}
\begin{theorem}
  \thmmain\footnote{Our runtime is actually $O(n\cdot \min(|x|,|y|)^{\varepsilon})$, which is slightly better in the case $y$ is much longer than $x$, but we state it as is for simplicity.}
\label{thm:main}
\end{theorem}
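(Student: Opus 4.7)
The plan is to generalize the Rubinstein--Song equal-length algorithm to the unequal-length setting by combining a block-decomposition dynamic program with the algorithmic structure lemma advertised in the abstract.

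Assume $|y|=m\le n=|x|$. First I would dispense with the easy case: since $\mathrm{LCS}(x,y)\le m$, if $\mathrm{LCS}(x,y)\le(\tfrac{1}{2}+2\delta)m$, then the trivial algorithm returning $\max(\min(\#_0(x),\#_0(y)),\min(\#_1(x),\#_1(y)))$ is already a $(\tfrac{1}{2}+\delta)$-approximation. So I may assume $\mathrm{LCS}(x,y)$ is within a constant factor of $m$.

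Next I would partition $x$ into $k=\lceil n/m\rceil$ blocks $B_1,\ldots,B_k$ of length $\Theta(m)$. The optimal LCS decomposes as $\sum_{i=1}^k \mathrm{LCS}(B_i,y[a_i{:}b_i])$ for some consecutive intervals partitioning $y$. A natural dynamic program with state ``(block index, prefix of $y$ consumed)'' combines approximate per-block LCS values into a global approximation, provided we can evaluate the blockwise LCS subproblems efficiently. Two obstacles prevent a naive implementation: (i) the number of intervals $(a,b)\subseteq[1,m]$ is $\Theta(m^2)$, giving $\Theta(nm)$ subproblems in total; and (ii) for many $(a,b)$ the lengths $|B_i|$ and $|y[a{:}b]|$ differ substantially, so Rubinstein--Song cannot be invoked as a black box.

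To circumvent both, I would apply the algorithmic structure lemma on each block $B_i$, producing a compact summary via a constant number of oscillation/boundary parameters (a P-type versus Q-type classification, in the terminology of Guruswami--He--Li). The key properties I would want are: (a) the summary determines an approximation of $\mathrm{LCS}(B_i,y[a{:}b])$ up to small additive error for every substring $y[a{:}b]$, via a precomputed lookup table of size depending only on $\varepsilon$; and (b) the summary is computable in time $m^{1+\varepsilon}$. Given such summaries, the DP reduces to combining $k$ constant-size tables, and the total runtime is dominated by computing the summaries, giving $O(n\cdot m^{\varepsilon})$ as claimed.

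The main obstacle, which I expect to be the technical heart of the paper, is establishing property (a): proving that the oscillation-based classification not only yields a $(\tfrac{1}{2}+\delta)$-approximation for $\mathrm{LCS}$ of two equal-length strings (as in Rubinstein--Song) but is \emph{substring-stable}, in the sense that a single summary of $B_i$ approximates LCS against arbitrary substrings of $y$. The delicate point is that a block $B_i$ may be paired, in the optimal alignment, with a substring of $y$ much shorter than $m$; here a purely multiplicative $(\tfrac{1}{2}+\delta)$-guarantee is useless, and an additively accurate bound is needed so that summing errors over the $k$ blocks does not destroy the global approximation. I expect the algorithmic generalization of the Guruswami--He--Li structure lemma is engineered precisely to deliver such substring-stable summaries, after which the remaining work is a largely mechanical DP that glues the local approximations together.
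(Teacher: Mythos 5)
Your decomposition is inverted relative to the paper's, and this is not just a naming convention --- it breaks the error analysis. You partition the \emph{longer} string $x$ ($|x|=n$) into $k\approx n/m$ blocks of length $\Theta(m)$, each paired with a consecutive piece $y[a_i{:}b_i]$ of the shorter string. Since $\sum_i(b_i-a_i)=m$ while there are $n/m$ blocks, the average per-block contribution to the LCS is only $\Theta(m^2/n)$. Any constant-resolution discretization of the endpoints $a_i,b_i$ (which a ``lookup table of size depending only on $\varepsilon$'' requires) shifts each endpoint by up to $\gamma m$, costing $O(\gamma m)$ per block and $\Theta(\gamma n)$ in total; this dwarfs $\LCS(x,y)\le m$ unless $\gamma\ll m/n$, at which point the table is no longer constant-size. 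The paper instead partitions the \emph{shorter} string into only $\approx\log|x|$ intervals of length $w\approx|x|/\log|x|$, so per-interval contributions are $\Theta(w)$ and a constant-factor discretization ($\theta w$-alignment) loses only $O(\theta|x|)$ overall --- a small constant fraction of $\LCS(x,y)$. Which side to partition coarsely is the essential design decision, and you made the wrong choice.

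Relatedly, the ``substring-stable summary'' you ask of the structure lemma --- a constant-size object approximating $\LCS(B_i,y[a{:}b])$ additively for \emph{every} substring $[a{:}b]$ --- is a much stronger primitive than what Lemma~\ref{lem:struct} provides, and at the additive accuracy you would need (roughly $m^2/n$, by the counting above) such a constant-size summary cannot exist. The actual lemma is asymmetric: it classifies each length-$w$ interval of the \emph{shorter} string into one of $O(\log w)$ oscillation types $P_t$, and for each $P_t$ gives a \emph{hereditary} property $Q_t$ such that any $y_J$ satisfying $Q_t$ admits $\LCS(x_{I'},y_J)\ge\frac{|I'|}{2}+\delta w$ for a precomputable subinterval $I'\subset I$; it does not yield additive LCS estimates against arbitrary substrings. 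The hereditariness of $Q_t$ is precisely what permits binary-searching for the minimal certifiable $J$ at each right endpoint, keeping the number of certified rectangles to $\tilde O(|y|/|x|)$ and resolving your obstacle (i), a step your proposal leaves open. (A smaller issue: your opening reduction is false as stated --- for $x=(01)^{3}1^4$, $y=(01)^{3}0^4$, $m=n=10$, we have $\LCS(x,y)=6\le(\frac{1}{2}+2\delta)m$ with $\delta=0.1$ yet $\Trivial(x,y)=3$, exactly a $\frac{1}{2}$-approximation. The paper reduces first to the balanced case $0(x)=1(x)\le\min(0(y),1(y))$, where $\Trivial(x,y)=|x|/2$ is guaranteed, and may thereafter assume the much stronger $\LCS(x,y)\ge(1-\delta)|x|$, which is what actually drives the structure-lemma analysis.)
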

We note that our algorithm uses the equal-length LCS algorithm of \cite{RubinsteinS20} as a black box, so any improvements in the equal-length setting automatically yield improvements in the unequal-length setting.
In general, if there is an equal-length LCS algorithm running in time $T(n)$ giving a $(\frac{1}{2}+\delta)$-approximation, our algorithm gives a $O((n+T(n))\log^A n)$ time $(\frac{1}{2}+\delta^A)$-approximation on unequal length strings, for an absolute constant $A$.
Furthermore, while we present our algorithm as outputting the length of the longest common subsequence, we can output the subsequence of the promised length if the black-boxed equal-length LCS algorithm can.

  Our work gets around a technical barrier for unequal length strings, which was highlighted in \cite{AkmalV21}.
  The algorithms of \cite{RubinsteinS20} used the intimate connection between LCS and Edit Distance, the number of insertions, deletions, and substitutions needed to transform one string to another. 
  If we ignore substitutions, Edit Distance and LCS are in fact equivalent to compute exactly.
  Similar to LCS, there is no exact algorithm for Edit Distance in $n^{2-\varepsilon}$ time with $\varepsilon>0$, under plausible fine-grained complexity assumptions \cite{BackursI15, BringmannK15, AbboudHWW16}.
  Approximation algorithms for Edit Distance are well-studied, and a recent line of work \cite{BarYossefJKK04,AndoniKO10,AndoniO12,CDGKS20, GoldenbergRS20, Andoni18, BrakensiekR20, KouckyS20, AndoniN20} culminated in a constant-factor approximation of Edit Distance in almost-linear time \cite{AndoniN20}.
  Rubinstein and Song used these approximation algorithms for Edit Distance to obtain their approximation for LCS.
  However, because they rely on Edit Distance algorithms, they crucially use that the strings are equal length: note that if one string has length $n$ and the other string has length $100n$, their edit distance is always at least $99n$, so even computing a 3-approximation of edit-distance of the two input strings would be unhelpful for approximating LCS.

Our work gets around this problem by using different techniques to handle unequal length strings.
Our techniques are adapted from a recent work \cite{GuruswamiHL22} that proves lower bounds for error-correcting codes correcting deletions \cite{Levenshtein66,Ullman67} via the following combinatorial result about LCS.
\begin{theorem}[\cite{GuruswamiHL22}, deletion code limitation]
  There exists an absolute constants $A,\delta>0$ such that for any set $C$ of binary strings of length $n$ with $|C|\ge 2^{\log^A n}$, there exist two strings $x$ and $y$ with  $\LCS(x,y)\ge (\frac{1}{2}+\delta)n$.
\label{thm:ghl}
\end{theorem}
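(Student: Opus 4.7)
The plan is to associate each string in $C$ with a short combinatorial signature that captures its multi-scale oscillation pattern, use pigeonhole to find two strings sharing a signature, and then deduce a long common subsequence between them from the matching signature.

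Concretely, I would first fix a geometric sequence of scales $s_1 < s_2 < \cdots < s_L$ with $L = O(\log n)$, and at each scale $s_i$ classify every string $x \in C$ into one of two regimes. One regime, say P-type at scale $s_i$, consists of strings well-approximated at that scale by a simple block structure: those whose run-length encoding, after coarsening to resolution $s_i$, reduces to roughly $n/s_i$ blocks of concentrated lengths. The other regime, Q-type at scale $s_i$, consists of strings that oscillate irregularly at scale $s_i$. Together with the coarsened block pattern in the P-type case, this yields a signature recordable in $\log^{O(1)}(n)$ bits. By taking the constant $A$ large enough relative to this polylogarithmic bound, the hypothesis $|C| \ge 2^{\log^A n}$ forces two distinct strings $x, y \in C$ to share the same signature.

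The main work is then to deduce $\LCS(x, y) \ge (\tfrac{1}{2} + \delta)n$ from the matching signature. I would split on the coarsest scale $s^*$ at which both strings are P-type (taking $s^* = n$ if neither is ever P-type). In the P-type case, $x$ and $y$ share the same coarse block structure at scale $s^*$ up to small perturbations, and I would align their corresponding maximal runs block-by-block to produce a common subsequence of length close to $n$, comfortably above $\tfrac{1}{2} n$. In the fully Q-type case, I would use the quasirandomness of both strings at every scale: a string that oscillates densely at every scale can be shown to contain, as a subsequence, a constant fraction of the majority symbol of the other string \emph{interleaved} with a nonzero fraction of the minority symbol, beating the trivial all-zeros/all-ones bound by a constant.

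The step I expect to be the main obstacle is making the Q-type analysis quantitative. The P-type case is essentially a clean alignment of coarsened block profiles, but arguing that two simultaneously Q-type strings must admit a common subsequence beating $n/2$ requires ruling out adversarial, mutually incompatible oscillation patterns. I anticipate that the structure lemma pulls its weight precisely here: by classifying at $L = O(\log n)$ many scales simultaneously, one forces a hierarchy of compatibility conditions between the oscillation patterns of $x$ and $y$, so that in the Q-type-at-every-scale regime the strings look sufficiently similar to extract a common subsequence of length $(\tfrac{1}{2} + \delta)n$ by a recursive alignment that operates one scale at a time.
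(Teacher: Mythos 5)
Your high-level strategy --- assign each string a polylogarithmic-size multi-scale signature, pigeonhole on the signature to find two strings that agree, then extract a long common subsequence from the matching signature --- is exactly the shape of the argument in \cite{GuruswamiHL22} and in the paper's sketch. The gaps are in the middle two steps.

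The central gap is in your choice of dichotomy. You classify each scale as P-type (well-approximated by a block pattern) versus Q-type (``oscillates irregularly''), and you correctly flag the Q-type case as the hard one. But ``not periodic at any scale'' is a purely negative condition, and you cannot extract an LCS advantage from a negation; that is precisely why your quasirandomness heuristic does not close the gap. The structure lemma in \cite{GuruswamiHL22} (stated here as Lemma~\ref{lem:comb-struct}) gives a different dichotomy with a \emph{positive} second horn: every string is either \emph{coarse} (it contains a long $\eps^2$-imbalanced interval) or \emph{fine at some scale $\ell$}, meaning it has at least $\eps w$ many $\ell^+$-flags \emph{and} it contains $(0^\ell 1^\ell)^{\eps w/\ell}$ as a subsequence. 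That positive structure is exactly what drives the LCS advantage (Lemma~\ref{lem:long-2}): if $x$ is $\ell$-fine you use the flags to build a subsequence $x''$ of $x$ whose runs of zeros all have length $\geq \ell$, so $x''$ embeds into $(0^\ell 1^\ell)^k$ and hence into any $y$ of the same type; and each flag at scale $\ell'$ contributes an excess of order $\ell'$ zeros, so $x''$ has density strictly more than one half of the interval it came from. Without the flag machinery, ``both strings oscillate densely at every scale'' is compatible with mutually incompatible oscillation patterns, and the $(\frac12 + \delta)$ bound does not follow.

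Your P-type case is also not how the coarse side actually works. You propose to record the coarsened block pattern in the signature and then align blocks to get LCS close to $n$, but recording a block pattern to the precision needed for near-perfect alignment is not obviously a polylog-bit signature (a string can have $\Theta(n)$ runs). The paper's coarse case needs to record only the scale $\ell$ and the direction of the imbalance: $x$'s imbalanced interval $I$ has $\geq (\frac12 + \eps^2)\ell$ copies of the majority bit $b$, and $y$ (same type) has at least $(\frac{1+\eps^2}{2})\ell$ copies of $b$, so matching $b$-bits alone already gives $\LCS(x_I,y) \geq \frac{|I|}{2} + \Omega(\eps^4 w)$. No block alignment is needed.

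Finally, your signature is computed on the whole string, but the actual argument first partitions each string into $\polylog n$ substrings of length $w$ and classifies each substring; the structure lemma only yields an advantage on a subinterval of length $\Omega(w)$ inside one substring, so you need many substrings (each contributing its own advantage and a trivial $\frac12$-matching on the remainder) to lift the local advantage to a global $(\frac12+\delta)n$ bound.
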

Intuitively, we may expect the techniques for Theorem~\ref{thm:ghl} to be useful here because it shares similarities with our main result, Theorem~\ref{thm:main}. While Theorem~\ref{thm:ghl} is a ``negative result'' for deletion codes, it is a ``positive result'' in the algorithmic sense, as it shows that among any small set of strings, two of them have a long common subsequence. Furthermore, it has a similar flavor as Question~\ref{q:main}, as both consider ``beating the trivial matching'' for LCS in binary strings. Thus, one might suspect then that these two problems are related, and we show indeed they are.
On the other hand, adapting the techniques from \cite{GuruswamiHL22} to our setting is nontrivial as we need to (i) make the combinatorial techniques algorithmic and (ii) handle unequal length strings (note in Theorem~\ref{thm:ghl} all strings are of the same length).

Computing LCS is also interesting over larger alphabets.
Approximating LCS when there is no restriction on the alphabet has been well studied \cite{HajiaghayiSSS19, RubinsteinSSS19, BringmannD21, AndoniNSS22, Nosatzki21}, and currently the best result \cite{AndoniNSS22, Nosatzki21} gives a randomized $\frac{1}{n^{o(1)}}$-approximation in linear time.
Over an alphabet of a given size $q>2$, there is, similar to the binary case, a trivial linear time $\frac{1}{q}$-approximation obtained by taking the longest common constant subsequence.
For fixed $q$, over general $q$-ary inputs, this was the best known approximation, even in subquadratic time. 
For $q$-ary inputs where the two strings have the same length, Akmal and Vassilevska-Williams \cite{AkmalV21} (see also \cite{Akmal21}) generalized the result of Rubinstein and Song, showing for all $\varepsilon>0$ there is a $(\frac{1}{q}+\delta)$-approximation in $n^{1+\varepsilon}$ time.

By the work of Akmal and Vassilevska-Williams \cite{AkmalV21}, our main result immediately implies improved approximation algorithms over non-binary alphabets, for the general case of not-necessarily-equal length strings.
Akmal and Vassilevska-Williams showed that if there is a $(\frac{1}{2}+\delta)$-approximation for binary LCS (which we show), there is a $(\frac{1}{q}+\delta')$-approximation for $q$-ary LCS in essentially the same runtime.
Hence, we have the following corollary.
\begin{corollary}[Follows from Theorem \ref{thm:main} and Theorem 1 of \cite{AkmalV21}]
  For all $\eps>0$ and integers $q\ge 2$, there exists an absolute constant $\delta=\delta(\eps)>0$ and a deterministic algorithm that, given two $q$-ary strings $x$ and $y$ of not-necessarily-equal length, outputs a $(\frac{1}{q}+\delta)$-approximation of the longest common subsequence in time $O(n^{1+\eps})$ where $n=\max(|x|,|y|)$.
\end{corollary}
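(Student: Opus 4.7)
The plan is to apply the reduction of Akmal and Vassilevska-Williams \cite{AkmalV21} as a black box, feeding it our Theorem~\ref{thm:main} in place of the equal-length binary LCS algorithm of \cite{RubinsteinS20} that their paper relies on. Their Theorem~1 is a generic lifting result: it shows that any $(\frac{1}{2}+\delta)$-approximation algorithm for binary LCS running in time $T(n)$ can be converted into a $(\frac{1}{q}+\delta')$-approximation algorithm for $q$-ary LCS running in essentially the same time, where $\delta' = \delta'(\delta,q) > 0$. Composing this with Theorem~\ref{thm:main}, which provides the required binary subroutine in time $O(n^{1+\eps})$, immediately yields an algorithm with the desired approximation ratio and runtime.

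Concretely, given $q$-ary strings $x,y$ of lengths at most $n$, I would first invoke the AV21 reduction to produce one (or a small number of) binary LCS subproblems via a suitable alphabet collapsing/encoding of the $q$-ary alphabet to $\{0,1\}$. Then I would call Theorem~\ref{thm:main} on each resulting binary instance to obtain a $(\frac{1}{2}+\delta)$-approximation. Finally, I would apply the AV21 correctness analysis to translate these binary approximations back into a $(\frac{1}{q}+\delta')$-approximation for the original $q$-ary instance.

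The only step that needs care is verifying that the AV21 reduction does not secretly require the two input strings to have equal length. However, their reduction operates purely at the level of the alphabet, mapping each input symbol to a constant-length binary block (or applying a deterministic projection to $\{0,1\}$) before calling the binary LCS subroutine. This preprocessing preserves any length relationship between $x$ and $y$ up to constant factors, so the reduction transports our unequal-length binary guarantee to an unequal-length $q$-ary guarantee without modification. Hence the main obstacle, if any, is purely bookkeeping: checking that the AV21 analysis goes through verbatim when the two strings have different lengths, which it does because their combinatorial argument never uses $|x|=|y|$. With that observation, the corollary is immediate from the composition, with $\delta(\eps,q)$ obtained by applying AV21's conversion $\delta \mapsto \delta'$ to the $\delta$ supplied by Theorem~\ref{thm:main}.
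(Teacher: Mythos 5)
Your proposal matches the paper's own treatment: the paper simply cites Theorem~1 of \cite{AkmalV21} as a black-box reduction from $q$-ary to binary LCS and plugs in Theorem~\ref{thm:main} as the binary subroutine, exactly as you do. Your added speculation about the internal mechanics of the AV21 reduction (e.g., per-symbol binary encoding) is not something the paper verifies or relies on, and need not be exactly right; what matters is the composition of the two black boxes, which you have correctly identified.
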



\section{Preliminaries}

For clarity of presentation, we sometimes drop floors and ceilings where they are not crucial.

\paragraph{Strings}
For a string $x$, a \emph{subsequence} of $x$ is any string obtained by deleting any number of bits of $x$. A \emph{substring} is a subsequence that appears as consecutive bits of $x$.
Let $0(x)$ and $1(x)$ denote the number of zeros and ones, respectively, in $x$.
A \emph{property} $P$ of binary strings is a set of binary strings. We say a string $x$ \emph{satisfies/has} a property $P$ if $x$ is in the set $P$.

\paragraph{Intervals}
We use interval notation similar to that of \cite{CDGKS20}.
By convention, an interval $I=[a,b]$ denotes the set $\{a+1,a+2,\dots,b\}$, and we write $\imin I=a$ and $\imax I = b$.
Note that $I$ and $I'$ are disjoint if and only if either $\imax I'\le \imin I$ or $\imax I\le \imin I'$.
The \emph{length} of an interval $I=[a,b]$ is $b-a$.
For a string $x$, let $x_I$ denote the contiguous substring $x_{a+1}x_{a+2}\cdots x_b$.
By abuse of notation, when the string $x$ is understood, we may use $I$ to refer to the substring $x_I$.
For an integer $w$, we say interval $I$ is \emph{$w$-aligned} if $\imin I$ and $\imax I$ are multiples of $w$.
An interval is a \emph{$w$-interval} if it has length $w$ and is $w$-aligned.
Let $\round_w(I)$ denote the largest $w$-aligned subinterval of $I$. Note we always have $|\round_w(I)| > |I|-2w$.

For an interval $I$ and integer $w$, let $\cI_{w}(I)$ be the collection of $w$-intervals that are subintervals of $I$.
When a string $x$ is understood (as it always will be), we write $\cI_{w}\defeq \cI_{w}(|x|)$.
Note that if $|x|$ is a multiple of $w$, the intervals of $\cI_w$ partition $[m]$.

A \emph{rectangle} is a product $I\times J$ where $I$ and $J$ are intervals.
A \emph{square} is a rectangle $I\times J$ with $|I|=|J|$.
A \emph{certified rectangle} is a pair $(I\times J, \kappa)$ where $\kappa$ is a positive number.

Define a partial ordering on intervals, where $I < I'$ iff $\imax I\le \imin I'$. That is, every element of $I$ is less than every element of $I'$. Note that if two intervals have nonempty intersection, they are incomparable.
We also define a partial ordering on rectangles, where $I\times J< I'\times J'$ iff $I<I'$ and $J<J'$.
We say a collection of (certified) rectangles is \emph{ordered} if any two (certified) rectangles are comparable under this partial order.

For any two strings $x$ and $y$, fix a canonical matching $\tau=\tau(x,y)$ between the bits of $x$ and $y$ that achieves the longest common subsequence ($\tau$ is not necessarily unique, but we can fix it to be, say, the lexicographically earliest one).
For $I\subset[|x|]$, let $J^\tau_I$ denote the (unique) smallest interval such that the bits of $x_I$ are only matched with bits in $y_{J^\tau_I}$ in the matching $\tau$.
Note that clearly if $I'$ and $I$ are disjoint, then $J^\tau_I$ and $J^\tau_{I'}$ are disjoint.

For any string $x$, we write $d(x)$ for the density of $x$, i.e. the ratio between the number of ones in $x$ and the length of $x$. For $\gamma>0$, we say an interval $I$ is \emph{$\gamma$-balanced in $x$} if $d(x_I) \in [\frac{1}{2}\pm \gamma]$, and we say $I$ is \emph{$\gamma$-imbalanced in $x$} otherwise.
If $x$ is understood (as it always will be), we simply say $\gamma$-balanced and $\gamma$-imbalanced.
A useful property of balanced strings $x$ is that we can find LCS close to $|x|/2$ with any other string of the same length.
\begin{lemma}
    If $x$ and $y$ are strings such that $x$ is $\gamma$-balanced and $|x|=|y|$, then $\LCS(x,y)\ge (\frac{1}{2}-\gamma)|x|$.
    \label{lem:balance}
\end{lemma}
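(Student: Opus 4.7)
The plan is to prove this by the standard trivial matching argument, exploiting that in the binary setting one of the two constant subsequences must be long on both sides. The key observation is that $x$ being $\gamma$-balanced forces both $0(x)$ and $1(x)$ to be at least $(\frac{1}{2}-\gamma)|x|$, while the equal length condition $|y|=|x|$ forces $\max(0(y),1(y)) \ge |x|/2$ by pigeonhole.

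First I would assume without loss of generality (by symmetry, swapping the roles of $0$ and $1$) that $1(y) \ge |y|/2$. Second, I would use the $\gamma$-balanced hypothesis on $x$ to conclude $1(x) \ge (\frac{1}{2}-\gamma)|x|$. Third, I would observe that the all-ones string of length $\min(1(x), 1(y))$ is a common subsequence of $x$ and $y$, giving
\[
\LCS(x,y) \;\ge\; \min(1(x), 1(y)) \;\ge\; \min\bigl((\tfrac{1}{2}-\gamma)|x|, \tfrac{1}{2}|x|\bigr) \;=\; (\tfrac{1}{2}-\gamma)|x|,
\]
which is the desired bound.

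There is no real obstacle here: the lemma is essentially the binary pigeonhole principle combined with the fact that $\LCS(a^k, a^\ell) = \min(k,\ell)$ for a constant symbol $a$. The only mild care needed is to pick the correct symbol (the one that is majority in $y$) rather than blindly matching the majority symbol in $x$, since balancedness only gives lower bounds and not a majority direction.
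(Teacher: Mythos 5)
Your proof is correct and takes essentially the same approach as the paper: pick the majority symbol of $y$ (WLOG ones), use $|x|=|y|$ to get $1(y)\ge |x|/2$, use $\gamma$-balancedness to get $1(x)\ge(\tfrac12-\gamma)|x|$, and conclude via the all-ones common subsequence. Nothing to add.
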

\begin{proof}
    Suppose without loss of generality $y$ has at least $|y|/2$ ones. Then $y$ has at least $|x|/2$ ones. Since $x$ is $\gamma$-balanced, then $x$ has at least $(\frac{1}{2}-\gamma)|x|$ ones, so the LCS is at least $(\frac{1}{2}-\gamma)|x|$.
\end{proof}

\paragraph{Algorithms}
Let $\Trivial(x,y)$ denote the output of the simple algorithm that outputs the longest all-0s or all-1s subsequence. Clearly $\Trivial(x,y) = \max(\min(0(x),0(y)),\min(1(x),1(y))$.

Rubinstein and Song showed that one can obtain a $(\frac{1}{2}+\delta)$-approximation of equal length LCS. Their result immediately extends to a $(\frac{1}{2}+\delta')$-approximation for near-equal length LCS, which we use.
\begin{theorem}[Follows immediately from \cite{RubinsteinS20}]
  For any $\eps>0$, there exists a $\delta_{eq}=\delta_{eq}(\eps)>0$ and a $(\frac{1}{2}+\delta_{eq})$-approximation of the LCS of two binary strings $x$ and $y$ with $|x|,|y|\in[(1-\delta_{eq})n,(1+\delta_{eq})n]$ in time $O(n^{1+\eps})$.
\label{thm:rs20}
\end{theorem}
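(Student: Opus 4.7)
The plan is to reduce near-equal-length LCS to the equal-length Rubinstein--Song algorithm via a simple padding argument. Let $\delta' = \delta'(\varepsilon)$ denote the approximation parameter guaranteed by the equal-length Rubinstein--Song algorithm. Assume without loss of generality $|x| \le |y|$, and set $k := |y| - |x| \le 2\delta_{eq} n$. Define $x' := x \cdot 0^k$, a string of length exactly $|y|$, and invoke the equal-length Rubinstein--Song algorithm on $(x', y)$ to obtain a value $\hat L'$ satisfying $(\tfrac{1}{2}+\delta') \LCS(x',y) \le \hat L' \le \LCS(x',y)$ in time $O(n^{1+\varepsilon})$.

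The key observation is that this padding perturbs the LCS by at most $k$. On one hand, $\LCS(x',y) \ge \LCS(x,y)$ because $x$ is a prefix of $x'$. On the other hand, any common subsequence of $x'$ and $y$ decomposes into a piece contributed by the prefix $x$ (of length at most $\LCS(x,y)$) and a piece contributed by the appended $0^k$ (of length at most $k$), giving $\LCS(x',y) \le \LCS(x,y) + k$. Consequently, outputting $\hat L := \hat L' - k$ yields $\hat L \le \LCS(x,y)$ together with $\hat L \ge (\tfrac{1}{2}+\delta')\LCS(x,y) - k$.

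To convert this into the desired multiplicative $(\tfrac{1}{2}+\delta_{eq})$-approximation, the regime of interest is $\LCS(x,y) = \Omega(n)$ (in the degenerate regime $\LCS = o(n)$ the target is itself tiny and can be handled by also taking the maximum with $\Trivial(x,y)$). In this regime $k \le 2\delta_{eq} n$ is $O(\delta_{eq})\cdot \LCS(x,y)$, so picking $\delta_{eq}$ a sufficiently small constant fraction of $\delta'(\varepsilon)$ (e.g.\ $\delta_{eq} \le \delta'/100$) makes $\hat L \ge (\tfrac{1}{2}+\delta_{eq})\LCS(x,y)$. The runtime is dominated by the single Rubinstein--Song call, namely $O(n^{1+\varepsilon})$. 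The only real subtlety is this parameter balancing --- absorbing the additive padding error $k$ into the multiplicative RS20 gain $\delta'$ --- which forces $\delta_{eq}$ to be chosen as a sufficiently small constant fraction of $\delta'(\varepsilon)$.
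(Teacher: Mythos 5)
Your padding argument is the natural black-box reduction, and it does establish the claim whenever $\LCS(x,y)$ is a large enough constant fraction of $n$; this is the only regime the paper actually needs, since the one place it invokes Theorem~\ref{thm:rs20} is Lemma~\ref{lem:not-nice-3}, where $\EqLCS$ is applied only to rectangles with $\LCS(I_i,J_i)\ge(1-\sqrt\delta-2\theta)w$. But your handling of the complementary regime is incorrect: taking the maximum with $\Trivial(x,y)$ only buys a $\frac12$-approximation, because $\LCS(x,y)\le\min(0(x),0(y))+\min(1(x),1(y))\le 2\,\Trivial(x,y)$ with equality possible, so it cannot supply the strict $(\frac12+\delta_{eq})$ factor. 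The degenerate regime is also not ``$o(n)$'' --- it is $\LCS(x,y)\lesssim k/(\delta'-\delta_{eq})$, which is a constant fraction of $n$.

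Concretely, with $k=|y|-|x|$, take $x=(01)^a0^{\,n-2a-k}$ of length $n-k$ and $y=(10)^a1^{\,n-2a}$ of length $n$, for any $0<a<k$. Then $\LCS(x,y)=2a$ (achieved by $(01)^a$), $\Trivial(x,y)=a$, and after padding $x'=(01)^a0^{\,n-2a}$ one still has $\LCS(x',y)=2a$. The RS20 guarantee only promises $\hat L'\ge(1+2\delta')a$; since $a<k$ forces $\hat L'-k\le 2a-k<a$, the output $\max(\hat L'-k,\Trivial(x,y))$ is exactly $a=\frac12\LCS(x,y)$, strictly less than $(\frac12+\delta_{eq})\LCS(x,y)$ for every $\delta_{eq}>0$. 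Since $2a$ can be chosen anywhere up to $2k\le 4\delta_{eq}n$, this is a genuine, $\Theta(n)$-sized failure regime, not a corner case. The paper itself gives no proof, calling the statement ``immediate'' from \cite{RubinsteinS20} --- which presumably refers to re-inspecting the RS20 analysis (via edit distance it tolerates a $\pm\delta_{eq}n$ length slack), rather than to a black-box padding reduction. Your argument would be a complete proof of a slightly weaker statement that adds a hypothesis like $\LCS(x,y)\ge c|x|$ for a fixed constant $c$ (which holds where the paper uses the theorem); as a proof of the statement as written, the small-LCS case is a real gap.
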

Let $\EqLCS(x,y)$ denote the output of the algorithm from Theorem~\ref{thm:rs20}.

\section{Proof sketch}

In this section we give a high-level overview of our almost-linear time LCS approximation algorithm, Theorem~\ref{thm:main}. We start by describing the novel ingredient, our algorithmic structure lemma, Lemma~\ref{lem:struct}. It states, roughly speaking, that binary strings $s$ of length $w$ can be classified among one of $O(\log w)$ oscillation types or scales, such that for any two strings $s, t$ with the same type, there is a long subinterval $s_I$ in $s$ with $\LCS(s_I, t) > (1/2 + \delta) |s_I|$. Moreover, the lemma is algorithmic in that both the type of $s$ and the long subinterval $s_I$ are computable from $s$ in time nearly linear in $w$.

To formally define oscillation types, we first introduce the notion of a flag. In a string $x$, an \emph{$\ell$-flag} is an index $i$ such that between the $i$th one and the $(i+\ell)$-th one, there are strictly more than $10(\ell-1)$ zeros. In other words, an $\ell$-flag is a one-bit in $s$ that is immediately followed by a $0$-dense interval of length on the order of $\ell$. The existence of many $\ell$-flags in $x$ means that $x$ ``oscillates at scale $\ell$.''
An \emph{$\ell^+$-flag} is an index $i$ that is a $t$-flag for some $t\ge \ell$, where $t$ must be a power of two. The oscillation types guaranteed by Lemma~\ref{lem:struct} are as follows.

\begin{definition}[Definition~\ref{def:type} below]
  Let $\ell$ be a power of two, $\ell\in [1,w]$, and $x \in \{0,1\}^w$.
  \label{def:type-sketch}
  \begin{enumerate}
  \item We say that $x$ is \textit{$\ell$-coarse} if $\ell \ge \eps^2 w$ and there is a $\eps^2$-imbalanced interval $I$ in $x$ of length $\ell$. We say $x$ is \textit{coarse} if it is $\ell$-coarse for some $\ell \ge \eps^2 w$.

  \item We say that $x$ is \textit{$\ell$-fine} if it is not coarse, $\ell < \eps^2 w$, the number of $\ell^+$-flags in $x$ is at least $\eps w$, and $x$ contains $(0^{\ell}1^{\ell})^{\eps w/\ell}$ as a subsequence. We say $x$ is \textit{fine} if it is $\ell$-fine for some $\ell < \eps^2 w$.
  \end{enumerate}
\end{definition}

\noindent To a first approximation, this means that every string $x$ either has a long imbalanced subinterval or else behaves like the periodic string $(0^{\ell}1^{\ell})^{w/2\ell}$ for some $\ell$. 

Now we return to summarizing the proof of Theorem~\ref{thm:main}. By prior results \cite{RubinsteinS20,AkmalV21} (see also \cite{Akmal21}), it suffices to consider the ``perfectly balanced case,'' where the shorter string $x$ has an equal number of zeros and ones.
\newcommand\thmmaintech{
  For all $\varepsilon>0$, there exists an absolute constant $\delta=\delta(\varepsilon)>0$ and a deterministic algorithm that, on input strings $x$ and $y$ with $0(x)=1(x)\le \min(0(y),1(y))$, gives a $(\frac{1}{2}+\delta)$-approximation of the longest common subsequence in time $O(n^{1+\varepsilon})$.
}
\begin{theorem}
  \thmmaintech
\label{thm:main-2-sketch}
\end{theorem}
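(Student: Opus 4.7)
The plan is to apply the algorithmic structure lemma (Lemma~\ref{lem:struct}) to small windows of $x$, use it to certify window-level LCS improvements against candidate windows of $y$, and stitch these improvements together via dynamic programming. A preliminary reduction handles the case when $\LCS(x,y)$ is moderate: if $\LCS(x,y)\le(1-2\delta)|x|$, then $\Trivial(x,y)=|x|/2$ is already a $(1/2+\delta)$-approximation, since $\LCS(x,y)\le|x|$ and $x$ is perfectly balanced. Hence we may assume $\LCS(x,y)\ge(1-\delta_0)|x|$ for a small constant $\delta_0>0$, so that any optimal matching $\tau$ leaves at most $\delta_0|x|$ bits of $x$ unmatched.

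Set $w=|x|^{\varepsilon/C}$ for a large constant $C=C(\varepsilon)$ and partition $x$ into consecutive $w$-intervals $I_1,\ldots,I_k$. Use Lemma~\ref{lem:struct} to compute the type of each $x_{I_j}$ and of every $w$-aligned $w$-interval of $y$. For each type-compatible pair $(I_j,J)$ (with $J$ a $w$-aligned $w$-interval of $y$), Lemma~\ref{lem:struct} supplies a subinterval $I_j'\subseteq I_j$ of length $\Omega(w)$ with $\LCS(x_{I_j'},y_J)\ge(1/2+\delta_1)|I_j'|$ for some $\delta_1=\delta_1(\varepsilon)>0$; we then invoke the equal-length routine from Theorem~\ref{thm:rs20} on $(x_{I_j'},y_J)$ to extract a certified lower bound $\kappa(j,J)$ that strictly beats the trivial $|I_j'|/2$. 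A DP over the resulting certified rectangles finds an ordered collection (each $I_j$ used at most once, the $J$'s increasing in $y$) maximizing the total beyond-trivial gain, and we output the sum of this gain with $\Trivial(x,y)$.

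For correctness, under $\LCS(x,y)\ge(1-\delta_0)|x|$ averaging shows that for all but an $O(\sqrt{\delta_0})$-fraction of $j$'s, the image $J^\tau_{I_j}$ in $\tau$ satisfies $\LCS(x_{I_j},y_{J^\tau_{I_j}})\ge(1-O(\sqrt{\delta_0}))w$; a structural comparison---essentially the classification part of Lemma~\ref{lem:struct}---shows that any $w$-rounding of such $J^\tau_{I_j}$ has a type compatible with $I_j$. The DP can therefore pick up these pairs and accumulate a beyond-trivial gain of $\Omega(\delta_1)|x|$, yielding an overall value of at least $(1/2+\delta)\LCS(x,y)$ for $\delta=\delta(\varepsilon)>0$ chosen after $\delta_0,\delta_1$.

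The main obstacle is controlling the cost of the candidate-pair step. Naively there are $\Theta((|x|/w)(|y|/w))$ pairs, and each $\EqLCS$ call costs $\Omega(w^{1+\varepsilon/C})$, giving a total of $|x||y|/w^{1-\varepsilon/C}$ which exceeds $n^{1+\varepsilon}$ when $|y|\gg|x|$. To bring the runtime down to $O(n^{1+\varepsilon})$ one must thin the candidate set: bucket $y$-intervals by type and, for each $I_j$, probe only a small number of $J$'s---guided by position information and by the algorithmic form of Lemma~\ref{lem:struct}---while still recovering the $w$-rounded images $J^\tau_{I_j}$ inherited from $\tau$. Designing this sparse candidate set and proving that the resulting DP still certifies the $(1/2+\delta)$ bound is the heart of the argument, and is where the algorithmic strengthening of the structure lemma from \cite{GuruswamiHL22} is essential.
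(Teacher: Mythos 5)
Your plan shares the paper's overall architecture (structure lemma at window scale, certified rectangles, ordered DP), but there are several gaps that would sink the argument as written.

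First and most importantly, you restrict the candidate $J$-intervals to $w$-aligned $w$-intervals of $y$. Since $|y|\ge|x|$, the typical behavior of the optimal matching $\tau$ is to map a window $I_j$ of $x$ to an interval $J^\tau_{I_j}$ of $y$ that is \emph{strictly longer} than $w$ (the paper calls these $I_j$ ``oblong''). For such $I_j$, no $w$-length slice of $J^\tau_{I_j}$ need have a compatible type, so your candidate set simply misses the right rectangle. The paper's construction handles this by letting $J$ vary in length, exploiting that $Q_t$ is \emph{hereditary} (Item 3 of Lemma~\ref{lem:struct}): one can binary-search, for each right endpoint, for the shortest $\theta w$-aligned $J$ with $Q_t$, and the resulting certified rectangle $(I_j'\times J)$ is valid even when $|J|\gg w$. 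This hereditary structure and the binary search are precisely what make the candidate set thin ($O(m_x m_y)$ rather than $O(m_x m_y^2)$) while still capturing the oblong images, so your acknowledged ``heart of the argument'' is exactly the piece you have not supplied.

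Second, the $\EqLCS$ call on $(x_{I_j'},y_J)$ is both ill-posed and unnecessary: $|I_j'|$ is generally not $|y_J|$, so the equal-length routine doesn't apply, and Lemma~\ref{lem:struct} (Item 5) already directly supplies a certified lower bound $|I_j'|/2+\delta_{code}w$ with no need for $\EqLCS$. In the paper, $\EqLCS$ is used only for a separate regime of ``nearly-square'' rectangles, $|J|\in[(1\pm\alpha)w]$, which covers the case that many $I_j$ have $|J^\tau_{I_j}|$ close to $w$ (and where the structure-lemma gain is unavailable or insufficient). Your proposal conflates these two mechanisms.

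Third, the final accounting ``output the DP's beyond-trivial gain plus $\Trivial(x,y)$'' is not sound. The structure-lemma rectangles already spend bits of both $x$ and $y$; adding their gain to a global trivial count double-counts and does not correspond to a feasible common subsequence. The paper instead shows (Lemma~\ref{lem:nice-0} and Lemma~\ref{lem:nice-1}) that one can construct a single \emph{ordered collection} mixing a sparse set of structure-lemma rectangles (every $t$-th window, $t\approx 3/\beta$) with locally-balanced trivial rectangles that fill the gaps in both $x$ and $y$; the DP's output is the sum of the $\kappa$'s over that ordered collection, which is a genuine lower bound on $\LCS(x,y)$ by Lemma~\ref{lem:alg-1}.

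Finally, your window size $w=|x|^{\varepsilon/C}$ would make $m_x=|x|/w$ a positive power of $|x|$, which is far more intervals than the paper's $w\approx|x|/\log|x|$ (giving $m_x=O(\log|x|)$). With your choice, the $m_x\cdot m_y$ certification/DP steps already exceed $O(n^{1+\varepsilon})$ in general, so even setting aside the other issues the runtime accounting would not go through.
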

\noindent Theorem~\ref{thm:main} follows from Theorem~\ref{thm:main-2-sketch} by prior work \cite{RubinsteinS20,AkmalV21}; for completeness include the details in Section~\ref{app:imbalanced}.
In the rest of this section, we sketch the proof of Theorem~\ref{thm:main-2-sketch}.

Our algorithm for Theorem~\ref{thm:main-2-sketch}, which is described in pseudocode in Algorithms~\ref{alg:cover} and~\ref{alg:dp}, is a modification of the standard quadratric time DP algorithm for LCS, which we formulate as follows.
The standard DP algorithm computes $\LCS(x,y)$ by computing the full array $\DP[i][j] \defeq \LCS(x_{[i]}, y_{[j]})$, $0\le i \le |x|$, $0\le j\le |y|$ via the recursion
\[
\DP[i][j] = 
\begin{cases}
    \max\left(\DP[i][j-1], \DP[i-1][j],\DP[i-1][j-1] + 1\right) & \text{ if }x_i = y_i   \\
    \max\left(\DP[i][j-1], \DP[i-1][j]\right) & \text{ otherwise.}    
\end{cases}        
\]
In total this takes $O(|x|\cdot|y|)$ applications of the recursion. 
To prove Theorem~\ref{thm:main-2-sketch}, we don't need to compute the exact value of $\LCS(x,y)$, rather, we only need to output a value between $(1/2+\delta)\LCS(x,y)$ and $\LCS(x,y)$.
To estimate the LCS efficiently, we modify the DP above by recursing over large subrectangles instead of one bit at a time. 
We compute a collection of large rectangles $I\times J$ (where $I$ and $J$ are long subintervals of $[|x|]$ and $[|y|]$, respectively) and estimates $\kappa(I\times J)$ for their $\LCS$ (we call these \emph{certified rectangles}).
We guarantee that $\kappa(I\times J)\le \LCS(x_I,y_J)$ in every rectangle, and we desire that many of these $\kappa(I\times J)$ are good estimates of $\LCS(x_I,y_J)$.
(For readers familiar with \cite{CDGKS20}, finding these rectangles is analogous to their ``Covering Phase'').

The large rectangles under consideration are $\theta w$-aligned subrectangles of $[|x|]\times [|y|]$, where $w\approx |x|/\log |x|$ is a typical length of the $x$-intervals and $\theta$ is a small constant discretization parameter (in the real proof, we discretize $x$-intervals and $y$-intervals slightly differently, but ignore that here for sake of illustration).
Our modified DP algorithm is then
\begin{align}
\DP[i][j] =
    \max\big( &\DP\left[\imin I-\theta w\right]\left[\imin J\right], \DP\left[\imin I\right]\left[\imin J-\theta w\right],\nonumber\\
    &\DP\left[\imin I\right]\left[\imin J\right] + \kappa(I\times J)
    \text{ over $I\times J \in \cR$ s.t. $\imax I = i, \imax J = j$}\big),
\label{eq:dp}
\end{align}
where $\kappa(I\times J)$ denotes the lower bound for $\LCS(x_I, y_J)$ guaranteed by our certification algorithm. Observe that by induction $\DP[i][j]$ is still a lower bound for $\LCS(x_{[i]}, y_{[j]})$.
Because of our discretization, we only need to consider $i$ and $j$ a multiple of $\theta w$, so the number of dynamic programming states drops from $O(|x|\cdot|y|)$ to $O(\frac{|x|\cdot|y|}{(\theta w)^2}) \le \tilde O(\frac{|y|}{|x|})$.
Thus it remains to show we can quickly certify a collection of rectangles for which the dynamic program \eqref{eq:dp} outputs a $(\frac{1}{2}+\delta)$-approximation.

The main step is find a significant fraction of ``good'' rectangles for which $\kappa(I\times J) > (1/2+\gamma) \LCS(x_I,y_J)$.
We look for good rectangles in three different ways, as shown in Algorithm~\ref{alg:cover}. (1) First, we check for the ``trivial'' rectangles where $\Trivial(x_I, y_J) > (1/2+\gamma) |x_I|$ ($\gamma >0$ chosen very small). (2) Next, we black-box the equal-length LCS algorithm of Rubinstein and Song, and efficiently check for squares $I\times J$ with $|I|=|J|$ and $\LCS(x_I, y_J) > (1/2+\gamma) |x_I|$. (3) Finally --- and this is the main technical contribution of our work --- we use the algorithmic structure lemma, Lemma~\ref{lem:struct}, to efficiently compute ``oscillation frequencies'' for the intervals $I$ and $J$. For any given rectangle $I\times J$ where $|J|$ is longer than $|I|$, they can then be certified quickly by checking if $J$ has the same oscillation frequency as $I$. For technical reasons, for this last type of rectangle we are unable to guarantee that $\LCS(x_I, y_J) > (1/2+\gamma) |x_I|$ as we did for the other two types, but we can instead guarantee the weaker assumption that $I$ has a long subinterval $I'$ for which $\LCS(x_{I'}, y_J) > (1/2+\gamma) |x_{I'}|$. Handling this technicality requires some care, but to get across the main ideas we ignore this detail for the rest of this sketch and imagine that all certified rectangles satisfy $\LCS(x_I, y_J) > (1/2+\gamma) |x_I|$.

We also certify using the trivial algorithm a weaker set of ``default'' rectangles $I\times J$ where $\Trivial(I,J) \ge (1/2-\gamma^4)|I|$ (the constants are chosen for illustration). These rectangles exist all over the place and are used in the DP to fill in the gaps between the efficient ones above. 
We may assume $\LCS(x,y) \ge (1 - \delta)|x|$ --- or else the trivial matching, which is always $|x|/2$ by the setup of Theorem~\ref{thm:main-2-sketch}, gets a $(\frac{1+\delta}{2})$-approximation --- and this assumption guarantees we can certify many good rectangles and many default rectangles.
We show that while most of the $\kappa(I\times J)$ are $(1/2-\gamma^4)|I|$ coming from the default rectangles, a significant enough fraction of them are $(1/2+\gamma)|I|$ that for the final answer we get $\DP[|x|][|y|] > (1/2+\delta)|x|$. 

Since we are going for an almost-linear time algorithm (and not just subquadratic), we need to be careful to certify the rectangles quickly.
Note that, naively, there are roughly $(\frac{|y|}{\theta w})^2 \sim \tilde \Theta(\frac{|y|}{|x|})^2$ possible $J$-intervals.
If $y$ is much longer than $x$ (say $|y| = |x|^3$), then we cannot simply try to certify every rectangle, or else the runtime is super-linear in the input size, even if we can certify rectangles in constant time.
Instead, we restrict ourselves to certifying rectangles $I\times J$ where $J$ is ``minimal''. 
That is, for each $x$-interval $I$ and each $\imax(J)$, we look for the minimal $J$ where we can certify $\kappa(I\times J)\ge (1/2+\gamma)|I|$. 
We can find such $J$ by binary search (the ability to binary search depends on a technical property of the algorithmic structure lemma), so that the number of rectangles we are checking is now only $\tilde O(\frac{|y|}{|x|})$, rather than $\tilde O(\frac{|y|}{|x|})^2$.


\section{Algorithmic Structure Lemma}
\label{sec:struct}

We now state and prove our algorithmic structure lemma.
We note that the final algorithm uses this lemma as a black-box, and can be understood without the proof of this lemma.
The interested reader can skip to Section~\ref{sec:alg} after Section~\ref{ssec:struct-statement}.

\subsection{Algorithm Structure Lemma Statement}
\label{ssec:struct-statement}

The following is the key technical lemma. 
It is inspired by and builds upon the ``Structure Lemma'' of \cite{GuruswamiHL22}, which was used to prove new deletion code bounds.

\begin{lemma}[Algorithm Structure Lemma]
  There exists an absolute constant $\delta_{code}>0$ such that for all sufficiently large $w$, there exists $T\le 2\log w$ and $2T$ string properties $P_1,\ldots, P_T, Q_1,\ldots, Q_T$ such that:
  \begin{enumerate}
  \item If string $x$ has length $w$, then there exists a $t\in [T]$ such that $x$ has property $P_t$.
  \item If $\LCS(x,y)\ge (1-\delta_{code})|x|$ and $x$ has property $P_t$, then $y$ (not necessarily of length $w$) has property $Q_t$.
  \item Property $Q_t$ is hereditary, meaning that if $y$ has $Q_t$ and $y$ is a subsequence of $y'$, then $y'$ has $Q_t$.
  \item For every $t\in[T]$, and strings $x$ and $y$, we can test if $x$ satisfies $P_t$ in time $O(w\log w)$.
  We can also preprocess the string $y$ in time $O(|y|\log|y|)$, such that we can answer queries of the form ``does $y'$ satisfy $Q_t$,'' for substrings $y'$ of $y$, in $O(w)$ time.
  \item If string $x$ has length $w$ and property $P_t$ and string $y$ has property $Q_t$, then there exists an interval $I\subset[w]$ such that $\LCS(x_I,y)\ge \frac{|I|}{2}+\delta_{code} w$.
  Furthermore, given $x$ and $t$, the interval $I$ and the promised common subsequence of $x_I$ and $y$ can be chosen independent of $y$, and both can be found in time $O(w\log w)$. 
  \end{enumerate}
\label{lem:struct}
\end{lemma}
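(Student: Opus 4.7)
My plan is to take the $P_t$ to be the oscillation types from Definition~\ref{def:type-sketch}: one $P_t$ per dyadic scale $\ell \in [\eps^2 w, w]$ indicating that $x$ is $\ell$-coarse with a specified majority bit, and one $P_t$ per dyadic scale $\ell < \eps^2 w$ indicating that $x$ is $\ell$-fine. This gives $T = O(\log w)$, and I assign $x$ to a canonical type (say, the largest $\ell$ witnessing coarseness, or else the smallest witnessing fineness) to turn property~1 into an honest partition statement. The first task is the dichotomy itself: every $x \in \{0,1\}^w$ satisfies some $P_t$. If $x$ is not coarse then every subinterval of length $\eps^2 w$ is $\eps^2$-balanced, which forces $x$ to be globally balanced and to have many alternations between $0$s and $1$s. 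A dyadic pigeonhole over scales, adapted from the combinatorial Structure Lemma of \cite{GuruswamiHL22}, then produces a scale $\ell$ carrying $\ge \eps w$ many $\ell^+$-flags, from which the promised $(0^\ell 1^\ell)^{\eps w/\ell}$ subsequence can be extracted by a greedy sweep.

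For each $P_t$ I will define a matching hereditary property $Q_t$. In the coarse case ($\ell$-coarse with majority bit $b$), I set $Q_t$ to assert that $y$ contains at least $(\tfrac12 + \eps^2)\ell - \delta_{code} w$ copies of $b$, clearly hereditary and answerable from prefix counts of $y$. In the fine case I set $Q_t$ to assert that $y$ contains $(0^\ell 1^\ell)^{\eps w/(2\ell)}$, or its $0 \leftrightarrow 1$ flip, as a subsequence, which is hereditary by definition. Property~2 then reduces to a direct counting argument: a common subsequence of length $\ge (1 - \delta_{code})|x|$ matches all but $\delta_{code} w$ bits of $x$, so in the coarse case almost all majority bits of the witnessing $x_I$ are preserved inside $y$, and in the fine case at most $\delta_{code} w/\ell$ of the oscillation blocks are corrupted, leaving the halved pattern intact in $y$ provided $\delta_{code} \ll \eps^2$.

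For algorithmic efficiency (property~4), prefix sums on $x$ suffice to test all coarse $P_t$ in total time $O(w)$, and for each fine scale $\ell$ I can count $\ell^+$-flags and greedily verify the $(0^\ell 1^\ell)^{k}$ subsequence in $O(w)$ time, giving $O(w \log w)$ overall. For $y$ I precompute prefix counts of both bits and, for each dyadic $\ell$ and each starting position $i$, the end of the greedy match of $(0^\ell 1^\ell)^{\eps w/(2\ell)}$ from $i$; this takes $O(|y|\log|y|)$ time via a right-to-left sweep and leaves $Q_t$ queries on arbitrary substrings of $y$ well within the $O(w)$ budget.

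Property~5 is the technical heart, and the fine case will be the main obstacle. In the coarse case I take $I$ to be the witnessing imbalanced interval of length $\ell$ and the promised common subsequence to be the majority-bit positions of $x_I$; combined with $Q_t$ this yields $\LCS(x_I, y) \ge (\tfrac12 + \eps^2)\ell - \delta_{code} w$, which beats $|I|/2 + \delta_{code} w$ once $\delta_{code} \le \eps^4/2$ (using $\ell \ge \eps^2 w$). In the fine case I plan to take $I = [w]$ and to build the $y$-independent common subsequence by interleaving a precomputed $(0^\ell 1^\ell)^{\eps w/(2\ell)}$ subsequence of $x$ with a greedy ``baseline'' of about $w/2 - O(\eps w)$ majority-bit positions drawn from the balanced regions of $x$, exploiting the fact that fineness forces these baseline bits to coexist with the oscillation blocks. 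Showing that the block-swap matching of \cite{GuruswamiHL22} lifts to the $y$-independent setting --- so that a single precomputed subsequence of $x$ embeds into every $y$ satisfying $Q_t$ --- is the principal hurdle, and I expect to need an algorithmic version of their ``canonical flag matching'' construction, with constants chosen so that $\delta_{code} = \Theta(\eps^C)$ for some absolute $C$.
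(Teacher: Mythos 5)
Your coarse case is sound and essentially matches the paper: take $Q_{\ell,b}$ to count $b$-bits, use the witnessing imbalanced interval as $I$, and observe that a short common subsequence deficit cannot erase the imbalance once $\delta_{code}\ll\eps^4$.

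The fine case, however, has a genuine gap. You propose $I=[w]$ and a $y$-independent common subsequence of length roughly $w/2+\Omega(\eps w)$, built by interleaving the $(0^\ell1^\ell)^k$ pattern with $\sim w/2$ baseline majority-bit positions in $x$. But your own $Q_t$ only asserts that $y$ contains $(0^\ell1^\ell)^{\eps w/(2\ell)}$ (or its flip) as a subsequence. That string has total length $\eps w$, so $y$ satisfying $Q_t$ can itself be as short as $\eps w$. For such a $y$, $\LCS(x_{[w]},y)\le|y|=\eps w$, which is far below the required $w/2+\delta_{code}w$. The baseline bits have nowhere to go: property~5 must hold for \emph{every} $y$ with $Q_t$, not just the long, balanced $y$ encountered in the application, and $Q_t$ as you defined it promises nothing beyond the short oscillation pattern. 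Strengthening $Q_t$ to also demand many $0$s and $1$s would preserve heredity and item~2, but it still would not let you match $\sim w/2$ baseline bits against an adversarial $y$ such as $0^{\sim w/2}(01)^{\eps w/5}1^{\sim w/2}$ while keeping the promised subsequence fixed in advance --- the interleaving picture collapses.

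The fix used by the paper is to give up on $I=[w]$. Since the only thing $y$ guarantees is the short pattern $y_\ell$ of length $\Theta(\eps w)$, the promised common subsequence must itself fit inside $y_\ell$, so it can have length at most $O(\eps w)$. To still achieve a $\delta_{code}w$ advantage over $|I|/2$, one therefore takes $I$ short: the paper finds (via pigeonhole) a flag-dense window $I$ of length $\Theta(\eps^2 w)$ inside $x$, slightly widened so that flags of $x$ remain flags of $x_I$, and builds the subsequence $x''$ by walking the $1$s of $x_I$ and inserting a run of $\ge\ell'$ zeros whenever an $\ell'$-flag is encountered. This $x''$ has length $\ge|I|/2+\eps^3w$, is a subsequence of $x_I$ by the flag definition, and is a subsequence of $y_\ell$ because all its $0$-runs have length at least $\ell$ and its total length is $O(\eps^2 w)\ll|y_\ell|$. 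Choosing $\delta_{code}=\eps^4/2$ then works with $\ell\ge\eps^2 w$ in the coarse case and with the $\eps^3 w$ margin in the fine case. You should replace the $I=[w]$ plan with this ``zoom into a small flag-rich window'' construction; the rest of your outline (properties, hereditariness, preprocessing via $\nxt$ pointers, item~2 counting) is on the right track.
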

\begin{remark}
  In item 5, it is easy to see that, if $\gamma\le \delta_{code}/10$, we may additionally assume (by starting with $\delta_{code}' = \delta_{code}/2$) that the interval $I$ is $\gamma w$-aligned by taking $I'\defeq \round_{\gamma w}(I)$.
  We do so in the application in Section~\ref{sec:alg}.
  \label{rem:struct}
\end{remark}

We now provide some more intuition for Lemma~\ref{lem:struct}.
First,  we describe the properties $P_t$ and $Q_t$ that we actually use (based on Definition~\ref{def:type-sketch}). For convenience, to define the properties, we index them as $P_{\ell,0}, P_{\ell, 1}, P_{\ell}$ for $\ell\le w$ a power of 2, for a total of roughly $T\sim 3\log w$ properties. 
\begin{itemize}
  \item If $\ell \ge \eps^2 w$ and $b\in\{0,1\}$, then $P_{\ell, b}$ is the property that $x$ is $\ell$-coarse, and its imbalanced interval of length $\ell$ is imbalanced in the direction of $b$-bits (i.e. has more $b$'s than $\bar{b}$'s).
  \item If $\ell \ge \eps^2 w$ and $b\in\{0,1\}$, then $Q_{\ell,b}$ is the property that $y$ has at least $\big(\frac{1+\eps^2}{2}\big)\ell$ $b$-bits.
  \item If $\ell < \eps^2 w$, then $P_\ell$ is the property that $x$ is $\ell$-fine.
  \item If $\ell < \eps^2 w$, then $Q_\ell$ is the property that $y$ contains $y_\ell = (0^{\ell}1^{\ell})^{\eps w/(5\ell)}$ as a subsequence.
  \end{itemize}

The properties $P_t$ are based on one of the key technical lemmas of the deletion codes bound \cite{GuruswamiHL22}, a combinatorial structure lemma.
This structure lemma roughly says that for strings of length $n$, there are properties $\tilde P_1,\dots, \tilde P_T$ with $T\le O(\log n)$, such that
\begin{enumerate}
\item[(i)] any binary string of length $n$ has some property $ \tilde P_i$, and
\item[(ii)] if two strings $x$ and $y$ have property $\tilde P_i$, then $x$ and $y$ have (continguous) substrings $x'$ and $y'$ of length $\Omega(n)$ with $\LCS(x',y')\ge (\frac{1}{2}+\delta)(\frac{|x'|+|y'|}{2})$ (the real guarantee is stronger but more technical to state).
\end{enumerate}
Theorem~\ref{thm:ghl}, the deletion codes lower bound, is proved (very roughly) by partitioning each string in $C$ into $\polylog n$ substrings, finding by pigeonhole two strings $x$ and $y$ such that the types of the corresponding substrings of $x$ and $y$ agree, and using guarantee (ii) to find a $(\frac{1}{2}+\delta')w$ overall LCS.

Lemma \ref{lem:struct} is a generalization of this combinatorial structure lemma that is ``algorithmic'' and ``handles unequal length strings.''
The properties $P_t$ that we choose in Lemma \ref{lem:struct} are similar to the properties $\tilde P_t$ of \cite{GuruswamiHL22}, and it is not hard to check by inspection that the properties $\tilde P_t$ of \cite{GuruswamiHL22} can be tested in linear time.
The difficulty lies in finding properties $Q_t$ of strings $y$ that (a) can be ``inherited'' from properties like $\tilde P_t$ if $y$ has a subsequence covering most of $x$, (b) can be tested efficiently, and (c) still guarantee an LCS advantage between $x$ and $y$.

Because of Lemma \ref{lem:struct}, we can define the following algorithms, which we use in our final LCS algorithm.
\begin{definition}
  For an integer $w$, let $P_1,\dots,P_T,Q_1,\dots,Q_T$ be the properties from Lemma~\ref{lem:struct}.
  Let $\getP(x)$ denote the smallest index $t$ such that $x$ has property $P_t$. 
  Let $\isQ(x,t)$ be true if $x$ has property $Q_t$ and false otherwise.
  For $x$ satisfying $P_t$ for some $t$, let $\getI(x,t)$ denote a $\gamma w$-aligned interval $I$ such that $\LCS(x_I,y)\ge \frac{|I|}{2}+\delta_{code}w$ for all $y$ satisfying $Q_t$.
  Such an interval exists by Lemma~\ref{lem:struct} and Remark\ref{rem:struct}.
\end{definition}
By Lemma~\ref{lem:struct}, $\getP(x)$ can be computed in $O(w\log^2 w)$ time, since one can simply test each of the $O(\log w)$ properties $P_t$.
By Lemma~\ref{lem:struct}, any string $y$ can be preprocessed in $O(|y|\log |y|)$ time such that, for any contiguous substring $y'$ of $y$, $\isQ(y',t)$ can be computed in $O(w)$ time.
Note that the input to $\getP$ must have length $w$, but the string input to $\isQ$ can have arbitrary length.
By Lemma~\ref{lem:struct}, $\getI(x,t)$ can be computed in $O(w\log w)$ time.

\subsection{Combinatorial structure lemma and types}

In a string $x$, an \emph{$\ell$-flag} is an index $i$ such that between the $i$th one and the $(i+\ell)$-th one, there are strictly more than $10(\ell-1)$ zeros.
An \emph{$\ell^+$-flag} is an index $i$ that is a $t$-flag for some $t\ge \ell$, where $t$ must be a power of two. 
By abuse of notation, if $i$ is a $\ell$-flag, we may also call the $i$-th one of $x$ a $\ell$-flag.
Note that there are many more zeros than ones between the $i$th and $(i+\ell)$-th one, so flags tell us where it is more advantageous to use zeros rather than ones in finding long subsequences.

The basis for the algorithmic structure lemma is a combinatorial structure lemma for strings, which we inherit from \cite[Lemma 4.1]{GuruswamiHL22}. We use a weaker form of the lemma, which has a significantly simpler statement and proof, and is also tailored to our algorithmic application. The proof is given in Appendix~\ref{apx:comb-struct}. 

\begin{lemma}[Combinatorial Structure Lemma]\label{lem:comb-struct}
  For $\eps = 10^{-5}$ and $w$ sufficiently large, at least one of the following two conditions holds for every string $x \in \{0,1\}^w$.
  \begin{enumerate}
  \item There exists $\ell \in [\eps^2 w, w]$ equal to a power of two and an $0.1$-imbalanced interval $I$ in $x$ of length $\ell$.

  \item There exists $\ell \in [1, \eps^2 w)$ equal to a power of two such that the number of $\ell^+$-flags in $x$ is at least $\eps w$, and $x$ contains $(0^{\ell}1^{\ell})^{\eps w/\ell}$ as a subsequence.
  \end{enumerate}
\end{lemma}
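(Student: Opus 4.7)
The plan is to suppose that condition 1 fails and deduce condition 2. Failure of condition 1 means every power-of-two interval $I$ with $\eps^2 w \le |I| \le w$ is $0.1$-balanced in $x$. Standard dyadic approximation then shows that every contiguous substring of $x$ of length at least $\eps^2 w$ has density in $[0.4, 0.6]$; in particular, $x$ contains between $0.4 w$ and $0.6 w$ ones, and every maximal run of $x$ has length strictly less than $\eps^2 w$.

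Next, I would identify a dyadic oscillation scale $\ell^\ast \le \eps^2 w / 10$ based on the distribution of 0-run lengths in $x$. For each power of two $\ell$ in this range, let $T_\ell$ be the total number of zeros of $x$ lying in maximal 0-runs of length at least $10\ell$. Using that there are $O(\log w)$ dyadic scales to consider, that no single 0-run is long, and that the total number of zeros is at least $0.4w$, I would select $\ell^\ast$ so that $T_{\ell^\ast}$ is a positive constant fraction of $w$, independent of $w$. The right selection rests on a case split between whether the zero-mass concentrates at a single dyadic scale or spreads across many scales.

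Once $\ell^\ast$ is fixed, both requirements of condition 2 follow. For the flag count: each 0-run $R$ of length at least $10\ell^\ast$ is preceded by a 1-run whose last $\ell^\ast$ ones are all $\ell^\ast$-flags, since they see more than $10(\ell^\ast - 1)$ subsequent zeros before the next $\ell^\ast$-th one (the zeros of $R$ suffice). Aggregating across qualifying 0-runs yields $T_{\ell^\ast}/20 \ge \eps w$ many $\ell^\ast$-flags, which are \emph{a fortiori} $\ell^{\ast +}$-flags. For the subsequence: traversing $x$ left to right, each qualifying 0-run supplies one $0^{\ell^\ast}$ block, and the global balance condition guarantees that the next $\ell^\ast$ ones can be collected before the next qualifying 0-run begins. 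Iterating produces at least $\eps w/\ell^\ast$ disjoint copies of $0^{\ell^\ast} 1^{\ell^\ast}$, matching the required subsequence.

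The main obstacle, as I see it, is the scale-selection step: a naive averaging over the $O(\log w)$ dyadic levels incurs a $\log w$ loss, only yielding $T_{\ell^\ast} \ge \Omega(w/\log w)$ which is too weak against the hard threshold $\eps w$ for $w$ very large. Avoiding this loss requires the structural dichotomy from \cite{GuruswamiHL22} (their Lemma 4.1), distinguishing whether most zeros lie in 0-runs of comparable length (yielding a single concentrated scale) or instead spread across many short 0-runs (forcing many 1-flags at the smallest scale). This delicate counting argument, involving the specific interplay of the constants $0.1$, $10(\ell-1)$, and $\eps = 10^{-5}$, is where the proof is most intricate and where I would spend the most care.
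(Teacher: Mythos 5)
Your proposal contains a genuine gap that you partially flag yourself but do not resolve: both the definition of $T_\ell$ and the scale-selection step are inadequate.

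The quantity $T_\ell$ you define (zeros lying in maximal $0$-runs of length $\ge 10\ell$) fails to capture the flag-generating structure. An $\ell$-flag only requires that more than $10(\ell-1)$ zeros appear between the $i$-th and $(i+\ell)$-th one, and these zeros may be split across several $0$-runs with a few ones interspersed. Consider $x = (0^5 1^5)^{w/10}$: it is globally balanced and has no $0$-run of length $\ge 10$, so $T_\ell = 0$ for every $\ell \ge 1$, yet $x$ has $\Omega(w)$ many $1$-flags (the last one of each $1$-run) and trivially contains $(01)^{\eps w}$ as a subsequence — so condition 2 holds with $\ell=1$, but your $T_\ell$ gives no handle on this. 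The paper's proof replaces $0$-runs with \emph{sparse dyadic intervals} (density $< 0.01$), which is the right relaxation: a single zero-bit is itself a sparse dyadic interval of length one, so the collection $\cS$ of maximal sparse dyadic intervals covers all zeros of $x$ and carries total length $\ge 0.4w$. The dense predecessor $\pred(I)$ of a length-$2^k$ interval $I\in\cS$ then yields $\Omega(|I|)$ many $2^{\max(0,k-5)}$-flags.

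For scale selection, the paper does not use the concentrated-vs-spread dichotomy you invoke but a threshold argument that sidesteps the $\log w$ loss entirely: after pruning $\cS$ to a $\prec$-maximal subcollection $\cS'$ (to avoid double-counting predecessors — another step absent from your sketch) whose total length is still $\ge 0.1w$, one takes $k_0$ to be the \emph{largest} level with $\sum_{I\in\cS'_{\ge k_0}}|I|\ge 0.01w$. Since all sparse dyadic intervals have length $< \eps^2 w$, this $k_0$ exists and is small, and every flag produced by an $I\in\cS'_{\ge k_0}$ is automatically an $\ell^+$-flag for $\ell = 2^{\max(0,k_0-5)}$, so the cumulative mass is counted once rather than averaged across scales. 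The maximality of $k_0$ is then used separately at level $k_0+1$ for the subsequence $(0^\ell 1^\ell)^{\eps w/\ell}$: a constant fraction of level-$(k_0+1)$ dyadic intervals are neither sparse nor very dense, hence each supplies $\Omega(\ell)$ zeros and $\Omega(\ell)$ ones. Your appeal to ``the global balance condition'' does not deliver this local guarantee — global balance alone does not prevent the ones from being clustered away from the qualifying $0$-runs.
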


For the remainder of Section~\ref{sec:struct}, fix $\eps\defeq 10^{-5}.$ Lemma~\ref{lem:comb-struct} shows that every sufficiently long string $x$ is of one of the below types, of which there are $\log w$ total. 

\begin{definition}\label{def:type}
  Let $\ell$ be a power of two, $\ell\in [1,w]$, and $x \in \{0,1\}^w$.
  \begin{enumerate}
  \item We say that $x$ is \textit{$\ell$-coarse} if $\ell \ge \eps^2 w$ and there is a $\eps^2$-imbalanced interval $I$ in $x$ of length $\ell$. We say $x$ is \textit{coarse} if it is $\ell$-coarse for some $\ell \ge \eps^2 w$.

  \item We say that $x$ is \textit{$\ell$-fine} if it is not coarse, $\ell < \eps^2 w$, the number of $\ell^+$-flags in $x$ is at least $\eps w$, and $x$ contains $(0^{\ell}1^{\ell})^{\eps w/\ell}$ as a subsequence. We say $x$ is \textit{fine} if it is $\ell$-fine for some $\ell < \eps^2 w$.
  \end{enumerate}
\end{definition}

Note that for the convenience of our later proofs, we change the imbalanced threshold from $0.1$ in Lemma~\ref{lem:comb-struct} to the much smaller $\eps^2$ in the above definition. Since every $0.1$-imbalanced interval is also $\eps^2$-imbalanced, Lemma~\ref{lem:comb-struct} implies every sufficiently long string $x$ is of one of the above two types.

\subsection{Algorithmic structure lemma ingredients}
As in the last section, we fix $\eps = 10^{-5}$. Also, for brevity, for every positive integer $\ell$, define the special string
\[
  y_\ell \defeq (0^{\ell}1^{\ell})^{\eps w/(5\ell)}.
\]

We prove two ingredients to justify our ``$Q_t$'' properties in the algorithmic structure lemma.
The first is the simple observation that if $x$ is $\ell$-fine and $\LCS(x,y) \ge (1-\delta_{code})|x|$, then $y$ inherits the easily testable subsequence $y_\ell$ from $x$.
\begin{lemma}
  Let $0 < \delta < \eps/10$, $\ell$ be a power of two, and $w > \eps^{-2}\ell$.
  If $x$ is an $\ell$-fine string of length $w$ and $\LCS(x,y)\ge (1-\delta)w$, then $y_\ell$ is a subsequence of $y$.
  \label{lem:long-1}
\end{lemma}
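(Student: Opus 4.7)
The plan is to use only the subsequence clause of $\ell$-fineness, namely that $x$ contains $(0^{\ell}1^{\ell})^{\eps w/\ell}$ as a subsequence; the flag-count and not-coarse clauses of Definition~\ref{def:type} play no role in this lemma. From $\LCS(x,y)\ge(1-\delta)w$, fix a common subsequence $z$ of $x$ and $y$ of length at least $(1-\delta)w$; equivalently, $z$ is obtained from $x$ by deleting some set $S\subseteq[w]$ of at most $\delta w$ positions. Since $z$ is a subsequence of $y$, it suffices to exhibit $y_\ell$ as a subsequence of $z$.

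First, I would fix an embedding of $(0^{\ell}1^{\ell})^{\eps w/\ell}$ into $x$ and partition its $\eps w/\ell$ consecutive pairs into \emph{super-blocks} of four consecutive pairs each, giving roughly $\eps w/(4\ell)$ super-blocks. Each super-block is a concatenation of eight runs $Z_1 O_1 Z_2 O_2 Z_3 O_3 Z_4 O_4$ with $|Z_i|=|O_i|=\ell$, spanning $8\ell$ embedded positions of $x$. I would call a super-block \emph{good} if at most $2\ell$ of these positions lie in $S$.

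The key step is to show that after deleting $S$, every good super-block still contains $0^{\ell}1^{\ell}$ as a subsequence. Let $a_i$ (resp.\ $b_i$) count the surviving zeros in $Z_i$ (resp.\ ones in $O_i$), and let $D_0,D_1$ count the deleted zeros/ones inside the super-block. Define $j^*\in\{1,2,3,4\}$ to be the smallest index with $a_1+\cdots+a_{j^*}\ge\ell$, which exists since $\sum_i a_i\ge 4\ell-D_0\ge 2\ell$. From $\sum_{i<j^*}a_i\le\ell-1$ I would deduce that deletions in $Z_1\cup\cdots\cup Z_{j^*-1}$ number at least $(j^*-2)\ell+1$, so $j^*\le(D_0-1)/\ell+2$. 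Meanwhile the surviving ones in $O_{j^*}\cup\cdots\cup O_4$ number at least $(4\ell-D_1)-(j^*-1)\ell=(5-j^*)\ell-D_1$, which is $\ge \ell$ whenever $j^*\le 4-D_1/\ell$. Both bounds are compatible provided $D_0+D_1\le 2\ell+1$, which holds for good super-blocks, and then selecting $\ell$ surviving zeros from $Z_1\cup\cdots\cup Z_{j^*}$ followed by $\ell$ surviving ones from $O_{j^*}\cup\cdots\cup O_4$ yields the desired $0^{\ell}1^{\ell}$ subsequence inside the super-block.

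Finally, I would count: since total deletions sum to at most $\delta w\le \eps w/10$ and each bad super-block accounts for more than $2\ell$ deletions, bad super-blocks number at most $\delta w/(2\ell)\le\eps w/(20\ell)$. Hence good super-blocks number at least $\eps w/(4\ell)-\eps w/(20\ell)=\eps w/(5\ell)$. Since the super-blocks are ordered by position and occupy disjoint intervals of $x$, concatenating the $0^{\ell}1^{\ell}$ subsequences from the first $\eps w/(5\ell)$ good super-blocks in order produces $y_\ell$ as a subsequence of $z$, as desired. The main delicacy is balancing super-block size against deletion budget: five-pair super-blocks would fall short of the target count $\eps w/(5\ell)$, while four-pair super-blocks just barely meet it under the constraint $\delta<\eps/10$.
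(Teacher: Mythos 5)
Your proof is correct, and the broad strategy matches the paper's: lift the periodic subsequence $(0^\ell 1^\ell)^{\eps w/\ell}$ from $x$ to $y$ up to $\delta w$ deletions, then count deletion damage in groups of chunks and salvage $0^\ell 1^\ell$ from the lightly damaged groups. However, your inner salvage argument is organized differently. The paper works directly with $y_\ell^5$ and $\delta w$ deletions applied to it, thresholds \emph{per chunk} (a chunk is good if it has at most $\ell/2$ deletions, hence retains $\ge \ell/2$ zeros and $\ge \ell/2$ ones), and then pairs up the surviving good chunks four at a time (two for $0^\ell$, two for $1^\ell$). You instead fix the super-blocks of four consecutive chunks in advance, threshold \emph{per super-block} (at most $2\ell$ deletions), and prove via the pivot index $j^*$ that every good super-block contains $0^\ell 1^\ell$ even when deletions are very unevenly distributed among its four chunks. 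Your argument is strictly more robust at the super-block level (it handles, e.g., all $2\ell$ deletions landing in a single chunk, which would kill that chunk in the paper's per-chunk scheme), but the paper never needs this robustness because it counts bad chunks rather than bad groups, and the resulting constants are the same. The paper's per-chunk version is cleaner to state; your pivoting version is a nice alternative that isolates the damage entirely within a fixed-length window of the pattern, at the cost of a slightly more intricate case analysis. Both pin down the same threshold $\delta < \eps/10$.

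One small wording nit: when you write ``deletions in $Z_1 \cup \cdots \cup Z_{j^*-1}$ number at least $(j^*-2)\ell+1$, so $j^* \le (D_0-1)/\ell+2$,'' this is only a nontrivial statement when $j^* \ge 2$; when $j^*=1$ the bound is vacuous and the subsequent conclusion $j^* \le 4 - D_1/\ell$ should be checked directly (it holds since $D_1 \le 2\ell$ forces the right side to be $\ge 2$). This does not affect correctness but is worth a sentence for completeness.
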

\begin{proof}
  By definition, if $x$ is $\ell$-fine then $x$ contains $(0^{\ell}1^{\ell})^{\eps w/\ell} = y_{\ell}^5$ as a subsequence.
  Since $\LCS(x,y)\ge |x|-\delta w$ and $y_\ell^5$ is a subsequence of $x$, we have $\LCS(y_\ell^5,y)\ge |y_\ell^5|-\delta w$.
  Thus, there is a subsequence of $y$ obtained by applying $\delta w$ deletions to $y_\ell^5$.
  By counting, at most $2\delta w/\ell$ of the chunks $0^{\ell}1^\ell$ in $y_\ell^5$ receive more than $\ell/2$ of these deletions. 
  The remaining $\eps w/\ell-2\delta w/\ell > 4\eps w/(5\ell)$ chunks each have at least $\ell/2$ zeros and $\ell/2$ ones. Taking $\ell$ zeros from the first two chunks, $\ell$ ones from the next two, and so on, we see that $y$ contains $y_\ell$ as a subsequence, as desired.
\end{proof}

The second ingredient implies that if $x$ is $\ell$-fine, then a substring of $x$ can be matched advantageously with $y_\ell$.
\begin{lemma}
  Let $\ell$ be a power of two, and $w > \eps^{-2}\ell$. If a string $x$ of length $w$ is $\ell$-fine, then there exists an interval $I$ with $\LCS(x_I,y_\ell)\ge \frac{|I|}{2} + \eps^3|x|$. Furthermore, given $x$ and $\ell$, we can determine the interval $I$ and the common subsequence of $x_I$ and $y_\ell$ in time $O(w\log w)$.
  \label{lem:long-2}
\end{lemma}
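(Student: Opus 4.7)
My plan is to construct the interval $I$ and the promised common subsequence of $x_I$ and $y_\ell$ explicitly, leveraging the $\ell^+$-flag structure of $x$. The guiding intuition is this: a greedy left-to-right matching of $y_\ell = (0^\ell 1^\ell)^M$ (with $M = \eps w/(5\ell)$) into $x$ costs about $4\ell$ characters of $x$ per $0^\ell 1^\ell$ block at the generic density $1/2$, which only matches $|I|/2$ and yields zero excess. Near an $\ell^+$-flag, however, the local density of zeros is at least $10/11$ by the definition of a $t$-flag, so $\ell$ zeros can be harvested from only about $11\ell/10$ characters of $x$. Each flag-assisted block thus saves a constant fraction of $\ell$ characters of $x$ relative to the generic cost, and even $\Omega(\eps^2 M)$ flag-assisted blocks will suffice to accumulate the required $\eps^3 w$ excess.

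Concretely, I will first precompute the set of $\ell^+$-flags of $x$ and their associated $0$-dense subintervals using a prefix-sum array of the positions of ones and zeros in $x$. For each $1$-bit of $x$, testing whether it is a $t$-flag for each power of two $t \in [\ell, w]$ takes $O(1)$ time per $t$ via prefix-sum lookup, giving $O(w\log w)$ overall. Then, using the fact that $x$ contains $y_\ell^5$ as a subsequence (a direct consequence of $\ell$-fineness), I will greedy-match $y_\ell$ into $x$ block by block: for each $0^\ell$ block of $y_\ell$, advance to the next available $\ell^+$-flag region and extract $\ell$ zeros from it in about $11\ell/10$ characters of $x$; for each $1^\ell$ block, extract the next $\ell$ ones of $x$ directly. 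Finally set $I$ to be the subinterval spanned by the matching and read off the common subsequence from the matching choices. Everything runs in $O(w\log w)$ time.

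The main obstacle will be ensuring that enough of the $M$ blocks are flag-assisted. Ideally the $\geq \eps w$ flags are evenly spread across $[w]$ so that every block of $y_\ell$ finds a nearby flag, but in the worst case the flags may be clustered in a short subinterval of $x$, which greedy matching from position $1$ would largely miss. I will handle this by choosing $I$ to be not necessarily a prefix but a sliding-window subinterval of $x$ that contains a constant fraction of the flags, and restricting the greedy matching to $I$. The non-coarse property of $x$ (every subinterval of length at least $\eps^2 w$ is $\eps^2$-balanced) ensures the $1$-density within $I$ is close to $1/2$, so the $1^\ell$ blocks still cost about $2\ell$ characters each. The resulting excess will be $\Omega(\eps w)$ in the well-spread case and at least $\Omega(\eps^2 w)$ in the clustered case, both comfortably above $\eps^3 w$.
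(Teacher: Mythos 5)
Your high-level plan---pigeonhole to a short flag-dense subinterval of $x$, then exploit flags to beat the $1/2$ ratio, with preprocessing giving $O(w\log w)$ time---matches the paper's opening move. But there is a genuine gap in how you construct the common subsequence, and the per-block cost accounting on which the whole argument rests is actually backwards. You commit to producing the rigid string $(0^\ell 1^\ell)^K$ as a subsequence of $x_I$, harvesting each $0^\ell$-block cheaply near a flag and each $1^\ell$-block at generic density $\approx 2\ell$. However, an $\ell'$-flag region contains $\ell'-1$ ones scattered among more than $10(\ell'-1)$ zeros, so its density of ones is below $1/11$. Right after you pull $\ell$ zeros out of a flag region you are still sitting in a one-poor stretch, and the subsequent $1^\ell$-block costs up to $\Omega(\ell')$ characters, not $\approx 2\ell$, because you must traverse most of the remaining flag region before you have seen $\ell$ ones. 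A ``flag-assisted'' $0^\ell1^\ell$ block therefore typically costs \emph{more} than the generic $4\ell$, not less. As a concrete witness, take $x = 1(0^{10}1^{10})^m$, which is $1$-fine (balanced, with a $1$-flag at every tenth one). The longest subsequence of an interval $x_I$ of the form $(01)^K$ has length only about $|I|/10$: each block $0^{10}1^{10}$ yields at most one $01$ pair, and the alternation forces you to discard the other nine zeros and nine ones. So block-matching $(0^\ell1^\ell)^K$ cannot reach $\frac{|I|}{2}+\eps^3 w$ here. (A secondary issue: the ``$\approx 11\ell/10$ characters to harvest $\ell$ zeros'' claim is also not justified, since the $10/11$ zero-density holds for the flag region as a whole, not for its prefixes.)

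The idea you are missing is that the common subsequence need not be forced into the symmetric $(0^\ell1^\ell)^K$ shape---any subsequence of $y_\ell$ is allowed, and the flag structure is inherently asymmetric. The paper scans the ones of the flag-dense subinterval $x'$, appending a single $1$ for each one, and, upon reaching an $\ell'$-flag, additionally appending a run of $1+\lfloor 10(\ell'-1)\rfloor$ zeros (essentially \emph{all} the zeros in that flag region) while advancing $\ell'$ ones. The resulting $x''$ has $1$-runs of length $\ge 1$ and $0$-runs of length $\ge\ell$, which is exactly the shape flags reward; it is a subsequence of $x'$ by construction and has length at least $1(x') + \#\{\ell^+\text{-flags in }x'\} \ge \frac{|x'|}{2}+\eps^3 w$. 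That $x''$ is also a subsequence of $y_\ell$ is verified by a \emph{separate} counting step: each maximal run $b^{a_i}$ fits in $(0^\ell1^\ell)^{\lceil a_i/\ell\rceil}$, and the total number of $(0^\ell1^\ell)$-blocks consumed is at most $\eps w/(5\ell)$ because $|x'|\le 20\eps^2 w$ is tiny. Extracting all the flag zeros while paying only one symbol per one of $x'$ is where the excess over $\frac{|x'|}{2}$ actually comes from; your block-by-block matching throws most of that surplus away.
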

\begin{proof}
  The number of $\ell^+$-flags in $x$ is at least $\eps w$. By pigeonhole, there exists some interval $I'=[a,b]$ of length $4\eps^2 w$ containing at least $2\eps^3w$ many $\ell^+$-flags (the lost factor of two accounts for $2\eps^2 w$ possibly not evenly dividing $w$).
  Furthermore, since $x$ is not coarse, we may assume that each such $\ell^+$-flag is an $\ell'$-flag for some $\ell'\in [\ell, \eps^2 w)$.
  Thus, the interval $I=[a, b+11\eps^2w]$ has length $4\eps^2 w + 11\eps^2w \le 20\eps^2 w$ and $x'\defeq x_{I}$ has at least $2\eps^3 w$ many $\ell^+$-flags (we cannot simply take $x'=x_{I'}$, as bits at the right end of $I$ may be flags in $x$ but not in $x_{I'}$). Furthermore, $x'$ is $\eps^2$-balanced since it has length at least $\eps^2 w$ and is a substring of $x$, which is not coarse.
  We can find interval $I'$, and thus $I$ and $x'$, in time $O(w\log w)$, because (with preprocessing of the string's prefix sums) we can test whether a bit is an $\ell^+$-flag in $\log w$ time, so counting the flags in an interval can be done in $O(w\log w)$ time, and there are a constant number of intervals to check.
 
  Now we claim $\LCS(x',y_\ell)\ge \frac{|x'|}{2} + \eps^3|x|$.
  Construct a common subsequence $x''$ of $x'$ and $y'$ as follows:
  Initialize a counter $i=1$. While $i\le 1(x')$, 
  \begin{enumerate}
  \item Append a one to $x''$,
  \item If the $i$th one of $x'$ is an $\ell'$-flag for some $\ell'\ge \ell$, append $1+\floor{10(\ell'-1)}$ zeros to $x''$ and $i\gets i+\ell'$.
  \item Otherwise $i\gets i+1$.
  \end{enumerate}
  We claim the subsequence $x''$ has the following properties.
  \begin{itemize}
  \item $x''$ is a subsequence of $x'$.
  \item $x''$ is a subsequence of $y_\ell$.
  \item $x''$ has length at least $\frac{|x'|}{2} + \eps^3|x|$.
  \end{itemize}

  To see the first property, take the subsequence of $x'$ where the one added to $x''$ when the counter is $i$ is matched to the $i$-th one of $x'$, and the zeros added when the counter is $i$ are the zeros between the $i$-th and $(i+\ell')$-th one of $x'$, of which there are at least $1+\floor{10(\ell'-1)}$ because $i$ is an $\ell'$-flag in $x'$.

  To see the second property, first note that $\floor{10(\ell'-1)} + 1\ge \ell'$ for all positive integers $\ell'$, so all runs of zeros in $x''$ have length at least $\ell$.
  Write $x''=1^{a_1}0^{a_2}1^{a_3}0^{a_4}\cdots 1^{a_{2k+1}}$, where all $a_{2i}\ge \ell$ and $a_{2i-1}\ge 1$ for all positive integers $i$ (except possibly $a_{2k+1}$, which may be 0).
  Notice that $1^{a_i}$ and $0^{a_i}$ are each subsequences of $(0^{\ell}1^{\ell})^{r_i}$, where $r_i\defeq \ceil{a_i/\ell}\le \frac{a_i}{\ell} + 1$.
  Thus, $x''$ is a subsequence of $(0^{\ell}1^{\ell})^{r}$ for $r\defeq r_1+\cdots+r_{2k+1}$.
  Thus, we have
  \begin{equation*}
    r \le r_1+\cdots+r_{2k+1}
    \le \frac{a_1+\cdots+a_{2k+1}}{\ell} + (2k+1)
    < \frac{4(a_1+\cdots+a_{2k+1})}{\ell}
    \le \frac{80\eps^2 w}{\ell} < \frac{\eps w}{5\ell},
  \end{equation*}
  proving that $x''$ is a subsequence of $y_\ell$. In the third inequality above, we used $a_2+a_4+\cdots+a_{2k}\ge k\ell$, so $2k+1\le 3k < \frac{3(a_1+\cdots+a_{2k+1})}{\ell}$.
  In the fourth inequality, we used $a_1+\cdots+a_{2k+1}=|x''| \le |x'| \le 20 \eps^2w$.

  To see the third property, notice that $|x''|-i$ only changes when a run of zeros is added to $x''$. If this run is added for an $\ell'$-flag at $i$ in $x'$, then difference $|x''|-i$ increases by $1+\floor{9(\ell'-1)} \ge \ell'$, while the total number of flags skipped over is at most $\ell'$.
  By induction on $i$, after every while-loop iteration, we have
  \begin{align}
  |x''|-i \ge \#\{\ell^+\text{-flags in $[i]$}\}.
  \end{align}
  so the total length of $x''$ at the end is at least 
  \[
  1(x')+\#\{\ell^+\text{-flags in $x'$}\} \ge \left(\frac{1}{2}-\eps^2\right)|x'|+2\eps^3w \ge \frac{|x'|}{2} + \eps^3|x|,
  \]
  as desired. The first inequality above follows from the fact that $x'$ is $\eps^2$-balanced, and the second from $|x'| \le 20\eps^2 w$.
\end{proof}

\subsection{Proof of the Algorithmic Structure Lemma}
\begin{proof}[Proof of Lemma~\ref{lem:struct}]
  Let $\delta_{code}=\eps^4/2$. We define the properties $P_t$ and $Q_t$ based on the types in Definition~\ref{def:type}.
  For convenience, we index them not as $P_1,P_2,\dots,P_T$, but rather as $P_{\ell,0}, P_{\ell, 1}, P_{\ell}$ for $\ell\le w$ a power of 2, for a total of roughly $T\sim 3\log w$ properties. 
\begin{itemize}
  \item If $\ell \ge \eps^2 w$ and $b\in\{0,1\}$, then $P_{\ell, b}$ is the property that $x$ is $\ell$-coarse, and its imbalanced interval of length $\ell$ is imbalanced in the direction of $b$-bits (i.e. has more $b$'s than $\bar{b}$'s).
  \item If $\ell \ge \eps^2 w$ and $b\in\{0,1\}$, then $Q_{\ell,b}$ is the property that $y$ has at least $\big(\frac{1+\eps^2}{2}\big)\ell$ $b$-bits.
  \item If $\ell < \eps^2 w$, then $P_\ell$ is the property that $x$ is $\ell$-fine.
  \item If $\ell < \eps^2 w$, then $Q_\ell$ is the property that $y$ contains $y_\ell = (0^{\ell}1^{\ell})^{\eps w/(5\ell)}$ as a subsequence.
  \end{itemize}

  We now verify the conditions of Lemma~\ref{lem:struct}.
  \begin{enumerate}
  \item \emph{For every length-$w$ string $x$, there exists a $t$ such that $x$ has property $P_t$.}
  
  This follows immediately from Lemma~\ref{lem:comb-struct}.

  \item \emph{If $\LCS(x,y)\ge (1-\delta_{code})|x|$ and $x$ has property $P_t$, then $y$ has property $Q_t$.}

  First suppose $\ell \ge \eps^2 w$, $b\in \{0,1\}$, and $x$ has property $P_{\ell,b}$. Then $x$ has a substring $x_I$ of length at least $\eps^2 w$ with at least $(\frac{1}{2}+\eps^2)\ell$ $b$-bits.
  The longest common subsequence of $x_I$ and $y$ has at least $|I|-\delta_{code}w$ of the bits of $x_I$, so $y$ has at least $(\frac{1}{2}+\eps^4)\ell - \delta_{code}w \ge (\frac{1+\eps^4}{2})\ell$ $b$-bits, satisfying property $Q_{\ell,b}$.

  If $x$ has property $P_{\ell}$ for $\ell < \eps^2 w$, $x$ is $\ell$-fine. By Lemma~\ref{lem:long-1}, $y$ has property $Q_{\ell}$.

  \item \emph{For every $t$, property $Q_t$ is hereditary, meaning that if $y$ has $Q_t$ and $y$ is a subsequence of $y'$, then $y'$ has $Q_t$.}

  This follows from the definition of $Q_t$ and that being a subsequence is a transitive relation.

  \item \emph{For every $t$, property $P_t$ can be tested in time $O(w\log w)$, and property $Q_t$ can be tested in time $O(w)$ on substrings of a string $y$, after $O(|y|\log |y|)$ prepreocessing.}

  Testing the $P_t$'s can be done in $O(w\log w)$ because, after $O(w)$ preprocessing by storing all prefix sums, we can check whether any particular index is an $\ell$-flag or not in $O(1)$ time, and for any particular property $P_t$, we need to check at most $O(w\log w)$ flags.

  To test $Q_t$, first note that if we are working with a coarse property $Q_{\ell,b}$, this can be tested in $O(1)$ time after preprocessing prefix sums. 
  To test a fine property $Q_\ell$, preprocess the string $y$ as follows: for every index $j \in\{ 0,1,\dots,|y|\}$ and bit $b\in\{0,1\}$, compute $\nxt_{b,\ell}(j)$, the smallest index $j'$ such that the substring $y_{[j,j']}$ has at least $\ell$ bits equal to $b$.
  For any $j$ and $b$, $\nxt_{b,\ell}(j)$ can be computed by a binary search in $\log(|y|)$ time, so the preprocessing takes time $O(|y|\log |y|)$.
  Now property $Q_t$ can be tested on a substring $y_J$ in $O(w)$ time by evaluating $\nxt_{1,\ell}(\nxt_{0,\ell}(\cdots \nxt_{1,\ell}(\nxt_{0,\ell}(\imin J))\cdots ))$, where there are $\varepsilon w / (5\ell)$ calls to each of $\nxt_{0,\ell}$ and $\nxt_{1,\ell}$, and checking if the result is at most $\imax J$.

  \item \emph{If $x$ has property $P_t$, $|x| = w$, and $y$ has property $Q_t$, then there exists an interval $I\subset[w]$ such that $\LCS(x_I,y)\ge \frac{|I|}{2}+\delta_{code} w$.
  Furthermore, given $x$ and $t$, the interval $I$ and the promised common subsequence of $x_I$ and $y$ can be chosen independent of $y$, and both can be found in time $O(w\log w)$.}

  First suppose $x$ has property $P_{\ell,b}$ and $y$ has property $Q_{\ell,b}$ for $\ell \ge \eps^2 w$ and $b \in \{0,1\}$.
  Then $x$ has a substring $x_I$ of length $\ell$ with at least $(\frac{1}{2}+\eps^2)\ell$ $b$-bits and $y$ has at least $(\frac{1+\eps^2}{2})\ell$ $b$-bits, so $\LCS(x_I,y)\ge (\frac{1+\eps^2}{2})\ell \ge \frac{|I|}{2} + \frac{\eps^4}{2} w$, as desired.
  Furthermore, $I$ can be found in linear time by a linear sweep, and the common subsequence is simply $b^{(\frac{1+\eps^2}{2})\ell}$ as desired.
  
  Now suppose $x$ has property $P_{\ell}$ for $\ell < \eps^2 w$, and $y$ has property $Q_{\ell}$. Thus, $x$ is $\ell$-fine and $y$ contains $y_\ell$ as a subsequence. By Lemma~\ref{lem:long-2}, there exists an interval $I$ with $\LCS(x_I,y)\ge \frac{|I|}{2} + \eps^3|x|$, as desired.
  Furthermore, also by Lemma~\ref{lem:long-2}, $I$ and the common subsequence of $x_I$ and $y$ can be computed from $x$ and $t$ in time $O(w\log w)$, independent of $y$, as desired.
  \end{enumerate}

  This proves Lemma~\ref{lem:struct}.
\end{proof}


\section{Almost-linear time algorithm} \label{sec:alg}

We now give the almost-linear time algorithm for the ``equally balanced'' case, which implies our main result. Specifically, we prove the following (see Section~\ref{app:imbalanced} for how Theorem~\ref{thm:main} follows from Theorem~\ref{thm:main-2-sketch}).

\begin{theorem*}[Theorem~\ref{thm:main-2-sketch}, restated]
  \thmmaintech
\end{theorem*}

The algorithm is given in Algorithm~\ref{alg:dp} with the covering step given in Algorithm~\ref{alg:cover}.
The rest of this section proves the correctness.

\subsection{Parameters and notation conventions}
Throughout $x$ and $y$ are the input strings satisfying $0(x) = 1(x) \le \min(0(y),1(y))$, and throughout $n = \max(|x|,|y|)$.
Let $w$ be the closest power of 2 to $\frac{|x|}{\log |x|}$.
We may assume by deleting a negligible number of bits from $x$ and $y$ that $|x|$ and $|y|$ are multiples of $w$.
Let $m_x\defeq \frac{|x|}{w} \sim \log |x|$ and $m_y \defeq \frac{|y|}{w} \sim \frac{|y|\log |x|}{|x|}$.
Throughout, we always denote intervals for string $x$ by the letter $I$, and intervals for string $y$ by the letter $J$.
By abuse of notation, we let intervals $I$ (possibly with decorations) denote the substring $x_I$, and we let intervals $J$ denote the substring $y_J$.

Let $\varepsilon>0$ be such that $O(n^{1+\varepsilon})$ is the desired runtime.
Let $\delta_{code}$ be the constant from Lemma~\ref{lem:struct}. 
Let $\delta_{eq}=\delta_{eq}(\frac{\varepsilon}{2})$ be the constant from Theorem~\ref{thm:rs20}.
Let $\alpha, \beta,\gamma,\delta,\theta$ be constant powers of $1/2$ that satisfy $\min(\delta_{eq},\delta_{code})\ge \alpha\gg \beta \gg \gamma\gg \delta=\theta$.
That is, $\delta$ is sufficiently small compared to $\gamma$, which is sufficiently small compared to $\beta$, which is sufficiently small compared to $\alpha$.
For completeness, we note it suffices to take $\theta=\delta =\gamma^8, \gamma=\frac{1}{2}\beta^2, \beta = \alpha^2$.
We did not try to optimize our constants.
We give the following intuition for the above parameters.
\begin{itemize}
\item $\alpha$ lower bounds the LCS advantage gained from both algorithmic structure lemma rectangles and nearly-square rectangles.
\item $\beta$ is the ``nearly-square'' parameter: in the optimal LCS, intervals $I$ are called {\it nearly-square} if they get matched to intervals of length $\le (1+\beta)|I|$.
We may assume at most $\beta^2$ fraction of intervals are nearly-square or else the nearly-square rectangles (together with ``trivial rectangles'') give a $(\frac{1}{2}+\poly\beta)$-approximation by applying $\EqLCS$ to each of them. 
\item $\gamma$ is the ``imbalanced'' parameter and discretization parameter for $I$-intervals: we may assume most $\gamma w$-length intervals to be $\gamma$-balanced, or else the ``Trivial rectangles'' give a $(\frac{1}{2}+\poly\gamma)$-approximation. We also round all $I$-intervals so that they are $\gamma w$-aligned. $\gamma$ is small enough so that the effect of this rounding is negligible.
\item $\delta$ is the overall LCS approximation advantage: we obtain a $(\frac{1 + \delta}{2})$-approximation. We assume $\LCS(x,y)\ge (1-\delta)|x|$, or else $\Trivial$ gives a $(\frac{1+\delta}{2})$-approximation.
\item $\theta$ is the discretization parameter for $J$-intervals: we round all $J$-intervals so that they are $\theta w$ aligned. $\theta$ is small enough so that the effect of this rounding is negligible. We take $\theta$, the $J$-interval discretization, to be smaller than $\gamma$, the $I$-interval discretization, so that the gain from matching ``Trivial rectangles'' is larger than the loss due to discretization.
\end{itemize}

\begin{algorithm}
  \caption{$\Covering$}
  \label{alg:cover}
  \KwIn{$x,y$ such that $1(x)=0(x) \le \min(1(y),0(y))$}
  \KwOut{A set $\cR$ of certified rectangles $(I\times J,\kappa)$ where $I$ is $\gamma w$-aligned, $J$ is $\theta w$-aligned, and $\LCS(I,J)\ge \kappa$.}
  
  \tcp*[l]{Trivial rectangles}
  $\cR \gets \{([0,|x|]\times [0,|y|], \Trivial([0,|x|],[0,|y|]))\}$\;
  \For{all $\gamma w$-aligned intervals $I$}{
    \For{$j=0,\dots,m_y/\theta$}{
      $J\gets $ the smallest $\theta w$-aligned interval s.t. $\imax J = \theta w j$ and $\Trivial(I,J)\ge (\frac{1}{2}-\sqrt{\delta}) |I|$\;
      \label{alg:easy-binarysearch-1}
      \If{$J$ exists}{
        $\cR\gets \cR\cup (I\times J, \Trivial(I,J))$\;
      }
      $J\gets $ the smallest $\theta w$-aligned interval s.t. $\imax J = \theta w j$ and $\Trivial(I,J)\ge (\frac{1}{2}+\frac{\gamma}{2}) |I|$\;
      \label{alg:easy-binarysearch-2}
      \If{$J$ exists}{
        $\cR\gets \cR\cup (I\times J, \Trivial(I,J))$\;
      }
    }
    \For{$\theta w$-aligned intervals $J$ with $|J|=|I|$} {
      $\cR\gets \cR\cup (I\times J,\Trivial(I,J))$\;
      \label{alg:trivial-square}
    }
  }

  \tcp*[l]{Nearly-square rectangles}
  \For{intervals $I\in\mathcal{I}_w$}{
    \For{$\theta w$-aligned intervals $J$ with $|J|\in[(1-\alpha)w,(1+\alpha)w]$}{
      $\cR\gets \cR\cup (I\times J, \EqLCS(I,J))$\;
      \label{alg:square}
    }
  }

  \tcp*[l]{Algorithmic structure lemma rectangles}
  \For{$i=1,\dots,m_x$}{
    $I\gets [(i-1)w,iw]$\;
    $t\gets \getP(x_I)$\;
    $I' \gets \getI(x_{I}, t)$\;
    \For{$j=1,\dots,m_y/\theta$}{
      $J\gets$ smallest interval such that $\imax J = \theta wj$,  $|J|\ge (1+0.9\beta)w$, and $\isQ(y_J,t)$\;
      \label{alg:binarysearch}
      \If{$J$ exists}{
        $\cR\gets \cR \cup (I'\times J, \frac{|I'|}{2} + \alpha w)$\;
      }
    }
  }

  \Return $\cR$.
\end{algorithm}

\begin{algorithm}
  \caption{$\FullLCS$}
  \label{alg:dp}
  \KwIn{$x,y$ such that $1(x)=0(x) \le \min(1(y),0(y))$}
  \KwOut{A $(\frac{1+\delta}{2})$-approximation of LCS}

  $\cR\gets \Covering(x,y)$\;
  \tcp*[l]{$\DP[i][j]$ lower bounds $LCS([0,\gamma wi],[0,\theta w j])$}
  $\DP \gets [0,m_x/\gamma]\times [0,m_y/\theta]$ array, initialized to 0\;
  \For{$i=1,\dots,m_x/\gamma$}{
    \For{$j=1,\dots,m_y/\theta$}{
      $\DP[i][j]\gets \max(\DP[i-1][j], \DP[i][j-1])$\;
      \For{$(I\times J,\kappa)\in\cR$ with $\imax I = (\gamma w)i, \imax J = (\theta w)j$,}{
        $\DP[i][j]\gets \max(\DP[i][j], \DP[(\imin I)/(\gamma w)][(\imin J)/(\theta w)] +\kappa)$\;
      }
    }
  }
  \Return $\DP[m_x/\gamma][m_y/\theta]$.
\end{algorithm}

\subsection{Runtime}
We now analyze the runtime. We re-emphasize, as we did in the proof sketch, that we need to be careful about factors $m_y$ in our runtime, but not powers of $m_x$: $m_x$ is only $\log n$, but $m_y$ is roughly $|y|/|x|$, which can be a positive power of $|y|$.

We first run a $O(|y|)$-time preprocessing of prefix-sums that allows us to query zero-counts and one-counts in any interval in either $x$ or $y$ in $O(1)$ time.
We also preprocess string $y$ so that we can test every property $Q_t$ efficiently on substrings of $y$; for each $t$ this takes $O(|y| \log |y|)$ preprocessing time, for a total preprocessing time that is $O(|y| \log ^2 |y|)$.

The runtime of $\Covering$ is dominated by calls to $\Trivial, \EqLCS, \getP,\isQ,$ and $\getI$.
Note that in Lines~\ref{alg:easy-binarysearch-1} and \ref{alg:easy-binarysearch-2}, $J$ can be computed by a binary search over a search space of size $m_y/\theta$, and thus can be found in $\log(m_y/\theta)$ calls to Trivial, which each take $O(1)$ time with preprocessing.
Thus, the first nested loop takes $O(m_x^2m_y\log m_y) \le \tilde O(|y|/|x|)$ time.

The second nested loop has $O(m_xm_y)$ calls to EqLCS, each of which runs in $O(|x|^{1+\frac{\varepsilon}{2}})$ time, and thus takes $O(n^{1+\varepsilon})$ time.

For the third nested loop, the number of calls to $\getP$ and $\getI$ is $m_x$, and each run in time $O(w\log w)$, so the runtime is at most $\tilde O(|x|)$.
Because the property $Q_t$ is hereditary, we can compute $J$ in Line~\ref{alg:binarysearch} by binary search with $\log(m_y/\theta)\le O(\log |y|)$ calls to $\isQ$, which runs in time $O(w)$ (the binary search crucially saves us a factor of roughly $|y|/|x|$ in the runtime).
The number of binary searches is $O(m_xm_y)$, so in total the runtime of this step is $O(m_xm_y\cdot \log(m_y)\cdot w) \le O(|y|\log^2 |y|)$.

There are $O(m_xm_y)$ rectangles, and the dynamic programming has $O(m_xm_y)$ states.
The runtime of the dynamic programming is thus $O(m_xm_y) \le \tilde O(|y|/|x|)$, so the total runtime is thus $O(|y|^{1+\varepsilon})$.

\subsection{Correctness proof, high level overview}

We need to show two things about our output, $\FullLCS(x,y)$.
\begin{align}
  \LCS(x,y)\ge \FullLCS(x,y)
  \label{eq:cor-easy}\\
  \frac{1+\delta}{2}\LCS(x,y)\le \FullLCS(x,y)
  \label{eq:cor-apx}
\end{align}
Equation \eqref{eq:cor-easy} is the easier, which we prove at the end of this section.
Equation \eqref{eq:cor-apx} is the harder direction.
We prove it in two cases, based on the following definition.
\begin{definition}
  We say a pair of binary strings $(x,y)$ is \emph{good}  if 
  \begin{enumerate}
    \item \label{i:lin-1} $\LCS(x,y)\ge (1-\delta)|x|$,
    \item \label{i:lin-2} For at least $(1-\gamma)m_x$ intervals $I\in \cI_w$, every $I'\in\cI_{\gamma w}(I)$ is $\gamma$-balanced, and
    \item \label{i:lin-3} At least $1-\beta^2$ of $I\in\cI_w$ satisfy $|J^\tau_I| \ge (1+\beta)|I|$.
  \end{enumerate}
  We call a pair \emph{bad} if it is not good.
  \label{def:nice}
\end{definition}
Obviously, we cannot determine in almost-linear time if an input is good or bad, since that involves computing $\LCS(x,y)$.
However, our analysis of the performance of $\FullLCS(x,y)$ differs depending on whether the input is good or bad.
In Section \ref{ssec:not-nice}, we prove \eqref{eq:cor-apx} when the input is bad, and in Section \ref{ssec:nice}, we prove \eqref{eq:cor-apx} when the input is good.

In both the easy direction \eqref{eq:cor-easy} and the hard direction \eqref{eq:cor-apx}, we use the following characterization of the output of the dynamic programming in Algorithm~\ref{alg:dp}. Recall a collection of rectangles is called an \emph{ordered} collection if every pair $(I, J)$ and $(I',J')$ is comparable (i.e. either $I<I'$ and $J<J'$ or $I > I'$ and $J>J'$).
\begin{lemma}
  The output of $\FullLCS$ is the maximum, over all ordered collections of certified rectangles $(I_1\times J_1,\kappa_1),\dots, (I_\ell\times J_\ell,\kappa_\ell)$, of $\kappa_1+\kappa_2+\cdots+\kappa_\ell$.
\label{lem:alg-0}
\end{lemma}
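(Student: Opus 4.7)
The natural approach is induction on $i + j$, establishing the slightly stronger claim that for every $0 \le i \le m_x/\gamma$ and $0 \le j \le m_y/\theta$, the value $\DP[i][j]$ equals the maximum of $\kappa_1 + \cdots + \kappa_\ell$ over all ordered collections $(I_1 \times J_1, \kappa_1), \ldots, (I_\ell \times J_\ell, \kappa_\ell)$ of certified rectangles in $\cR$ satisfying $\imax I_\ell \le i\gamma w$ and $\imax J_\ell \le j\theta w$ (with the empty collection summing to $0$). Setting $i = m_x/\gamma, j = m_y/\theta$ then recovers the lemma, since every certified rectangle produced by $\Covering$ lies inside $[0,|x|]\times[0,|y|]$.

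The base case $i = 0$ or $j = 0$ is immediate: $\DP[i][j] = 0$ by initialization, and the only admissible collection is empty, because every certified rectangle in $\cR$ has both $I$ and $J$ of positive length. For the inductive step, write $M_{i,j}$ for the RHS of the claim. The inequality $\DP[i][j] \le M_{i,j}$ follows by checking each of the three options in the DP recursion: $\DP[i-1][j]$ and $\DP[i][j-1]$ invoke the inductive hypothesis on regions whose admissible collections are a subset of those at $(i,j)$, and for any rectangle $(I\times J,\kappa)\in\cR$ with $\imax I = i\gamma w$, $\imax J = j\theta w$, an optimal collection at $((\imin I)/(\gamma w),(\imin J)/(\theta w))$ can be extended by $(I\times J,\kappa)$ to form an admissible ordered collection at $(i,j)$ with total weight $\DP[(\imin I)/(\gamma w)][(\imin J)/(\theta w)] + \kappa$.

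For the reverse inequality $M_{i,j} \le \DP[i][j]$, take any ordered collection achieving $M_{i,j}$; if it is empty, the bound is trivial. Otherwise, let $(I_\ell \times J_\ell, \kappa_\ell)$ be its maximal element under the rectangle order. Since all certified rectangles output by $\Covering$ have $I$-endpoints that are multiples of $\gamma w$ and $J$-endpoints that are multiples of $\theta w$, we split into three exhaustive cases: (a) $\imax I_\ell \le (i-1)\gamma w$, whence the whole collection is admissible at $(i-1, j)$ and induction gives $M_{i,j}\le \DP[i-1][j]\le \DP[i][j]$; (b) symmetrically $\imax J_\ell \le (j-1)\theta w$, giving $M_{i,j}\le \DP[i][j-1]\le \DP[i][j]$; or (c) $\imax I_\ell = i\gamma w$ and $\imax J_\ell = j\theta w$, in which case the remaining $\ell-1$ rectangles form an admissible ordered collection at $((\imin I_\ell)/(\gamma w), (\imin J_\ell)/(\theta w))$, so induction yields $M_{i,j} \le \DP[(\imin I_\ell)/(\gamma w)][(\imin J_\ell)/(\theta w)] + \kappa_\ell \le \DP[i][j]$, the last inequality holding because the recursion considers this particular rectangle.

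There is no real obstacle here; the only point requiring care is the alignment argument in the case split, which is what makes the third recursion case exactly hit the ``maximal rectangle'' scenario. The proof is a direct verification that the DP correctly computes the maximum-weight ordered packing of certified rectangles.
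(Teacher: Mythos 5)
Your proof is correct and is essentially the same argument the paper gives — the paper states the induction on $\DP[i][j]$ in one sentence and highlights the alignment of $I$ (to $\gamma w$) and $J$ (to $\theta w$) as the key fact, which is exactly the ingredient your case split (a)/(b)/(c) relies on. You have simply spelled out the details of both inequalities and the exhaustiveness of the case analysis, which the paper leaves to the reader.
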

\begin{proof}
  By induction, it follows that $\DP[i][j]$ is the maximum, over all ordered collections of certified rectangles $(I_1\times J_1,\kappa_1),\dots, (I_\ell\times J_\ell,\kappa_\ell)$ contained in $[0,\gamma wi]\times [0,\theta w j]$, of $\kappa_1+\kappa_2+\cdots+\kappa_\ell$.
  Here, we use that, for all rectangles $I\times J$ in $\cR$, interval $I$ is $\gamma w$-aligned and interval $J$ is $\theta w$-aligned.
\end{proof}
The next lemma asserts that certified rectangles are indeed ``certified.''
\begin{lemma}
  Every certified rectangle $(I\times  J,\kappa)$ in $\Covering$ satisfies $\LCS(I,J)\ge \kappa$.
\label{lem:alg-1}
\end{lemma}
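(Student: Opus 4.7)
The plan is to verify $\LCS(I,J)\ge \kappa$ separately for each of the three kinds of certified rectangles produced in Algorithm~\ref{alg:cover}: trivial rectangles, nearly-square rectangles, and algorithmic structure lemma rectangles. Since every element added to $\cR$ falls into exactly one of these three groups, a direct case analysis suffices; there is no global argument to make.

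For every trivial rectangle — the initial rectangle $[0,|x|]\times[0,|y|]$, the two families produced by the binary searches on Lines~\ref{alg:easy-binarysearch-1} and~\ref{alg:easy-binarysearch-2}, and the equal-length rectangles added on Line~\ref{alg:trivial-square} — the certificate is $\kappa = \Trivial(I,J)$. Since $\Trivial(I,J)$ is realized by an honest common all-zeros or all-ones subsequence of $x_I$ and $y_J$, it is immediately a lower bound on $\LCS(x_I,y_J)$, so no additional work is needed in this case.

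For the nearly-square rectangles added on Line~\ref{alg:square}, the plan is to check that the hypotheses of Theorem~\ref{thm:rs20} are met and then quote the theorem. Here $|I|=w$ because $I\in\cI_w$, and $|J|\in[(1-\alpha)w,(1+\alpha)w]$ by the range in the inner loop. The parameter setup enforces $\alpha\le \delta_{eq}$, so Theorem~\ref{thm:rs20} applies with target length $n=w$, and its output $\EqLCS(I,J)$ is a $(\tfrac{1}{2}+\delta_{eq})$-approximation of $\LCS(I,J)$; in particular $\EqLCS(I,J)\le \LCS(I,J)$, which is exactly what we need.

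The main case is the algorithmic structure lemma rectangles, but thanks to Section~\ref{sec:struct} this too reduces to a direct invocation. By construction the algorithm sets $t=\getP(x_I)$, which by definition guarantees that $x_I$ has property $P_t$, and then defines $I'=\getI(x_I,t)$. The interval $J$ in the inner loop is chosen precisely so that $\isQ(y_J,t)$ holds, i.e.~$y_J$ satisfies $Q_t$. Applying item~5 of Lemma~\ref{lem:struct} — together with Remark~\ref{rem:struct} to ensure $I'$ is $\gamma w$-aligned — yields $\LCS(x_{I'},y_J)\ge \tfrac{|I'|}{2}+\delta_{code}w$, and since $\alpha\le \delta_{code}$ this is at least the certified value $\kappa=\tfrac{|I'|}{2}+\alpha w$. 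The only thing to check carefully is that the parameter chain $\alpha\le \min(\delta_{eq},\delta_{code})$ is satisfied, which is exactly how the constants were set; all the combinatorial heavy lifting is already packaged inside Lemma~\ref{lem:struct}, so I do not anticipate any serious obstacle in this proof.
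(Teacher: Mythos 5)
Your proposal is correct and takes essentially the same approach as the paper's proof: a direct case analysis over the three kinds of certified rectangles, using the definition of $\Trivial$, the output guarantee of $\EqLCS$, and item~5 of Lemma~\ref{lem:struct} (with Remark~\ref{rem:struct}). The paper dispatches the first two cases with a single sentence ("true by definition"), whereas you spell out the hypothesis checks; the substance is identical.
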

\begin{proof}
  This is true of all rectangles certified by Trivial and EqLCS by definition.
  The algorithmic structure lemma rectangles $(I'\times J,\kappa)$ for $\kappa=\frac{|I'|}{2}+\alpha w$ satisfy $\LCS(I',J)\ge \kappa$ by Lemma~\ref{lem:struct}.
\end{proof}
The easy direction \eqref{eq:cor-easy} follows easily from Lemma~\ref{lem:alg-0} and Lemma~\ref{lem:alg-1}.
\begin{corollary}
  $\FullLCS(x,y)\le \LCS(x,y)$
\end{corollary}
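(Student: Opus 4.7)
The plan is to chain together Lemma~\ref{lem:alg-0} and Lemma~\ref{lem:alg-1} via the observation that an ordered collection of certified rectangles can be turned into a single common subsequence of $x$ and $y$. Concretely, by Lemma~\ref{lem:alg-0}, $\FullLCS(x,y)$ equals $\sum_{k=1}^{\ell} \kappa_k$ for some ordered collection $(I_1\times J_1, \kappa_1), \dots, (I_\ell\times J_\ell, \kappa_\ell)$ of certified rectangles produced by $\Covering$. Hence it suffices to exhibit a common subsequence of $x$ and $y$ of length at least $\sum_k \kappa_k$.

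The key observation is that ``ordered'' means $\imax I_k \le \imin I_{k+1}$ and $\imax J_k \le \imin J_{k+1}$ for all $k$, so the substrings $x_{I_1}, x_{I_2}, \ldots, x_{I_\ell}$ appear in this order as disjoint blocks of $x$, and similarly the $y_{J_k}$ are disjoint blocks appearing in order within $y$. For each $k$, pick any common subsequence $s_k$ of $x_{I_k}$ and $y_{J_k}$ of length exactly $\LCS(x_{I_k}, y_{J_k})$. By Lemma~\ref{lem:alg-1}, $|s_k| \ge \kappa_k$. The concatenation $s_1 s_2 \cdots s_\ell$ is then a subsequence of $x$ (because each $s_k$ is a subsequence of $x_{I_k}$ and the $I_k$ are ordered along $x$) and similarly a subsequence of $y$, so it is a common subsequence of $x$ and $y$ of length $\sum_k |s_k| \ge \sum_k \kappa_k = \FullLCS(x,y)$. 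Therefore $\LCS(x,y) \ge \FullLCS(x,y)$.

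There is essentially no obstacle here: both Lemma~\ref{lem:alg-0} and Lemma~\ref{lem:alg-1} have already been proved, and the remaining content is the elementary fact that a common subsequence can be assembled blockwise from common subsequences of an ordered sequence of rectangles. The only thing to be mildly careful about is checking that all rectangles emitted by $\Covering$ are genuinely certified, which Lemma~\ref{lem:alg-1} handles by separately inspecting the Trivial, Nearly-square, and Algorithmic Structure Lemma rectangles; this is already done.
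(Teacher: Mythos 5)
Your proof is correct and follows the same route as the paper: invoke Lemma~\ref{lem:alg-0} to write $\FullLCS(x,y)$ as $\sum_k \kappa_k$ over an ordered collection, apply Lemma~\ref{lem:alg-1} to get $\kappa_k \le \LCS(x_{I_k}, y_{J_k})$, and conclude $\sum_k \LCS(x_{I_k}, y_{J_k}) \le \LCS(x,y)$. The paper leaves the last step (concatenating blockwise common subsequences along the ordered rectangles) implicit, while you spell it out, but the argument is identical.
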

\begin{proof}
  By Lemma~\ref{lem:alg-0}, the output of $\FullLCS(x,y)$ equals $\kappa_1+\cdots+\kappa_\ell$ for some ordered collection of certified rectangles $(I_1\times J_1,\kappa_1),\dots, (I_\ell\times J_\ell,\kappa_\ell)$.
  Then, by Lemma~\ref{lem:alg-1}, we have 
  \begin{align}
    \FullLCS(x,y) 
    = \sum_{i=1}^{\ell} \kappa_i
    \le \sum_{i=1}^{\ell} \LCS(I_i,J_i)
    \le \LCS(x,y),
  \end{align}
  as desired.
\end{proof}

\subsection{Proof of (\ref{eq:cor-apx}) for bad inputs}
\label{ssec:not-nice}

We show that \eqref{eq:cor-apx} holds in the bad case by conditioning on which case of Definition~\ref{def:nice} is violated.

\paragraph{Subcase 1: Trivial.}
In the first subcase, we suppose $\LCS(x,y)\le (1-\delta)|x|$.
\begin{lemma}
  If $\LCS(x,y)\le (1-\delta)|x|$, then \eqref{eq:cor-apx} holds.
\label{lem:not-nice-1}
\end{lemma}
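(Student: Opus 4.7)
The plan is to observe that the initial ``trivial rectangle'' inserted into $\cR$ on the first line of $\Covering$ already certifies a lower bound of $|x|/2$ on $\FullLCS(x,y)$, and that the hypothesis $\LCS(x,y)\le (1-\delta)|x|$ forces $\frac{1+\delta}{2}\LCS(x,y)$ to be strictly less than $|x|/2$. So the inequality will follow with room to spare, without needing to invoke any of the more sophisticated certified rectangles.

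First I would unpack $\Trivial$ under the standing hypothesis $0(x)=1(x)\le\min(0(y),1(y))$. Since $0(x)=1(x)=|x|/2$ and both $0(y)$ and $1(y)$ are at least $|x|/2$, we get
\[
\Trivial(x,y)=\max\bigl(\min(0(x),0(y)),\min(1(x),1(y))\bigr)=|x|/2.
\]
Next I would note that $\Covering$ puts the certified rectangle $([0,|x|]\times[0,|y|],\,|x|/2)$ into $\cR$ on its first line, and that this rectangle is $\gamma w$-aligned on the $x$-side and $\theta w$-aligned on the $y$-side (since $|x|$ and $|y|$ are multiples of $w$, and $\gamma w,\theta w$ both divide $w$ as $\gamma,\theta$ are powers of $1/2$). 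Thus the singleton collection consisting of this one rectangle is an ordered collection of certified rectangles, and Lemma~\ref{lem:alg-0} gives
\[
\FullLCS(x,y)\;\ge\;|x|/2.
\]

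Finally I would combine this bound with the case hypothesis. Using $\LCS(x,y)\le (1-\delta)|x|$,
\[
\frac{1+\delta}{2}\LCS(x,y)\;\le\;\frac{(1+\delta)(1-\delta)}{2}|x|\;=\;\frac{1-\delta^2}{2}|x|\;<\;\frac{|x|}{2}\;\le\;\FullLCS(x,y),
\]
which is exactly \eqref{eq:cor-apx}. There is really no obstacle to speak of in this subcase; the only point to be careful about is verifying that the trivial rectangle is admissible (i.e. that its corners lie on the $\gamma w$- and $\theta w$-grids), which follows immediately from the divisibility setup in the ``Parameters and notation conventions'' section.
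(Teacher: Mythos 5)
Your proof is correct and follows exactly the paper's argument: the initial trivial rectangle certifies $\FullLCS(x,y)\ge |x|/2$, and the hypothesis $\LCS(x,y)\le(1-\delta)|x|$ gives $\frac{1+\delta}{2}\LCS(x,y)\le\frac{1-\delta^2}{2}|x|<|x|/2$. The paper states this more tersely but the content and route are identical.
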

\begin{proof}
  We always have $\Trivial([0,|x|],[0,|y|]) \ge \frac{|x|}{2}$ as $\frac{|x|}{2}=1(x)=0(x)\le \min(1(y),0(y))$.
  Hence, we have $\FullLCS(x,y)\ge \frac{|x|}{2} \ge \frac{1+\delta}{2}\LCS(x,y)$.
\end{proof}

\paragraph{Subcase 2: Locally imbalanced.}
In the next subcase, we assume $\LCS(x,y)\ge (1-\delta)|x|$ and that a nontrivial fraction of intervals are imbalanced.
Since $x$ and $y$ have such a long LCS, we know that most intervals in $x$ appear nearly unmodified in $y$:
\begin{lemma}
If $w'$ is a positive integer that divides $|x|$ and $\LCS(x,y)\ge (1-\delta)|x|$, then at most $\sqrt{\delta}\frac{|x|}{w'}$ intervals $I_i \in \cI_{w'}$ satisfy $\LCS(I_i,J^\tau_{I_i})>(1-\sqrt{\delta})|I_i|$.
\label{lem:bad}
\end{lemma}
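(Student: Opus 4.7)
The plan is to prove the lemma by a Markov-style counting argument on the per-interval LCS deficits $d_i \defeq |I_i| - \LCS(x_{I_i}, y_{J^\tau_{I_i}}) \ge 0$ indexed by $I_i \in \cI_{w'}$, using the global LCS hypothesis together with the disjointness of the $J^\tau_{I_i}$-images under the optimal matching $\tau$.

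The main technical input is the total-deficit bound $\sum_i d_i \le \delta|x|$. To derive it, note that the intervals $\{I_i\} = \cI_{w'}$ partition $[|x|]$, so by the remark in the preliminaries that disjoint $I$-intervals have disjoint images under $\tau$, the intervals $\{J^\tau_{I_i}\}$ are pairwise disjoint. Because $\tau$ restricted to each $I_i$ pairs bits of $x_{I_i}$ only with bits of $y_{J^\tau_{I_i}}$ and preserves order, it exhibits a common subsequence of $x_{I_i}$ and $y_{J^\tau_{I_i}}$, and the lengths of these local common subsequences sum to $\LCS(x,y) \ge (1-\delta)|x|$. Hence
\[
\sum_i \LCS(x_{I_i}, y_{J^\tau_{I_i}}) \;\ge\; (1-\delta)|x|,
\]
and combined with the pointwise bound $\LCS(x_{I_i}, y_{J^\tau_{I_i}}) \le |I_i| = w'$, this yields $\sum_i d_i \le \delta|x|$. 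Applying Markov's inequality to the nonnegative sequence $(d_i)$, the number of intervals with $d_i \ge \sqrt\delta \cdot w'$ is at most $\delta|x|/(\sqrt\delta\cdot w') = \sqrt\delta\cdot|x|/w'$. Since $|I_i| = w'$ is constant across $\cI_{w'}$, the Markov threshold $d_i \ge \sqrt\delta w'$ translates to the $(1-\sqrt\delta)|I_i|$ threshold appearing in the lemma, so the count of intervals on the relevant side of that threshold is at most $\sqrt\delta|x|/w'$, which is the lemma's bound.

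The main obstacle I anticipate is aligning the direction of the Markov-derived count with the exact form of the inequality in the lemma's statement. The deficit variable $d_i$ is the natural quantity to sum and apply Markov to, but some care is needed in mapping the event $d_i \ge \sqrt\delta w'$ back to the event $\LCS(x_{I_i}, y_{J^\tau_{I_i}}) > (1-\sqrt\delta)|I_i|$ as phrased in the lemma, and in handling the strict-vs-nonstrict inequality at the boundary. Once this translation is carefully done, the lemma follows directly from the total-deficit bound and the single application of Markov above.
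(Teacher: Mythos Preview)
Your approach is correct and is essentially identical to the paper's proof: both are a Markov/counting argument on the per-interval deletion deficits, using that the total deficit is at most $\delta|x|$ to bound the number of intervals with deficit exceeding $\sqrt{\delta}w'$.

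You are right to flag the direction-of-inequality issue. In fact the lemma as stated contains a typo: it should bound the number of intervals with $\LCS(I_i,J^\tau_{I_i})\le(1-\sqrt{\delta})|I_i|$, not with $\LCS(I_i,J^\tau_{I_i})>(1-\sqrt{\delta})|I_i|$. This is clear both from the paper's own proof (``at most $\sqrt{\delta}\frac{|x|}{w'}$ intervals \ldots\ receive more than $\sqrt{\delta}w'$ deletions, and the \emph{remaining} intervals satisfy the desired inequality'') and from every subsequent use of the lemma, which always concludes $|K_{good}|\ge(1-\sqrt{\delta})\cdot(\text{total})$ for the set $K_{good}$ of intervals with large local LCS. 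So your Markov bound proves exactly the intended statement; there is no further obstacle to overcome beyond noting the typo.
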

\begin{proof}
  To obtain the longest common subsequence of $x$ and $y$, one applies at most $\delta m$ deletions.
  By counting, at most $\sqrt{\delta}\frac{|x|}{w'}$ intervals of $\cI_{w'}$ receive more than $\sqrt{\delta}w'$ deletions, and the remaining intervals satisfy the desired inequality.
\end{proof}
We now can establish \eqref{eq:cor-apx} in this case.
\begin{lemma}
  If at least $\gamma m_x$ many $\gamma w$-intervals are $\gamma$-imbalanced, and $\LCS(x,y)\ge (1-\delta)|x|$, then \eqref{eq:cor-apx} holds.
\label{lem:not-nice-2}
\end{lemma}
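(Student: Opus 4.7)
My plan is to construct an ordered collection of certified rectangles in $\cR$ whose certificates sum to at least $(\frac{1}{2} + \Omega(\gamma^2))|x|$. By Lemma~\ref{lem:alg-0} this will give $\FullLCS(x,y) \ge (\frac{1}{2} + \Omega(\gamma^2))|x|$, and since the parameter setting gives $\delta = \gamma^8 \ll \gamma^2$, this comfortably exceeds $\frac{1+\delta}{2}|x| \ge \frac{1+\delta}{2}\LCS(x,y)$, yielding~\eqref{eq:cor-apx}.

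First I partition $[|x|]$ into the $\gamma w$-aligned intervals $I_1 < \cdots < I_N$ (with $N = m_x/\gamma$) and consider their images $J^\tau_{I_i}$ under the canonical matching $\tau$; these are pairwise disjoint and ordered in $y$. Applying Lemma~\ref{lem:bad} with $w' = \gamma w$, all but at most $\sqrt{\delta}N = \gamma^3 m_x$ of the $I_i$'s are \emph{good-LCS}, meaning $\LCS(I_i, J^\tau_{I_i}) \ge (1-\sqrt{\delta})|I_i|$. Since $\gamma^3 m_x \ll \gamma m_x$, at least $\gamma m_x/2$ of the assumed $\gamma m_x$ imbalanced intervals are also good-LCS.

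For any good-LCS $I_i$ with majority bit $b \in \{0,1\}$, the matching $\tau$ loses at most $\sqrt{\delta}|I_i|$ bits of $I_i$, which forces $b(J^\tau_{I_i}) \ge b(I_i) - \sqrt{\delta}|I_i| \ge (\frac{1}{2} - \sqrt{\delta})|I_i|$. Hence $\Trivial(I_i, J^\tau_{I_i}) \ge (\frac{1}{2} - \sqrt{\delta})|I_i|$ always, and $\Trivial(I_i, J^\tau_{I_i}) \ge (\frac{1}{2} + \gamma - \sqrt{\delta})|I_i| \ge (\frac{1}{2} + \frac{\gamma}{2})|I_i|$ whenever $I_i$ is additionally $\gamma$-imbalanced (using $\sqrt{\delta} = \gamma^4 \ll \gamma/2$). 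Choosing $j_i := \lceil \imax J^\tau_{I_i}/(\theta w)\rceil$, the $\theta w$-inflation of $J^\tau_{I_i}$ ends at $\theta w j_i$ and inherits the same $\Trivial$ thresholds, so Lines~\ref{alg:easy-binarysearch-1} and~\ref{alg:easy-binarysearch-2} of $\Covering$ add, for every good-LCS $I_i$, a \emph{default} rectangle with certificate at least $(\frac{1}{2} - \sqrt{\delta})|I_i|$, and additionally a \emph{high-value} rectangle with certificate at least $(\frac{1}{2} + \frac{\gamma}{2})|I_i|$ when $I_i$ is also $\gamma$-imbalanced; the $J$-sides of these rectangles are $\theta w$-aligned subintervals of the $\theta w$-inflation of $J^\tau_{I_i}$.

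To assemble the ordered collection I take, for each good-LCS $\gamma$-imbalanced $I_i$, its high-value rectangle, and for every other good-LCS $I_i$, its default rectangle, adjusting the endpoints $j_i$ by small shifts where needed to enforce $\imax J_i \le \imin J_{i+1}$. Adjacent $J_i$'s sit in $\theta w$-inflations of distinct disjoint $J^\tau_{I_i}$'s and thus overlap by at most $O(\theta w)$; because $\theta \ll \gamma$ and every good-LCS $J^\tau_{I_i}$ has length at least $(1-\sqrt{\delta})\gamma w \gg \theta w$, this overlap is eliminable by losing only an $O(\theta/\gamma)$-fraction of rectangles, which is negligible. Summing certificates, the high-value rectangles contribute at least $(\gamma m_x/2)(\frac{1}{2} + \frac{\gamma}{2})\gamma w$ and the defaults contribute at least $(N - O(\sqrt{\delta}N + \gamma m_x))(\frac{1}{2} - \sqrt{\delta})\gamma w$, totaling at least $(\frac{1}{2} + \frac{\gamma^2}{8})|x|$, as desired. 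The main obstacle is this ordering bookkeeping: verifying that $\theta w$-rounding can be executed without sacrificing a constant fraction of the $\gamma^2$ advantage. This is tractable because $\theta = \gamma^8$ is polynomially smaller than $\gamma$ and because each good-LCS $J^\tau_{I_i}$ is far longer than the rounding scale $\theta w$.
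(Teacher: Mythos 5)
Your proposal follows essentially the same strategy as the paper's proof: partition $x$ into $\gamma w$-intervals, apply Lemma~\ref{lem:bad} with $w' = \gamma w$ to find that most intervals have a good LCS with their image under $\tau$, then certify a default rectangle for every good-LCS interval and an extra-profitable rectangle for the $\gamma$-imbalanced ones. However, there is a genuine (if small) gap in your handling of disjointness, and it stems from a suboptimal choice of rounding.

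You choose $j_i = \lceil \imax J^\tau_{I_i}/(\theta w)\rceil$, which is a $\theta w$-\emph{inflation} of $J^\tau_{I_i}$: the resulting $J$'s can protrude past $\imax J^\tau_{I_i}$ and also reach back before $\imin J^\tau_{I_i}$, so adjacent $J$'s can overlap by up to $2\theta w$. Since $\FullLCS$ only sums certificates over \emph{ordered} collections (Lemma~\ref{lem:alg-0}), overlapping rectangles cannot all be used. Your proposed fix (``adjusting the endpoints $j_i$ by small shifts'' and ``losing only an $O(\theta/\gamma)$-fraction of rectangles'') is underspecified: if you were to simply discard rectangles whose $J$ overlaps the next one, you could in the worst case lose a constant fraction, not an $O(\theta/\gamma)$-fraction — the bound on total overlap length does not translate directly into a bound on the number of rectangles to discard. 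The clean resolution, which is what the paper does, is to round \emph{inward}: set $J_i := \round_{\theta w}(J^\tau_{I_i})$, the largest $\theta w$-aligned interval \emph{contained} in $J^\tau_{I_i}$. This costs at most $2\theta w$ in the $\Trivial$ bound (so the threshold $(\tfrac12 - \sqrt{\delta})|I_i|$ still holds, using $\theta \ll \sqrt{\delta}\,\gamma$), but it makes the $J_i$'s automatically pairwise disjoint and ordered — no repair step is needed. Since the algorithm's line~\ref{alg:easy-binarysearch-1}/\ref{alg:easy-binarysearch-2} then returns the shortest valid $J$ ending at $\imax J_i$, that returned interval is a subset of $J_i$, so the ordered-ness is immediate. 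You should rewrite your argument along these lines; nothing else in the structure needs to change. Separately, your final numerical claim of an advantage $\gamma^2/8$ overstates: the $\gamma m_x/2$ imbalanced intervals, each contributing an extra $\tfrac{\gamma}{2}\cdot\gamma w$ over the default, give an advantage of order $\gamma^3 |x|$, not $\gamma^2 |x|$. This is harmless — $\gamma^3 \gg \delta = \gamma^8$ still — but the exponent should be corrected.
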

\begin{proof}
  Let $I_1<\cdots<I_{m_x/\gamma}$ be the intervals of $\cI_{\gamma w}$.
  For all $i=1,\dots,m_x/\gamma$, let $J_i\defeq \round_{\theta w}(J^\tau_{I_i})$, so that $J_i$ are pairwise disjoint.
  Let $K_{imbal}$ be the indices $i$ such that $I_i$ is $\gamma$-imbalanced.
  By assumption $|K_{imbal}|\ge \gamma m_x$.
  Let $K_{good}$ be the indices $i$ such that $\LCS(I_i,J_{I_i}^\tau)\ge (1-\sqrt{\delta})|I_i|$.
  By Lemma~\ref{lem:bad} with $w' = \gamma w$, $|K_{good}|\ge (1-\sqrt{\delta}) \frac{m_x}{\gamma}$. 
  
  Observe that under these assumptions, $\Covering$ certifies many rectangles using the trivial algorithm. For $i\in K_{good}$, we have $\Trivial(I_i,J_i)\ge \frac{1}{2}(1-\sqrt{\delta})|I_i|-2\theta w \ge (\frac{1}{2}-\sqrt{\delta})|I_i|$.
  Thus, we certify $(I_i\times J_i',(\frac{1}{2}-\sqrt{\delta})|I_i|)$ for some subinterval $J_i'\subset J_i$, defined as the shortest $\theta w$-aligned interval with $\imax J_i'=\imax J_i$ and $\Trivial(I_i,J_i')\ge \frac{1}{2}(1-\sqrt{\delta})|I_i|$.

  For $i\in  K_{good}\cap K_{imbal}$, we have $\Trivial(I_i,J_i)\ge \Trivial(I_i,J_{I_i}^\tau) - 2\theta w \ge  (\frac{1}{2}+\gamma -\sqrt{\delta})|I_i|-2\theta w > (\frac{1}{2} + \frac{\gamma}{2})\gamma w$.
  Thus, we certify $(I_i\times J_i',(\frac{1}{2}+\frac{\gamma}{2})|I_i|)$ for some subinterval $J_i'\subset J_i$, defined as the shortest $\theta w$-aligned interval with $\imax J_i'=\imax J_i$ and $\Trivial(I_i,J_i')\ge (\frac{1}{2}+\frac{\gamma}{2})|I_i|$.
  
  Thus $\Covering$ obtains an ordered collection of certified rectangles $(I_i\times J_i',\kappa_i)$ over $i\in K_{good}$ with $\kappa_i\ge (\frac{1}{2}-\sqrt{\delta})|I_i|$ for all $i\in K_{good}$ and $\kappa_i\ge (\frac{1}{2}-\sqrt{\delta}+\frac{\gamma}{2})|I_i|$ for all $i\in K_{good}\cap K_{imbal}$. 
  Thus,
  \begin{align}
    \FullLCS(x,y)
    &\ge\sum_{i\in K_{good}} \kappa_i \nonumber\\
    &\ge 
    \left(\frac{1}{2}-\sqrt{\delta}\right)(\gamma w)\cdot |K_{good}|
    + \frac{\gamma}{2} (\gamma w)\cdot |K_{good}\cap K_{imbal}|\nonumber\\
    &\ge \left( \frac{1+\delta}{2}\right)\LCS(x,y),
  \end{align}
  as desired.
  In the last inequality, we used (i) $|K_{good}|\ge (1-\sqrt{\delta})\frac{m_x}{\gamma}$, (ii) $|K_{good}\cap K_{imbal}|\ge \gamma m_x - \frac{\sqrt{\delta}m_x}{\gamma}$, (iii) $\gamma\gg \delta$, and (iv) $m_x w = |x| \ge \LCS(x,y)$.
\end{proof}

\paragraph{Subcase 3: Many nearly-square intervals.}

This case applies when the equal-length input algorithm \cite{RubinsteinS20}  correctly certifies many rectangles.
Recall that an interval $I\in\cI_w$ is \emph{nearly-square} if $|J^\tau_I|\le (1+\beta)|I|$. For convenience, we call $I$ {\it oblong} if it is not nearly-square.
\begin{lemma}
  If at least $\beta^2m_x$ intervals $I\in\cI_w$ are nearly-square, then \eqref{eq:cor-apx} holds.
\label{lem:not-nice-3}
\end{lemma}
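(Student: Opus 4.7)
The plan is to exhibit an ordered collection of certified rectangles from $\cR$ whose certificates sum to at least $\frac{1+\delta}{2}|x|$; by Lemma~\ref{lem:alg-0} together with $\LCS(x,y)\le|x|$, this will establish \eqref{eq:cor-apx}. I may assume $\LCS(x,y)\ge(1-\delta)|x|$, since otherwise Lemma~\ref{lem:not-nice-1} already gives \eqref{eq:cor-apx}.

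I would begin by setting up two index sets. Let $K_{sq}\subseteq[m_x]$ be the nearly-square indices, with $|K_{sq}|\ge\beta^2 m_x$ by hypothesis, and let $K_{good}\defeq\{i:\LCS(I_i,J^\tau_{I_i})\ge(1-\sqrt\delta)w\}$. By Lemma~\ref{lem:bad} applied with $w'=w$, $|K_{good}|\ge(1-\sqrt\delta)m_x$, and hence $|K_{good}\cap K_{sq}|\ge(\beta^2-\sqrt\delta)m_x\ge\frac{\beta^2}{2}m_x$, using $\sqrt\delta\ll\beta^2$.

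Next, for each $i\in K_{good}$ I would pick exactly one rectangle. For $i\in K_{sq}$, take $J_i\defeq\round_{\theta w}(J^\tau_{I_i})$: since $|J^\tau_{I_i}|\in[(1-\sqrt\delta)w,(1+\beta)w]$ and $\alpha\gg\beta,\sqrt\delta,\theta$, we have $|J_i|\in[(1-\alpha)w,(1+\alpha)w]$, so $(I_i\times J_i,\EqLCS(I_i,J_i))$ is a nearly-square rectangle added in Line~\ref{alg:square}. The inclusion $J_i\subseteq J^\tau_{I_i}$ loses at most $2\theta w$ bits, giving $\LCS(I_i,J_i)\ge(1-\sqrt\delta-2\theta)w$, so Theorem~\ref{thm:rs20} yields
\[
\EqLCS(I_i,J_i)\ge\bigl(\tfrac{1}{2}+\delta_{eq}\bigr)(1-\sqrt\delta-2\theta)w\ge\bigl(\tfrac{1}{2}+\tfrac{\delta_{eq}}{2}\bigr)w.
\]
For $i\in K_{good}\setminus K_{sq}$, I would instead use the Trivial rectangle produced by Line~\ref{alg:easy-binarysearch-1} at $j=\lfloor\imax J^\tau_{I_i}/(\theta w)\rfloor$. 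Because $\Trivial\ge\tfrac{1}{2}\LCS$ for binary strings, $\Trivial(I_i,J^\tau_{I_i})\ge(\tfrac12-\tfrac{\sqrt\delta}{2})w$, and shortening $J^\tau_{I_i}$ on the right by at most $\theta w$ only loses $\theta w\le\tfrac{\sqrt\delta}{2}w$ in $\Trivial$, so such a rectangle exists with certificate $\ge(\tfrac12-\sqrt\delta)w$. Because $\Trivial(I,[a,b])$ is non-increasing in $a$, the recorded $J$-interval in either case sits inside $J^\tau_{I_i}$, and since the $J^\tau_{I_i}$'s form a strictly increasing chain under the interval partial order, the collection I built is ordered.

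Adding up the certificates,
\[
\FullLCS(x,y)\ge\bigl(\tfrac{1}{2}-\sqrt\delta\bigr)|K_{good}|\,w+\tfrac{\delta_{eq}}{2}|K_{good}\cap K_{sq}|\,w\ge\bigl(\tfrac{1}{2}-2\sqrt\delta\bigr)|x|+\tfrac{\delta_{eq}\beta^2}{4}|x|,
\]
which exceeds $\frac{1+\delta}{2}|x|\ge\frac{1+\delta}{2}\LCS(x,y)$ by the parameter ordering $\delta\ll\beta^2\delta_{eq}$. The main obstacle is not the arithmetic but the bookkeeping around orderedness: verifying that after the inward $\theta w$-rounding of $J^\tau_{I_i}$ and after the ``smallest-$J$'' choice in the Trivial sweep, the chosen $J$'s genuinely nest inside the $J^\tau_{I_i}$ and land in the length window $[(1-\alpha)w,(1+\alpha)w]$ required by Line~\ref{alg:square}; this is routine given the parameter gap $\alpha\gg\beta\gg\gamma\gg\delta=\theta$.
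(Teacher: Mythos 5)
Your proposal is correct and follows essentially the same route as the paper: you define $K_{good}$ via Lemma~\ref{lem:bad} with $w'=w$, round $J^\tau_{I_i}$ to $\theta w$-aligned intervals, use $\EqLCS$ rectangles from Line~\ref{alg:square} on the nearly-square good indices and Trivial rectangles from Line~\ref{alg:easy-binarysearch-1} on the remaining good indices, check orderedness via disjointness of the $J^\tau_{I_i}$, and conclude with Lemma~\ref{lem:alg-0}. The paper does exactly this (it phrases the bookkeeping by assigning every $i\in K_{good}$ a Trivial-certified rectangle with $\kappa_i\ge(\tfrac12-\sqrt\delta)w$ and then upgrading to the $\EqLCS$ rectangle for $i\in K_{good}\cap K_{short}$, but the resulting ordered collection and the final sum are the same). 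One small quibble: rounding $J^\tau_{I_i}$ to $\round_{\theta w}(J^\tau_{I_i})$ can discard up to $2\theta w$ bits (both ends), not $\theta w$ as you wrote; this does not affect the conclusion since $2\theta w\ll\sqrt\delta\,w$, but the constant should be $2\theta w$.
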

\begin{proof}
  Let $I_1<I_2<\dots<I_{m_x}$ be the intervals of $\mathcal{I}_w$.
  For all $i=1,\dots,m_x$, let $J_i\defeq\round_{\theta w}(J^\tau_{I_i})$, so that the $J_i$ are pairwise disjoint.
  Let $K_{short}$ be the set of indices $i$ such that $I_i$ is nearly-square.
  By assumption, $|K_{short}|\ge \beta^2m_x$.
  Let $K_{good}$ be the set of indices $i$ such that $\LCS(I_i,J_{I_i}^\tau)\ge (1-\sqrt{\delta})|I_i|$.
  By Lemma~\ref{lem:bad}, $|K_{good}|\ge (1-\sqrt{\delta}) m_x$.

  Just as in the proof of Lemma~\ref{lem:not-nice-3}, we track the rectangles that are certified by $\Covering$. For $i\in K_{good}$, we have $\Trivial(I_i,J_i)\ge (\frac{1}{2}-\sqrt{\delta})|I_i|$, so we certify $(I_i\times J_i',(\frac{1}{2}-\sqrt{\delta})|I_i|)$ for some subinterval $J_i'\subset J_i$.
  For $i\in K_{good}\cap K_{short}$, we have $(1+\alpha)w\ge (1+\beta)w \ge |J^\tau_{I_i}| \ge |J_i|$ since $I_i$ is nearly-square, and $|J_i|\ge \LCS(I_i,J_i)\ge\LCS(I_i,J_{I_i}^\tau)-2\theta w\ge (1-\sqrt{\delta} - 2\theta)w \ge (1-\alpha)w$.
  Hence, $\EqLCS(I_i,J_i)$ is called at Line~\ref{alg:square} and has value at least $(\frac{1}{2}+\alpha)\LCS(I_i,J_i) \ge (\frac{1}{2}+\alpha)(1 - \sqrt{\delta}-2\theta)w > \frac{1+\alpha}{2} w$ by Theorem~\ref{thm:rs20}.

  We thus have an ordered collection of certified rectangles $(I_i\times J_i',\kappa_i)$ over $i\in K_{good}$ with $\kappa_i\ge (\frac{1}{2}-\sqrt{\delta})w$ for all $i\in K_{good}$ and $\kappa_i\ge (\frac{1}{2}-\sqrt{\delta}+\frac{\alpha}{2})w$ for $i\in K_{good}\cap K_{short}$. 
  Thus,
  \begin{align}
    \FullLCS(x,y)
    &\ge\sum_{i\in K_{good}} \kappa_i \nonumber\\
    &\ge 
    \left(\frac{1}{2}-\sqrt{\delta}\right)w\cdot |K_{good}|
    + \frac{\alpha}{2}  w\cdot |K_{good}\cap K_{short}|\nonumber\\
    &\ge \left( \frac{1+\delta}{2}\right)\LCS(x,y),
  \end{align}
  as desired.
  In the last inequality, we used (i) $|K_{good}|\ge (1-\sqrt{\delta})m_x$, (ii) $|K_{good}\cap K_{short}|\ge \beta^2 m_x - \sqrt{\delta}m_x$, (iii) $\alpha\gg\beta\gg \delta$, and (iv) $m_x w = |x| \ge \LCS(x,y)$.
\end{proof}

\paragraph{Wrapping up the bad case.}
We now can prove the following lemma.
\begin{lemma}
  If $(x,y)$ is bad, then \eqref{eq:cor-apx} holds.
\label{lem:not-nice}
\end{lemma}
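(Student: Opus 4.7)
The plan is to do a short case analysis on which of the three conditions in Definition~\ref{def:nice} the pair $(x,y)$ fails, and invoke the corresponding subcase lemma already proved above.

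First I would split on condition~\ref{i:lin-1}. If $\LCS(x,y) < (1-\delta)|x|$, then Lemma~\ref{lem:not-nice-1} immediately yields \eqref{eq:cor-apx} and we are done. Otherwise we may assume $\LCS(x,y)\ge (1-\delta)|x|$ from now on, so that at least one of conditions~\ref{i:lin-2} or~\ref{i:lin-3} is violated, and in either remaining subcase we have the hypothesis of condition~\ref{i:lin-1} at our disposal (which is exactly the hypothesis that Lemmas~\ref{lem:not-nice-2} and \ref{lem:not-nice-3} actually need in addition to their respective combinatorial assumptions).

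Next I would handle the case where condition~\ref{i:lin-2} fails. The failure says that strictly fewer than $(1-\gamma)m_x$ intervals $I\in\cI_w$ have all of their $\gamma w$-subintervals $\gamma$-balanced, i.e. strictly more than $\gamma m_x$ intervals $I\in\cI_w$ contain at least one $\gamma$-imbalanced $I'\in\cI_{\gamma w}(I)$. Since the $\gamma w$-intervals contained in distinct $w$-intervals of $\cI_w$ are themselves distinct (all of them being $\gamma w$-aligned and each $w$-interval of $\cI_w$ partitioning cleanly into $1/\gamma$ of them), this produces at least $\gamma m_x$ distinct $\gamma$-imbalanced $\gamma w$-intervals. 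Together with $\LCS(x,y)\ge (1-\delta)|x|$, this is precisely the hypothesis of Lemma~\ref{lem:not-nice-2}, which gives \eqref{eq:cor-apx}.

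Finally, if condition~\ref{i:lin-2} holds but condition~\ref{i:lin-3} fails, then strictly more than $\beta^2 m_x$ intervals $I\in\cI_w$ satisfy $|J^\tau_I| < (1+\beta)|I|$, i.e. at least $\beta^2 m_x$ intervals $I\in\cI_w$ are nearly-square. This is the hypothesis of Lemma~\ref{lem:not-nice-3}, which again gives \eqref{eq:cor-apx}. Since these three subcases exhaust all ways $(x,y)$ can be bad, we conclude \eqref{eq:cor-apx} holds in every bad case. There is no real obstacle here — the only subtle point is the bookkeeping that the failure of the set-level condition~\ref{i:lin-2} (counted over $w$-intervals) implies the per-interval imbalance count needed by Lemma~\ref{lem:not-nice-2} (counted over $\gamma w$-intervals); all the technical content sits in the three already-proved subcase lemmas.
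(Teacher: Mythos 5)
Your proposal is correct and follows essentially the same three-way case split on which condition of Definition~\ref{def:nice} fails, invoking Lemmas~\ref{lem:not-nice-1}, \ref{lem:not-nice-2}, and \ref{lem:not-nice-3} respectively; the observation that handling condition~\ref{i:lin-1} first supplies the needed $\LCS(x,y)\ge (1-\delta)|x|$ hypothesis for the remaining subcases, and the $w$-interval versus $\gamma w$-interval bookkeeping for Lemma~\ref{lem:not-nice-2}, are both exactly what the paper's proof does (the latter phrased slightly more tersely in the paper). No gap.
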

\begin{proof}
  If $(x,y)$ is bad, then either item 1, 2, or 3 of Definition~\ref{def:nice} is violated. 
  If 1 is violated, \eqref{eq:cor-apx} holds by Lemma~\ref{lem:not-nice-1}. 
  If 2 is violated, there are at least $\gamma m_x$ intervals $I$ with a $\gamma$-imbalanced $\gamma w$-subinterval, so there are at least $\gamma m_x$ many $\gamma$-imbalanced  $\gamma w$-intervals, so by Lemma~\ref{lem:not-nice-2}, \eqref{eq:cor-apx} holds.
  If 3 is violated, \eqref{eq:cor-apx} holds by Lemma~\ref{lem:not-nice-3}. 
\end{proof}

\subsection{Proof of (\ref{eq:cor-apx}) for good inputs}
\label{ssec:nice}
Let
\begin{align}
  m_x' \defeq (1-2\beta^2)m_x.
\end{align}
The following lemma establishes the natural structural property for good inputs.
\begin{lemma}
  If $(x,y)$ is good, then there exist an ordered sequence of rectangles $I_1\times J_1<\cdots<I_{m_x'}\times J_{m_x'}$ such that for all $i$, (i) $I_i\in\mathcal{I}_w$, (ii) every $\gamma w$-subinterval of $I_i$ is $\gamma$-balanced, (iii) $|J_i|\ge (1+0.9\beta)w$, and (iv) $\isQ(y_{J_i},\getP(x_{I_i}))$ returns true.
\label{lem:nice-0}
\end{lemma}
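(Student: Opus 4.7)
The plan is to define the required sequence $(I_i, J_i)$ directly from the canonical matching $\tau$, by taking each $I \in \cI_w$ in order and setting its partner to $J^\tau_I$, then discarding those $I$'s that fail one of the four conditions. A union bound will show that the discarded fraction is at most $2\beta^2$, leaving the $m_x'$ rectangles we need.

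The three types of bad intervals correspond exactly to the three parts of Definition~\ref{def:nice}. By condition~\ref{i:lin-3}, all but $\beta^2 m_x$ intervals $I\in\cI_w$ are oblong, so $|J^\tau_I| \ge (1+\beta)w \ge (1+0.9\beta)w$, which handles (iii). By condition~\ref{i:lin-2}, all but $\gamma m_x$ intervals have every $\gamma w$-subinterval $\gamma$-balanced, handling (ii). For (iv), I combine condition~\ref{i:lin-1} with Lemma~\ref{lem:bad} applied with $w'=w$: at most $\sqrt{\delta}m_x$ intervals have $\LCS(x_I,y_{J^\tau_I})<(1-\sqrt{\delta})w$. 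For every remaining interval, since $\sqrt{\delta}\le \delta_{code}$ (by the parameter hierarchy $\delta\ll\alpha\le \delta_{code}$), we get $\LCS(x_I,y_{J^\tau_I})\ge(1-\delta_{code})w$, and then item~2 of Lemma~\ref{lem:struct} applied with $t = \getP(x_I)$ forces $y_{J^\tau_I}$ to satisfy property $Q_t$, so $\isQ(y_{J^\tau_I},\getP(x_I))$ returns true. Setting $J_i \defeq J^\tau_{I_i}$ for the surviving intervals gives an ordered sequence, since disjoint $I$'s yield disjoint $J^\tau_I$'s by the remark in the Preliminaries.

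A union bound over the three bad sets leaves at least $(1-\beta^2-\gamma-\sqrt{\delta})m_x\ge(1-2\beta^2)m_x=m_x'$ intervals, using $\sqrt{\delta},\gamma\ll\beta^2$ from the choice of constants. The only nonroutine ingredient is condition~(iv), which is precisely what item~2 of the algorithmic structure lemma was designed for: a nearly-complete match from $x_I$ into $y_{J^\tau_I}$ forces $y_{J^\tau_I}$ to inherit a $Q_t$-type property from $x_I$, with no algorithmic work on $y$ required. Everything else is a straightforward counting argument built on the three clauses of goodness together with the parameter hierarchy.
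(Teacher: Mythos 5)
Your proof is correct for the lemma as stated, and the overall architecture matches the paper's: union bound over the three ``bad'' sets (low-LCS intervals via Lemma~\ref{lem:bad} with $w'=w$, locally imbalanced intervals via condition~\ref{i:lin-2}, nearly-square intervals via condition~\ref{i:lin-3}), with the surviving $\ge m_x'$ intervals paired to disjoint $J$-intervals coming from the canonical matching, and condition~(iv) supplied by item~2 of Lemma~\ref{lem:struct} since $\sqrt{\delta}\le\delta_{code}$.

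There is, however, one deliberate technical choice in the paper that you skip: the paper sets $J_i \defeq \round_{\theta w}(J^\tau_{I_i})$ rather than $J_i \defeq J^\tau_{I_i}$. This costs a $-2\theta w$ correction in both the length bound (hence the $0.9\beta$ in the statement) and the LCS bound (they re-derive $\LCS(I_i,J_i)\ge(1-\alpha)w$ for the rounded interval before invoking item~2, since $Q_t$ is hereditary under \emph{growing}, not shrinking, so one cannot simply transfer $Q_t$ from $J^\tau_{I_i}$ to a subinterval), but it buys $\theta w$-alignment of the $J_i$. That alignment is exactly what the downstream proof of Lemma~\ref{lem:nice-1} relies on: at Line~\ref{alg:binarysearch} the covering algorithm searches only over $\theta w$-aligned $J$-intervals with $\imax J = \theta w j$, and the argument that ``$J$ exists'' takes $J=J_i$ as the witness, which requires $\imax J_i$ to be a multiple of $\theta w$. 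Your unaligned $J_i = J^\tau_{I_i}$ satisfies every clause of the lemma as literally written, but if one then plugs it into Lemma~\ref{lem:nice-1} verbatim the argument breaks and the rounding has to be performed there instead. So: your proof is valid for the statement, but the statement undersells what the paper's construction actually provides, and you should be aware that the stronger (aligned) version is what is actually consumed downstream.
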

\begin{proof} 
  Among the $m_x$ intervals $I\in \mathcal{I}_w$, all but at most $\sqrt{\delta}m_x$ intervals satisfy $\LCS(I,J^\tau_{I})>(1-\sqrt{\delta})|I|$ by Lemma~\ref{lem:bad}, at most $\gamma m_x$ have a $\gamma$-imbalanced $\gamma w$-subinterval since $(x,y)$ is good, and at most $\beta^2 m_x$ are nearly-square since $(x,y)$ is good.
  Thus, at least $(1-2\beta^2)m_x = m'_x$ intervals are (i) satisfying $\LCS(I,J^\tau_{I})>(1-\sqrt{\delta})|I|$, (ii) $\gamma$-balanced in all $\gamma w$-subintervals, and (iii) oblong.
  Let $I_1<\cdots<I_{m_x'}$ be $m_x'$ of these intervals.
  Let $J_i \defeq \round_{\theta w}(J^\tau_{I_i})$, so these $J_i$ are pairwise disjoint.
  For all such $i$, we have
  \begin{align}
    |J_i|\ge |J^\tau_{I_i}|-2\theta w \ge (1+\beta-2\theta)w \ge (1+0.9\beta)w
    \label{eq:lb-2}
  \end{align}
  and
  \begin{align}
  \LCS(I_i,J_i) 
  \ge \LCS(I_i,J^\tau_{I_i}) - 2\theta w 
  \ge (1-\sqrt{\delta} - 2\theta) w
  \ge (1-\alpha)w.
  \end{align}
  For all $t$ such that $x_{I_i}$ has property $P_t$, we have $y_{J_i}$ has property $Q_t$ by Lemma~\ref{lem:struct} (Item 2).
  Thus, $\isQ$($y_{J_i},t$) returns true for $t=\getP(x_{I_i})$.
  We have found our ordered sequence of rectangles $I_1\times J_1<\cdots<I_{m_x'}\times J_{m_x'}$.
\end{proof}

We now can prove the main result for this section.
\begin{lemma}
  If $(x,y)$ is good, then \eqref{eq:cor-apx} holds.
\label{lem:nice-1}
\end{lemma}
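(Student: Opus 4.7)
The plan, by Lemma~\ref{lem:alg-0}, is to construct an explicit ordered collection of certified rectangles in $\cR$ whose summed certificates exceed $(\tfrac{1}{2}+\delta)\LCS(x,y)$, yielding \eqref{eq:cor-apx}.

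Start from the $m_x'$ skeleton rectangles $I_1\times J_1<\cdots<I_{m_x'}\times J_{m_x'}$ of Lemma~\ref{lem:nice-0}, and for each $k$ let $t_k=\getP(x_{I_k})$ and $I_k^{(m)}=\getI(x_{I_k},t_k)$. Since $\isQ(y_{J_k},t_k)$ holds by Lemma~\ref{lem:nice-0} and $Q_{t_k}$ is hereditary, when Algorithm~\ref{alg:cover} binary-searches at right endpoint $\theta wj=\imax J_k$ on line~\ref{alg:binarysearch} it produces some $J_k^{(m)}\subseteq J_k$ with $\imax J_k^{(m)}=\imax J_k$ and $|J_k^{(m)}|\ge(1+0.9\beta)w$, certifying $(I_k^{(m)}\times J_k^{(m)},\tfrac{|I_k^{(m)}|}{2}+\alpha w)\in\cR$. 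These certified rectangles are ordered because $I_k^{(m)}\subseteq I_k$ and $J_k^{(m)}\subseteq J_k$.

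Since $|I_k^{(m)}|$ may be much smaller than $w$, the structure-lemma rectangles alone sum to only $\approx\alpha|x|$, so we interleave them with trivial rectangles filling the $x$-gaps $I_k^{(l)}=[\imin I_k,\imin I_k^{(m)}]$ and $I_k^{(r)}=[\imax I_k^{(m)},\imax I_k]$. Each such gap is a $\gamma w$-aligned union of $\gamma$-balanced $\gamma w$-intervals (by Lemma~\ref{lem:nice-0}), and $\tau$ restricted to any such gap misses at most $\sqrt{\delta}w$ bits, so the elementary bound $\Trivial\ge \LCS/2$ (used identically in the proof of Lemma~\ref{lem:not-nice-2}) gives $\Trivial(I_k^{(l)},J_k^{(l),\tau})\ge(\tfrac{1}{2}-\sqrt{\delta})|I_k^{(l)}|$ whenever the gap is not microscopic, and tiny gaps can be aggregated with their neighbors. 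Algorithm~\ref{alg:cover}'s trivial branch on line~\ref{alg:easy-binarysearch-1} then certifies matching trivial rectangles for an appropriate choice of $j$. Adding all three certificates per $k$:
\[
  \tfrac{|I_k^{(m)}|}{2}+\alpha w+(\tfrac{1}{2}-\sqrt{\delta})(|I_k|-|I_k^{(m)}|) \;=\; (\tfrac{1}{2}-\sqrt{\delta})w+\sqrt{\delta}|I_k^{(m)}|+\alpha w \;\ge\; (\tfrac{1}{2}+\alpha-\sqrt{\delta})w,
\]
so summing over $k$ yields total $\ge m_x'(\tfrac{1}{2}+\alpha-\sqrt{\delta})w\ge(\tfrac{1}{2}+\delta)|x|\ge(\tfrac{1}{2}+\delta)\LCS(x,y)$ by the hierarchy $\alpha\gg\beta\gg\gamma\gg\delta$ and $m_x'=(1-2\beta^2)m_x$.

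The main obstacle is arranging the $J$-intervals so that the full collection $J_1^{(l)}<J_1^{(m)}<J_1^{(r)}<J_2^{(l)}<\cdots$ is globally ordered in $y$: since $J_k^{(m)}$ is anchored to the right end of $J_k$, $J_k^{(r)}$ must overflow into the gap past $J_k$ or the left margin of $J_{k+1}$, and the total $y$-budget $\sum(|J_k^{(l)}|+|J_k^{(m)}|+|J_k^{(r)}|)\le |y|$ can be tight when $|y|$ is only slightly larger than $|x|$. This is resolved by a case split on $|J_k|$: when $|J_k|\in[(1+\beta)w,(1+\alpha)w]$, we replace the three-piece decomposition by the single $\EqLCS$ rectangle $(I_k\times J_k,\EqLCS(I_k,J_k))$ certified by line~\ref{alg:square}, which has value $\ge(\tfrac{1}{2}+\alpha/2)|I_k|$ by Theorem~\ref{thm:rs20} and uses only $|J_k|$ of $y$; when $|J_k|>(1+\alpha)w$, the interior margin $|J_k|-|J_k^{(m)}|\ge(\alpha-0.9\beta)w$ of $J_k$ together with the gap before $J_{k+1}$ accommodates the trivial pieces in order. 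Verifying the corresponding $y$-budget inequality in each regime (together with the rounding losses that scale as $\theta w$, negligible against $\alpha w$) completes the proof.
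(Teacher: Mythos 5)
Your overall plan (explicit ordered collection of certified rectangles, structure-lemma rectangles for the ``bonus'' plus trivial rectangles to fill the $x$-gaps, arithmetic at the end) matches the paper's, and you correctly pinpoint the hard part: fitting the $J$-intervals disjointly inside $y$ when $|y|$ is only slightly larger than $|x|$. However, your resolution of that obstacle has two real gaps, and it is precisely at this point that the paper does something different.

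First, the margin claim is false as stated. You assert that when $|J_k| > (1+\alpha)w$, the margin $|J_k| - |J_k^{(m)}|$ is at least $(\alpha - 0.9\beta)w$. But $J_k^{(m)}$ is the \emph{smallest} $\theta w$-aligned interval ending at $\imax J_k$ with $|J| \ge (1+0.9\beta)w$ \emph{and} $\isQ(y_J,t_k)$. The hereditary property $Q_{t_k}$ may be false for all short suffixes of $J_k$; minimality only guarantees $|J_k^{(m)}| \in [(1+0.9\beta)w, |J_k|]$, so the margin can be zero. Second, even granting the margin, you never verify that the trivial rectangles are ordered relative to $J_k^{(m)}$: the $J$-pieces for $I_k^{(l)}$ and $I_k^{(r)}$ are anchored to the $\tau$-images, while $J_k^{(m)}$ is chosen by the algorithm's binary search and may overlap $J^\tau_{I_k^{(l)}}$ (since $J_k^{(m)}$ extends leftward from $\imax J_k$). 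Your case split into near-square (use $\EqLCS$) and very oblong handles only the extremes and, as above, fails in the middle.

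The paper avoids both issues with two ideas you did not use. (1) It keeps only every $t$-th structure-lemma rectangle for $t \approx 3/\beta$. Between consecutive kept rectangles, the $y$-range $[\imax \tilde J^M_{i-t}, \imin \tilde J^M_i]$ contains the $t-1$ skipped intervals $J'_{i-t+1}, \ldots, J'_{i-1}$, each of length $\ge (1+0.9\beta)w$, so the gap has length $> (t+1)w$ --- more than enough room. (2) It fills the $x$-gaps with $\gamma w$-aligned trivial \emph{squares} (certified on line~\ref{alg:trivial-square}), whose bound $\kappa \ge (\tfrac{1}{2}-\gamma)|I|$ follows from $\gamma$-balance of $I$ via Lemma~\ref{lem:balance} and is independent of where in $y$ the square sits. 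This decoupling lets the paper place the trivial $J$-pieces greedily wherever they fit, with no reference to $\tau$ and no margin analysis. Your use of $\tau$-anchored, non-square rectangles from line~\ref{alg:easy-binarysearch-1} does not have this freedom and is what forces you into the unresolvable budget argument.
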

\begin{proof}
  Let $I_1\times J_1< I_2\times J_2<\cdots<I_{m_x'}\times J_{m_x'}$ be the ordered sequence of rectangles given by Lemma~\ref{lem:nice-0}.
  By construction, for all $i=1,\dots,m_x'$, we have (i) $I_i\in\mathcal{I}_w$, (ii) every $\gamma w$-subinterval of $I_i$ is $\gamma$-balanced, (iii) $|J_i|\ge (1+0.9\beta)w$, and (iv) $\isQ(y_{J_i}, \getP(x_{I_i}))=\mathsf{true}$.

  It follows that at loop iteration $i=(\imax I_i)/w$ and $j=(\imax J_i)/(\theta w)$ of Line~\ref{alg:binarysearch}, the interval $J$ exists (the interval $J=J_i$ satisfies the requirement, so a minimal $J$ exists).
  Thus, $\Covering$ certifies a rectangle $(I_i'\times J_i', \frac{|I_i'|}{2}+\alpha w)$ where $I_i'$, the output of $\getI$, is a $\gamma w$-aligned subinterval of $I_i$ and $J_i'$ is a subinterval of $J_i$ with length at least $(1+0.9\beta)w$.

  We would like to build an ordered collection of certified rectangles containing these $(I_i'\times J_i', \frac{|I_i'|}{2}+\alpha w)$, which embed more than half of each small interval $I_i'$ into $y$. However, for each of these rectangles, $J_i'$ is typically much longer than $I_i'$, so using many of them is extremely wasteful of bits in $y$. To reduce this issue, we let $t\defeq 3/\beta$ and build an ordered collection using only every $t$-th rectangle from the preceding family.
  
  Let $m_x''$ be the largest multiple of $t$ less than $m_x'$.
  For each $i$ that is a multiple of $t$, partition $I_i$ into  $\tilde I_i^L< \tilde I_i^M< \tilde I_i^R$ where $\tilde I_i^M\defeq I_i'$.
  For $i$ not a multiple of $t$, let $\tilde I_i \defeq I_i$.
  For $i$ a multiple of $t$, let $\tilde J_i^M \defeq J_i'$.

  For $i$ a multiple of $t$, we claim there exist $\theta w$-aligned intervals $\tilde J_{i-t}^R< \tilde J_{i-t+1}<\tilde J_{i-t+2}<\cdots<\tilde J_{i-1}<\tilde J_{i}^L$ such that
  \begin{itemize}
  \item $\tilde J^L_{i} < \tilde J^M_i$.
  \item $|\tilde J_{i}^L|=|\tilde I_{i}^L|$.
  \item $|\tilde J_{i-\ell}|=|\tilde I_{i-\ell}|$ for $\ell=1,\dots,t-1$.
  \item $|\tilde J_{i-t}^R|=|\tilde I_{i-t}^R|$ (we take $\tilde I_{0}^R  = \emptyset$).
  \item $\tilde J^M_{i-t} < \tilde J^R_{i-t}$ (this is vacuously true if $i=t$)
  \end{itemize}
  To see that such intervals exist, notice that interval $[\imax\tilde J_{i-t}^M,\imin \tilde J_i^M] = [\imax J_{i-t}', \imin J_i']$ contains all intervals $J_{i-\ell}'$ for $\ell=1,\dots,t-1$. Since each $J_{i-\ell}'$ has length at least $(1+0.9\beta)w$, we have
  \begin{align}
    \imin J_i^M-\imax J_{i-t}^M
    \ge \left( t-1 \right)(1+0.9\beta)w
    > \left(t+1\right)w
    \ge |\tilde I_i^L| +\sum_{\ell=1}^{t-1} |\tilde I_{i-\ell}| + |\tilde I_{i-t}^R|.
  \label{}
  \end{align}
  The last inequality holds as each term on the right is at most $w$.
  Thus we can construct the intervals greedily in order $\tilde J_i^L, \tilde J_{i-1},\tilde J_{i-2},\dots,\tilde J_{i-t+1},\tilde J_{i-t}^R$ by setting $\imax \tilde J_i^L=  \imin \tilde J_i^M$, and then $\imax \tilde J_{i-1} = \imin \tilde J_i^L$, and so on.
  They will be $\theta w$-aligned as all of the $\tilde I$ intervals have lengths a multiple of $\gamma w$, and thus a multiple of $\theta w$.

  By construction of these intervals, $\Covering$ certifies the following rectangles for $i\le m_x''$:
  \begin{align}
    &(\tilde I_i^M\times \tilde J_i^M,\kappa_i^M) &&\text{where }\kappa_i^M \defeq \frac{|\tilde I_i^M|}{2} + \alpha w && \text{if $t \mid i$} \nonumber\\
    &(\tilde I_i^L\times \tilde J_i^L,\kappa_i^L) && \text{where }\kappa_i^L \defeq \Trivial(\tilde I_i^L,\tilde J_i^L) && \text{if  $t \mid i$} \nonumber\\
    &(\tilde I_i^R\times \tilde J_i^R,\kappa_i^R) && \text{where }\kappa_i^R \defeq \Trivial(\tilde I_i^R,\tilde J_i^R)&& \text{if  $t \mid i$} \nonumber\\
    &(\tilde I_i\times \tilde J_i,\kappa_i) &&\text{where }\kappa_i \defeq \Trivial(\tilde I_i,\tilde J_i) && \text{if $t \not \mid i$} 
  \label{eq:nice-rectangles}
  \end{align}
  The first collection of rectangles comes from the definition of $\tilde I_i^M\defeq I_i'$ and $\tilde J_i^M\defeq J_i'$.
  The rest of the rectangles come from the fact that we certify all $\gamma w$-aligned squares with the trivial algorithm in $\Covering$ Line~\ref{alg:trivial-square}.
  Furthermore, the rectangles are increasing in $i$, with additionally $\tilde I_i^L\times \tilde J_i^L < \tilde I_i^M\times \tilde J_i^M < \tilde I_i^R\times \tilde J_i^R$ for $i$ a multiple of $t$.
  Hence, the rectangles in \eqref{eq:nice-rectangles} form an ordered collection of rectangles.
  By Lemma~\ref{lem:balance}, we also have $\kappa_i^L \ge (\frac{1}{2}-\gamma)|\tilde I_i^L|, \kappa_i^R \ge (\frac{1}{2}-\gamma)|\tilde I_i^R|$ for $i$ a multiple of $t$ and $\kappa_i \ge (\frac{1}{2}-\gamma)|\tilde I_i|$ for all other $i$, because the intervals $\tilde I_i^R,\tilde I_i^L,\tilde I_i$ are all $\gamma w$-aligned and thus $\gamma$-balanced.
  Thus, by Lemma~\ref{lem:alg-0}, the output of $\FullLCS(x,y)$ is at least
  \begin{align}
    \sum_{\substack{i\le m_x'' \\ t| i}}^{} \left(\kappa_i^L + \kappa_i^M+ \kappa_i^R\right) +
    \sum_{\substack{i\le m_x'' \\ t\nmid i}}^{} \kappa_i 
    &\ge 
    \sum_{\substack{i\le m_x'' \\ t| i}}^{} \left(\left( \frac{1}{2}-\gamma \right)(|\tilde I_i^L| + |\tilde I_i^M|+ |\tilde I_i^R| )  + \alpha w\right) +
    \sum_{\substack{i\le m_x'' \\ t\nmid i}}^{}  \left( \frac{1}{2}-\gamma \right)|\tilde I_i| \nonumber\\
    &\ge \left( \frac{1}{2}-\gamma \right)w\cdot m_x'' + \alpha w\cdot \frac{\beta}{3}m_x'' \nonumber\\
    &\ge \left( \frac{1+\delta}{2} \right)\LCS(x,y)
  \label{}
  \end{align}
  In the third inequality, we used that $m_x''w \ge (m_x'- t)w \ge (1-3\beta^2)m_xw$ and $m_xw = |x|\ge \LCS(x,y)$.
\end{proof}

We can now finish the proof of Theorem \ref{thm:main-2-sketch}.
\begin{proof}[Proof of Theorem \ref{thm:main-2-sketch}]
We have now proved that \eqref{eq:cor-easy} and \eqref{eq:cor-apx} always hold, and that $\FullLCS$ runs in time $O(n^{1+\varepsilon})$, so $\FullLCS$ gives a $(\frac{1+\delta}{2})$-approximation of the LCS of two binary strings with $0(x)=1(x)\le \min(0(y),1(y))$ in time $O(n^{1+\varepsilon})$, as desired.
\end{proof}

\section{Putting it all together}
\label{app:imbalanced}

In this final section we use standard techniques to finish the proof of Theorem~\ref{thm:main} given the balanced case Theorem~\ref{thm:main-2-sketch}. This proved in \cite{RubinsteinS20} for equal length strings and in \cite{AkmalV21} for unequal length strings (see also \cite{Akmal21}).

\begin{lemma}[Lemma 13 of \cite{AkmalV21}, see also Lemma 3.5 of \cite{Akmal21}]
For every $\rho >0$, there exists $\delta=\delta(\rho)>0$ such that the following holds. There exists an algorithm which, given binary strings $x,y$ with $|x|\le |y|$ and $0(x)=1(y)\le(\frac{1}{2} -\rho)|x|$, computes a $(\frac{1}{2} + \delta)$-approximation of $\LCS(x,y)$ in deterministic linear time.
\footnote{There are several minor differences between this statement and the statement in \cite{AkmalV21}. 

First, the statement in \cite{AkmalV21} says subquadratic time but it actually runs in linear time, similar to the analogous algorithm in \cite{RubinsteinS20} who proved Lemma~\ref{lem:av20} for equal-length strings. This was confirmed in private communication with the authors.

Second, \cite{AkmalV21} prove the statement when $0(x)$ and $1(y)$ are within $\varepsilon |x|$ of each other for some $|x|$, while we only consider when they are equal.
}
\label{lem:av20}
\end{lemma}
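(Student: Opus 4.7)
The plan is to design a linear-time algorithm combining the trivial all-zeros/all-ones matchings with a family of structured common subsequences, and to analyze its approximation ratio using the tight upper bound $\LCS(x,y) \le 0(x) + 1(y) = 2T$, where $T \defeq 0(x) = 1(y)$. This bound follows because any common subsequence uses at most $\min(0(x),0(y)) = 0(x)$ zeros and at most $\min(1(x),1(y)) = 1(y)$ ones; it already yields $\Trivial(x,y) \ge L/2$ for $L \defeq \LCS(x,y)$, so trivial is a $\tfrac{1}{2}$-approximation and the task reduces to improving by a multiplicative constant depending on $\rho$.

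Concretely, for each split $(p_x, p_y) \in \{0,\dots,|x|\} \times \{0,\dots,|y|\}$ we evaluate
\[
A(p_x, p_y) \defeq \min(1(x_{\le p_x}),\, 1(y_{\le p_y})) + \min(0(x_{> p_x}),\, 0(y_{> p_y}))
\]
and the symmetric quantity $B$, corresponding to common subsequences of the form $1^*0^*$ and $0^*1^*$. Using prefix sums on zero and one counts each value is computed in $O(1)$, and $A$ is unimodal in $p_x$ for fixed $p_y$ with monotone optimizer in $p_y$, so a two-pointer sweep finds $\max A$ in $O(|x|+|y|)$ time. To capture LCSes with alternating structure (which are not captured by single-split patterns), we additionally compute in linear time, for each dyadic $r$ up to $\eps^{-1}$, the largest $k$ such that $(0^r 1^r)^k$ is a common subsequence of $x$ and $y$ via a greedy scan. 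The final output is the maximum over the trivial, $\max A$, $\max B$, and all such periodic-pattern searches.

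Correctness proceeds by case distinction. If $L \le 2T/(1+2\delta)$ then the trivial already satisfies $T \ge (\tfrac{1}{2}+\delta)L$. Otherwise $L$ is close to $2T$, meaning the LCS uses essentially all zeros of $x$ and all ones of $y$, and we show some structured family member achieves length $\ge T + \Omega(\delta L)$. For ``simple'' LCSes, where the walk recording the excess of ones over zeros in prefixes of the LCS deviates significantly from zero at some position, the corresponding split $(p_x,p_y)$ taken from the LCS matching yields a large $A$ or $B$. For LCSes that are locally balanced (approximately periodic with some scale $r$), the periodic-pattern searches $(0^r 1^r)^k$ capture them up to $O(\rho L)$ slack.

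The main obstacle is the alternating case: for instance, with $x = (01)^T 1^T$ and $y = (01)^T 0^T$ (which satisfies the hypothesis for $\rho \le 1/6$), the LCS $(01)^T$ has length $2T$ while $\max(A,B) \le T+1$, giving only a $\tfrac{1}{2}+\tfrac{1}{2T}$ approximation from the splits alone. The periodic-pattern searches are essential to handle this regime; the combinatorial claim that every LCS near the $2T$ upper bound is well-approximated by some periodic pattern of bounded scale follows from a dichotomy on oscillation frequencies in the spirit of Lemma~\ref{lem:struct} of the present paper, adapted to the imbalanced unequal-length setting.
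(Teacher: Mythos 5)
The paper does not prove this lemma: it is cited verbatim as Lemma 13 of \cite{AkmalV21} (see also \cite{Akmal21}), and the paper's only contribution here is to invoke it as a black box in Section~\ref{app:imbalanced}. So there is no in-paper argument to compare your proof against; I can only evaluate your proposal on its own terms.

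Evaluated that way, the proposal has a genuine gap in the periodic-pattern step. You search only over periods $r \le \eps^{-1}$, a constant, and you invoke ``a dichotomy on oscillation frequencies in the spirit of Lemma~\ref{lem:struct}'' to claim this suffices in the locally balanced regime. But the relevant oscillation scale is \emph{not} bounded by a constant --- in Lemma~\ref{lem:struct} the fine scales $\ell$ range over all powers of two up to $\eps^2 w$, which grows with the string length. Concretely, set $a = \lfloor \sqrt{T} \rfloor$, $x = (0^a 1^a)^{T/a}\, 1^T$, and $y = (0^a 1^a)^{T/a}\, 0^T$. Then $0(x) = 1(y) = T$ and $|x| = |y| = 3T$, so the hypothesis $T \le (\tfrac{1}{2} - \rho)|x|$ holds for every $\rho \le 1/6$, and $\LCS(x,y) = 2T$ exactly (the common prefix $(0^a 1^a)^{T/a}$ attains the upper bound $0(x) + 1(y) = 2T$). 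On this input $\Trivial = T$; a short computation gives $\max A = T$ and $\max B = T + a$ (the best $0^k 1^m$ common subsequence uses one block of zeros and then all remaining matchable ones); and for any fixed $r$ the greedy $(0^r 1^r)^k$ scan yields $k \le T/a$ units, since each $0^r 1^r$ unit consumes an entire $0^a 1^a$ block of the common prefix, for total length $O(r\sqrt{T}) = o(T)$. Your algorithm therefore returns $T + O(\sqrt{T})$, a ratio of $\tfrac{1}{2} + o(1)$, so no fixed $\delta(\rho) > 0$ can be achieved. Even enlarging the search to all $r$ that are powers of two up to $|x|$ does not close the gap without a proof of the ``combinatorial claim'' you gesture at; that claim is precisely the missing content, and it is not a small adaptation of Lemma~\ref{lem:struct}, which only extracts a periodic subsequence of length $O(\eps w)$, not one covering a $(\tfrac{1}{2}+\delta)$-fraction of the LCS. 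Separately, the unimodality you assert for $A(p_x,p_y)$ (to justify a two-pointer sweep) does not hold in general, as $A$ is a sum of a nondecreasing step function and a nonincreasing one; this particular issue is repairable in linear time by parametrizing over the number of ones placed in the prefix and reading suffix zero-counts from prefix sums, but the main gap above is not.
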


\begin{lemma}
  For all $\varepsilon>0$, there exists an absolute constant $\delta=\delta(\varepsilon)>0$ and a deterministic algorithm that, given two strings $x$ and $y$ with $|x|\le |y|$ and $\min(1(x),1(y)) = \min(0(x),0(y))$, outputs a $(\frac{1}{2} + \delta)$-approximation of $\LCS(x,y)$ in time $O(|y|^{1+\varepsilon})$.
\label{lem:main-app}
\end{lemma}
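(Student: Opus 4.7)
The plan is to reduce Lemma~\ref{lem:main-app} to Theorem~\ref{thm:main-2-sketch} and Lemma~\ref{lem:av20} by case analysis on the common value $T \defeq \min(1(x),1(y)) = \min(0(x),0(y))$. Let $\delta_1 = \delta_1(\varepsilon)$ denote the approximation constant from Theorem~\ref{thm:main-2-sketch}, and fix a small parameter $\rho>0$ to be chosen later as a function of $\delta_1$. First, suppose $x$ is exactly balanced: $0(x)=1(x)=|x|/2$. Then the hypothesis, combined with $|x|\le|y|$, forces $0(x)=1(x)\le\min(0(y),1(y))$ (the only potentially problematic subcase would have $y$ also balanced with $|y|\le|x|$, but then $|x|=|y|$ and the inequality holds at equality). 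So Theorem~\ref{thm:main-2-sketch} applies directly to $(x,y)$.

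Now suppose $x$ is not balanced. By swapping $0$'s and $1$'s in both $x$ and $y$ if needed (which preserves LCS and the hypothesis), we may assume $0(x)<1(x)$. A short check then shows the hypothesis forces $T = 1(y) = 0(x)$, and hence $0(y) > 0(x)$. If $0(x) = 1(y) \le (\tfrac{1}{2}-\rho)|x|$, then Lemma~\ref{lem:av20} directly yields a $(\tfrac{1}{2}+\delta_2)$-approximation in deterministic linear time, where $\delta_2 = \delta_2(\rho)$ is the constant from that lemma. Otherwise $0(x) > (\tfrac{1}{2}-\rho)|x|$, so $x$ is nearly balanced: $1(x)-0(x) < 2\rho|x|$. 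Let $x''$ be obtained by deleting any $1(x)-0(x)$ of the $1$-bits of $x$; then $0(x'')=1(x'')=0(x)=1(y)=\min(0(y),1(y))$, so Theorem~\ref{thm:main-2-sketch} applies to $(x'',y)$ and returns some value $V$ in time $O(|y|^{1+\varepsilon})$.

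To verify that $V$ is a $(\tfrac{1}{2}+\delta)$-approximation of $\LCS(x,y)$: first, $V \le \LCS(x'',y) \le \LCS(x,y)$ since $x''$ is a subsequence of $x$. For the lower bound, $\LCS(x'',y) \ge \LCS(x,y) - (|x|-|x''|) > \LCS(x,y) - 2\rho|x|$. Combining with $\LCS(x,y) \ge T > (\tfrac{1}{2}-\rho)|x|$ and the guarantee $V \ge (\tfrac{1}{2}+\delta_1)\LCS(x'',y)$, straightforward arithmetic yields $V \ge (\tfrac{1}{2}+\delta_1 - O(\rho))\LCS(x,y)$. Picking $\rho$ small enough in terms of $\delta_1$ (e.g., $\rho = \delta_1/10$) makes this at least $(\tfrac{1}{2}+\delta_1/2)\LCS(x,y)$, and the overall approximation constant is then $\delta \defeq \min(\delta_1/2, \delta_2(\rho))$. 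The main obstacle is the nearly-balanced subcase, where neither Lemma~\ref{lem:av20} (which requires a non-trivial imbalance) nor Theorem~\ref{thm:main-2-sketch} (which requires exact balance) applies directly; trimming $x$ by a negligible number of $1$-bits bridges the two regimes cleanly, and all runtime is dominated by the call to Theorem~\ref{thm:main-2-sketch}.
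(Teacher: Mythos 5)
Your proposal is correct and takes essentially the same approach as the paper's own proof: case on whether $x$ is balanced (if so, apply Theorem~\ref{thm:main-2-sketch} directly), and otherwise exploit bit-flip symmetry to reduce to one orientation and split on whether $x$ is nearly balanced (trim a few bits, then apply Theorem~\ref{thm:main-2-sketch}) or noticeably imbalanced (apply Lemma~\ref{lem:av20}). The paper instead enumerates three explicit cases according to which of $0(x),0(y)$ and $1(x),1(y)$ attain the minima, but its ``Case 3: symmetric to Case 2'' is precisely your bit-flip reduction, so the content is the same; the arithmetic verifying the approximation ratio is also the same up to constants.
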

\begin{proof}
  Let $\delta_1=\delta_1(\varepsilon)>0$ be the absolute constant in Theorem~\ref{thm:main-2-sketch}.
  Let $\delta_2>0$ be the absolute constant in Lemma~\ref{lem:av20} with parameter $\rho=\delta_1/10$.
  Let $\delta = \min(\delta_1/2,\delta_2)$.
  As $|x|\le |y|$, we have three cases, and we find a $(\frac{1}{2}+\delta)$-approximation to $\LCS(x,y)$ in each.

  \paragraph{Case 1. $0(x)=\min(0(x),0(y)), 1(x)=\min(1(x),1(y))$.}
  Then $0(x)=1(x) = |x|/2$ and $\LCS(x,y) \ge |x|/2$.
  The algorithm in Theorem~\ref{thm:main-2-sketch}, gives a $(\frac{1}{2}+\delta_1)$-approximation of the LCS.

  \paragraph{Case 2. $0(y)=\min(0(x),0(y)), 1(x)=\min(1(x),1(y))$.}
  We have $1(x)=0(y)\le 0(x)$. There are two subcases.
  
  \textbf{Subcase 2a. $1(x)\ge (\frac{1}{2}-\rho)|x|$.}
  In this case, delete $0(x)-1(x)\le \rho |x|$ zeros from $x$ arbitrarily to get a balanced subsequence $x'$.
  Then $\LCS(x',y) \ge \LCS(x,y) - \rho |x|\ge (1-\rho)\LCS(x,y)$.
  Thus, the algorithm in Theorem~\ref{thm:main-2-sketch} gives a common subsequence of length $(\frac{1}{2} + \delta_1)(1-\rho)\LCS(x,y) \ge (\frac{1}{2} + \delta)\LCS(x,y)$.

  \textbf{Subcase 2b. $1(x)\le (\frac{1}{2}-\rho)|x|$.}
  In this case, Lemma~\ref{lem:av20} with parameter $\rho$ finds a common subsequence of length at least $(\frac{1}{2}+\delta)\LCS(x,y)$.

  \paragraph{Case 3. $0(x)=\min(0(x),0(y)), 1(y)=\min(1(x),1(y))$.}
  Symmetric to case 2.
\end{proof}

\begin{theorem*}[Theorem~\ref{thm:main}, restated]
  \thmmain
\label{thm:main-restate}
\end{theorem*}
\begin{proof}
  Let $\delta_0$ be the constant in Lemma~\ref{lem:main-app}. Let $\delta=\delta_0/5$.
  Let the input strings be $x$ and $y$ and assume without loss of generality $|x|\le |y|$ and that $\min(0(x),0(y)) \ge \min(1(x),1(y))$.
  We have $\Trivial(x,y) = \min(0(x),0(y))$ and $\LCS(x,y)\le \min(0(x),0(y)) + \min(1(x),1(y))$.

  If $\min(0(x),0(y)) \ge (1+\delta_0)\min(1(x),1(y))$, then because $\frac{1+\delta_0}{2+\delta_0} > \frac{1}{2} + \delta$ the trivial algorithm gives a $(\frac{1}{2} + \delta)$-approximation of the LCS. Thus we may assume $\min(0(x),0(y)) \le (1+\delta_0)\min(1(x),1(y))$.
  Delete $\min(0(x),0(y))-\min(1(x),1(y))\le \delta_0\min(1(x),1(y))$ zeros from each of $x$ and $y$ arbitrarily to obtain $x'$ and $y'$ with $\min(0(x'),0(y'))=\min(1(x'),1(y'))$.
  We have
  \begin{align}
  \LCS(x',y')\ge \LCS(x,y)-\delta_0\min(1(x), 1(y))\ge (1-\delta_0)\LCS(x,y).
  \end{align}
  Running the algorithm in Lemma~\ref{lem:main-app} gives an approximation to $\LCS(x',y')$ that is at least $(\frac{1}{2} + \delta_0)(1-\delta_0)\LCS(x,y) > (\frac{1}{2} + \delta)\LCS(x,y)$, as desired.
\end{proof}

\section{Conclusion and open questions}
We close with some related open questions.
\begin{itemize}
\item What is the best possible approximation factor of binary LCS in almost-linear or truly subquadratic time?
We give a $\frac{1}{2}+\delta$ in almost-linear time. We made no attempt to optimize $\delta$, and currently it depends on the runtime exponent $1+\varepsilon$.

\item Related to the above, can we prove fine-grained hardness of approximation results for LCS?
It is known that a deterministic $2^{-(\log n)^{1-\delta}}$ approximation in $n^{2-\eps}$ time for LCS over alphabet $n^{o(1)}$ would imply new circuit lower bounds, as would a deterministic $1-\frac{1}{\poly\log n}$-approximation for binary inputs \cite{AbboudB17, AbboudR18, ChenGLRR19}.

\item We studied the algorithmic question of computing LCS, where, as the previous two questions highlight, the optimal approximation factor is open. We showed this algorithmic question is closed related to an analogous combinatorial question, which is also open: What is the largest constant $\alpha\in(0,1)$ such that in any set $C\subset\{0,1\}^n$ of $|C|\ge 2^{\Omega(n)}$ binary strings, there are always two strings $x,y$ with $\LCS(x,y)\ge \alpha n$? The optimal $\alpha$ is known to be in $[\frac{1}{2}+10^{-40},2-\sqrt{2}]$ \cite{GuruswamiHL22, BukhGH17}, and $1-\alpha$ quantifies the maximum fraction of adversarial deletions that can be tolerated by a (asymptotically) positive rate code.

It would also be interesting to understand how strong is the connection between the deletion codes question and the algorithmic LCS question.
At first blush, it seems that techniques derived solely from analysis of deletion codes should not give an $\alpha$-approximation for $\alpha>2-\sqrt{2}\approx 0.59$ (because of the deletion codes construction \cite{BukhGH17}), so beating this ratio would show some separation between the two questions.

\item How does the optimal subquadratic time or almost-linear time approximation factor grow with the alphabet size? Over alphabet size $q$, we show that we can beat (barely) the trivial $\frac{1}{q}$-approximation. We know that we can always get a randomized $\frac{1}{n^{o(1)}}$-approximation in linear time \cite{AndoniNSS22,Nosatzki21}, which is much better than $\frac{1}{q}$ for large alphabets.
\item There is a natural question that arises from another possible approach to proving Theorem~\ref{thm:main}.
Define the \emph{directed edit distance} of two strings $x,y$ to be the number of edits needed to get from $x$ to $y$, where insertions cost 0 and deletions (and substitutions) cost 1.
Equivalently, $\dedit(x,y)\defeq |x|-\LCS(x,y)$. 
When the strings are equal length, the directed edit distance is simply half the edit distance.
A constant-factor approximation of directed edit distance in almost-linear time would immediately imply  Theorem~\ref{thm:main-2-sketch} and thus Theorem~\ref{thm:main}.
This suggest the following question, which may be of independent interest.
\begin{question}
  Is there an almost-linear time constant-factor approximation of the directed edit distance?
\end{question}
We note that $\dedit(x,y)$ is \emph{not} a metric. Indeed, it is not even symmetric\footnote{$\dedit(0011,00)=2$ but $\dedit(00,0011)=0$}, and it does not satisfy the triangle inequality.
Thus, the existing edit distance approximation algorithms \cite{CDGKS20, BrakensiekR20, KouckyS20, AndoniN20}, which rely heavily on the triangle inequality, do not seem to immediately apply to directed edit distance.
On the other hand, directed edit distance does satisfy a ``directed triangle inequality'': for strings $x,y,z$, we have $\dedit(x,z) \le \dedit(x,y) + \dedit(y,z)$. 
This gives some hope that fast approximation algorithms exist.
\end{itemize}

\section{Acknowledgements}

We thank Saeed Seddighin for introducing us to the question of approximating binary LCS and for suggesting its potential connection to the deletion codes bound \cite{GuruswamiHL22}.
We thank Negev Shekel Nosatzki for helpful discussions about edit-distance algorithms.
We thank Shyan Akmal and Virginia Vassilevska-Williams for helpful discussions on their work \cite{AkmalV21}.
We thank Aviad Rubinstein for helpful feedback.
We thank Venkatesan Guruswami for helpful feedback and encouragement.

\bibliographystyle{alpha}
\bibliography{bib}


\appendix

\section{Proof of Lemma~\ref{lem:comb-struct}}\label{apx:comb-struct}

Lemma~\ref{lem:comb-struct} is essentially a corollary of the stronger combinatorial structure lemma ~\cite[Lemma 4.1]{GuruswamiHL22}, except that the constant dependences are superior and we make the additional assumption that the lengths involved are all powers of two. For completeness, we include a proof here which is significantly simpler than the proof of \cite[Lemma 4.1]{GuruswamiHL22}.

\begin{lemma*}[Lemma~\ref{lem:comb-struct}, restated]
  For $\eps = 10^{-5}$ and $w$ sufficiently large, at least one of the following two conditions holds for every string $x \in \{0,1\}^w$.
  \begin{enumerate}
  \item There exists $\ell \in [\eps^2 w, w]$ equal to a power of two and an $0.1$-imbalanced interval $I$ in $x$ of length $\ell$.

  \item There exists $\ell \in [1, \eps^2 w)$ equal to a power of two such that the number of $\ell^+$-flags in $x$ is at least $\eps w$, and $x$ contains $(0^{\ell}1^{\ell})^{\eps w/\ell}$ as a subsequence.
  \end{enumerate}
\end{lemma*}

\begin{proof}
  We first reduce to the case that $w$ is a power of two. Indeed, suppose we show the statement for all lengths $w'$ equal to sufficiently large powers of $2$, with a stronger $\eps' = 10^{-4}$ in place of $\eps$. Then, let $w'$ be the largest power of two at most $w$, and let $x'=x_{[w']}$ be the prefix of $x$ of length $w'> w/2$. Applying our assumption to $x'$, the lemma statement holds for $x'$ with stronger $\eps' = 10^{-4}$. If $x'$ falls into the first case of the lemma, then $x'$ contains a $0.1$-imbalanced interval $I$ of length $\ell \in [(\eps')^2w', w']\subseteq [\eps^2 w, w]$, so $x$ must fall into the first case as well. 
  
  Otherwise, there exists $\ell \in [1, (\eps')^2 w')$ equal to a power of two such that the number of $\ell^+$-flags in $x'$ is at least $(\eps')^2 w' \ge \eps w$, and $x'$ contains $(0^{\ell}1^{\ell})^{(\eps')^2 w'/\ell} \supseteq (0^{\ell}1^{\ell})^{\eps^2 w/\ell}$. If $\ell \ge \eps^2 w$, then the existence of an $\ell^+$-flag implies that there is a $0.1$-imbalanced interval of length at least $\ell$ in $x'$, so $x$ again falls into the first case of the lemma. On the other hand, if $\ell < \eps^2 w$ then $x$ falls into the second case of the lemma, as desired.

  Thus, we assume $w$ is a power of two and prove this special case with the stronger constant $\eps = 10^{-4}$. Let $w=2^K$, and for any $0\le k \le K$ and $1\le i \le 2^{K-k}$, define $I_{k,i} \defeq [(i-1)\cdot 2^k + 1, i \cdot 2^k]$ to be an aligned dyadic interval of length $2^k$. Observe that for each $k$, the intervals $I_{k,i}$ form a partition of $[w]$. If $I_{k,i}$ is $0.1$-imbalanced for some $k$ satisfying $2^k \ge \eps^2 w$, case 1 holds and we are done. Thus, we may assume $I_{k,i}$ is $0.1$-balanced whenever $2^k \ge \eps^2 w$. We would like to show that case 2 above always holds.
  
  Call an interval $I$ is {\it sparse} if $d(x_I) < 0.01$, and {\it dense} otherwise. Let $\cS_k$ denote the collection of all maximal sparse dyadic intervals $I_{k,i}$ of length $2^k$, i.e. all sparse dyadic intervals $I_{k,i}$ that are not proper subintervals of other sparse $I_{k',i'}$. Let $\cS = \bigcup_{k=0}^K \cS_k$, so that $\cS$ is the collection of all maximal sparse dyadic intervals in $x$, and the elements of $\cS$ are pairwise disjoint.
  
  Observe that sparse intervals are certainly $0.1$-imbalanced, so by our previous assumption, $\cS_k$ is empty if $2^k\ge \eps w$. On the other hand, we also assumed that $I_{K,1} = [w]$ is $0.1$-balanced, so the number of zeros in $x$ is at least $0.4w$. Every zero-bit in $x$ constitutes a sparse dyadic interval $I_{0,i}$ of length $1$ by itself, and every sparse dyadic interval lies inside some element of $\cS$. Thus, intervals in $\cS$ cover all zero-bits in $x$ and have total length at least $0.4w$.
  
  Let if $I=I_{k,i}$ and $i > 1$, define the {\it predecessor} of $I$ to be $\pred(I) \defeq I_{k,i-1}$.
  
  \begin{claim*}
  If $k\ge 0$, $1 < i \le 2^{K-k}$, $I_{k,i} \in \cS$, $t = 2^{\max(0, k-5)}$, and $\pred(I_{k,i})$ is dense, then the number of $t$-flags in $\pred(I_{k,i})$ is at least $0.01 \cdot |\pred(I_{k,i})|$.
  \end{claim*}
  
  \begin{proof}
    If $k<5$ then the assumption that $I_{k,i}$ is sparse implies that it contains only zeros, so the first one-bit in $\pred(I_{k,i})$ is a $1$-flag, and this is sufficient. Assume now that $k\ge 5$. Observe that since $I_{k,i}$ is sparse, it contains at least $0.99 \cdot 2^{k} > 2^{k-1} > 10(t-1)$ zeros and at most $0.01 \cdot 2^k < 2^{k-6} = t/2$ ones. In particular, the last $t/2 = 2^{k-6}$ ones in $x_{\pred(I_{k,i})}$ (or all of them if there are fewer than $2^{k-6}$) must all be $t$-flags. As $\pred(I_{k,i})$ is dense, we are done.
  \end{proof}
  
  Thus, dense predecessors of elements of $\cS$ contain many flags. In order to make sure these flags are not double-counted, we first pass to a subcollection of $\cS$, defined as follows. Write if $I,J\in \cS$, write $I \prec J$ if both $\pred(I)$ and $\pred(J)$ exist, and $\pred(I) \subset \pred(J)$. Define $\cS'$ to be the subcollection of $\cS$ obtained by removing the (at most one) element of the form $I_{k,0}$ without a predecessor, and then removing all elements non-maximal with respect to $\prec$. Observe that if two dyadic intervals satisfy $I \precneq J$, then $I\subseteq \pred(J)$, so for any dyadic interval $J$, the total length of all elements $I$ of $\cS$ satisfying $I\precneq J$ is at most $|J|$. By passing to $\cS'$, we deleted at most half of the total length in $\cS$, plus possibly one interval with no predecessor, which has length at most $\eps^2 w$. Thus,
  \[
  \sum_{I\in \cS'} |I| \ge \frac{1}{2}\sum_{I\in \cS} |I| - \eps^2 w \ge 0.1w.
  \]
  
  Writing $\cS'_{\ge k}$ for the collection of all intervals in $\cS'$ with length at least $2^k$, we pick $k_0$ to be the largest $0 \le k_0 \le K$ for which
  \[
  \sum_{I\in \cS'_{\ge k_0}} |I| \ge 0.01w.
  \]
  Our choice of $\ell$ is $\ell\defeq 2^{\max(0, k_0-5)}$. Note that $\ell < \eps^2 w$ because $\cS_k$ is empty when $2^k \ge \eps^2 w$. We separately prove each of the two required hypotheses.
  
  \begin{claim*}
  For $\ell = 2^{\max(0, k_0-5)}$, the number of $\ell^+$-flags in $x$ is at least $\eps w$.
  \end{claim*}
  \begin{proof}
    For any two dyadic intervals $I$, $J$, either $I \prec J$ or $I \cap J = \emptyset$. Thus, $\{\pred(I)| I \in \cS'_{\ge k}\}$ is a collection of pairwise-disjoint intervals with total length at least $0.01w$. By the previous claim, the number of $\ell^+$-flags in one of these intervals $\pred(I)$ is at least $0.01|\pred(I)| = 0.01|I|$, and so in total the number of $\ell^+$-flags in $x$ is at least $10^{-4}w$, as desired.
  \end{proof} 
  
  It remains to check that $x$ contains $(0^{\ell}1^{\ell})^{\eps w/\ell}$.
  
  \begin{claim*}
  For $\ell = 2^{\max(0, k_0-5)}$, $x$ contains $(0^{\ell}1^{\ell})^{\eps w/\ell}$ as a subsequence.
  \end{claim*}
  \begin{proof}
    Let $k= k_0+1$. By the maximality of $k_0$, we have $\sum_{I\in \cS'_{\ge k}} |I| < 0.01w$. Let $\cS_{\ge k}$ denote the collection of maximal sparse dyadic intervals of length at least $2^k$. Reversing the analysis which led to a lower bound on the total length of $\cS'$, we obtain 
    \[
    \sum_{I\in \cS_{\ge k}} |I| \le 2\sum_{I\in \cS'_{\ge k}} |I| + \eps w \le 0.1w.
    \]
    Since all sparse dyadic intervals of length $2^k$ lie inside some element of $\cS_{\ge k}$, we see that in total at most $0.1 \cdot 2^{K-k}$ of the dyadic intervals $I_{k,i}$ are sparse.
    
    On the other hand, at most $0.7 \cdot 2^{K-k}$ of them have density greater than $0.99$, since otherwise these very dense intervals alone account for at least $0.68w$ ones, making the entire interval $[w]$ $0.1$-imbalanced, which is a contradiction. In sum, out of $2^{K-k}$ total intervals $I_{k,i}$, at most $0.1 \cdot 2^{K-k}$ have density less than $0.01$, and at most $0.7 \cdot 2^{K-k}$ have density greater than $0.99$, leaving at least $0.2\cdot 2^{K-k}$ that must each contain $0.01 \cdot 2^k$ zeros and $0.01 \cdot 2^k$ ones. Passing to only these subintervals, we conclude that $x$ contains a subsequence of the form $x' = x_1 x_2 \cdots x_{0.2\cdot 2^{K-k}}$ where each $x_i$ contains $0.01 \cdot 2^k$ zeros and $0.01 \cdot 2^k$ ones. A string of the form $(1^\ell 0^\ell)^a$ can be found as a subsequence of $x'$ by taking ones from the first $\ceil{100\ell/2^k}$ $x_i$'s, then zeros from the next $\ceil{100\ell/2^k}$, and so on. Since $\ell \ge 2^{k-6}$, we can pick 
    \[
    a \ge \frac{ 0.2 \cdot 2^{K-k}}{2\ceil{100\ell/2^k}} \ge 10^{-4}w/\ell,
    \]
    as desired.
  \end{proof}

\noindent  Combining the above two claims proves that if case 1 of the lemma does not hold, then case 2 does for $\ell = 2^{\max(0, k_0-5)}$.
\end{proof}

\end{document}